\newcommand{\bmb}{\left( \begin{array}{rr}}
\newcommand{\enm}{\end{array}\right)}
\newcommand{\cT}{\mathcal T}
\newcommand{\cO}{\mathcal O}
\newcommand{\cS}{\mathcal S}
\newcommand{\ci}{{\mathrm i}}
\newcommand{\io}{{\iota}}
\newcommand{\tS}{{\widetilde S}}
\newcommand{\C}{{\mathbb C}}
\newcommand{\Z}{{\mathbb Z}}
\newcommand{\R}{{\mathbb R}}
\newcommand{\epsijk}{{\epsilon(j,k)}}
\newcommand{\fjk}{{(n+1)f(\frac{j+k}{n+1})}}
\newcommand{\rr}{{r(n;j,k)}}
\newcommand{\bone}{{\mathbf 1}}
\newcommand{\al}{{\alpha}}
\newcommand{\ds}{{\mathrm ds}}
\newcommand{\dw}{{\mathrm dw}}
\numberwithin{equation}{section}
\theoremstyle{definition}
\newtheorem{thm}{Theorem}[section]
\newtheorem{prop}[thm]{Proposition}
\newtheorem{defn}[thm]{Definition}
\newtheorem{lemma}[thm]{Lemma}
\newtheorem{cor}[thm]{Corollary}
\newtheorem{remark}[thm]{Remark}
\newtheorem{example}[thm]{Example}
\newtheorem{assumption}[thm]{Assumption}
\begin{document}

\title{Perfect t-embeddings of uniformly weighted generalized tower graphs}
\author{David Keating} 
\address{Department of Mathematics, University of Illinois, Urbana, IL 61821, U.S.A. 
\break  e-mail: dkeating@illinois.edu
}
\author{Hieu Trung Vu}
\address{Department of Mathematics, University of Illinois, Urbana, IL 61821, U.S.A. 
\break  e-mail: hvu@illinois.edu}
\begin{abstract}
In this paper, we study sequences of perfect t-embeddings of a uniformly weighted family of graphs we call generalized tower graphs. We show that the embeddings of these graphs satisfy certain technical assumptions, in particular, the rigidity assumption of Berggren-Nicoletti-Russkikh in \cite{tower_AD_perfect}.  As a result, we confirm the convergence of the gradients of the height function fluctuations of these graphs to those of the Gaussian free field. 
\end{abstract}

\maketitle
\date{\today}
\tableofcontents

\section{Introduction}

Let $G=(V \cup E,\nu)$ be a edge-weighted, planar, bipartite graph with vertex set $V$, edge set $E$, and edge weights  $\nu: E \to \R$. A dimer configuration, or dimer cover, on the graph $G$ is any subset of $E$ such that each vertex belongs to precisely one edge of the subset. We denote the set of all possible dimer configurations on $G$ by $\mathcal M(G)$. A dimer model on $G$ is a probability measure on $\mathcal M(G)$, defined for $m \in \mathcal M(G)$ as 
$$
\mathbb{P}(m)=\frac{\prod_{e \in m} \nu(e)}{\sum_{m^{\prime}\in \mathcal{M}(G)} \prod_{e \in m^{\prime}} \nu(e)}.
$$
The denominator in the above is known as the partition function of the model. We refer to \cite{MR2523460,MR2198850,MR4299268,MR3444135} for surveys and expository materials on dimer models and give a brief summary of the features important to the current work below.

The dimer model was introduced in the physics and chemistry communities to study the absorption of diatomic molecules on the surface of a crystal by Fowler and Rushbrooke \cite{edselc.2-52.0-3704915511519370101} in 1937. Later in the 1960's the model was studied from the mathematical point of view (independently) by Kasteleyn \cite{Kasteleyn}, Percus \cite{MR250899}, and Fisher and Temperley \cite{Temperley1961DimerPI}, giving the first enumerative result.  A special feature of the dimer model for certain families of graphs $(G_n)_{n>0}$ is that they display the \emph{arctic curve phenomenon} \cite{ProppShor, MR2952086} in the limit $n\to \infty$.  For such graphs, there exists phase separation in which regions where local configurations are fixed in one orientation of dimer, called \emph{frozen/solid regions}, are separated from regions where the local configurations are disordered, called \emph{liquid/rough regions}. The boundary separating the different regions approximates a limiting curve known as the \emph{arctic curve}. One of the most well-studied dimer model is the \emph{Aztec diamond} \cite{EKLP} for which, with uniform weights, the arctic curve is known to be a circle \cite{JPS}. See Section \ref{subsect:AD} for a precise description of the Aztec diamond. Figure \ref{fig:introAD} gives examples of domino tilings of the Aztec diamond of various ranks.

\begin{figure}
    \centering
    \resizebox{\textwidth}{!}{
    \begin{tabular}{ccc}
         \includegraphics[]{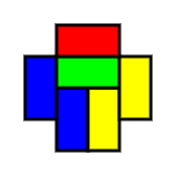}
         & 
         \includegraphics[scale=0.25]{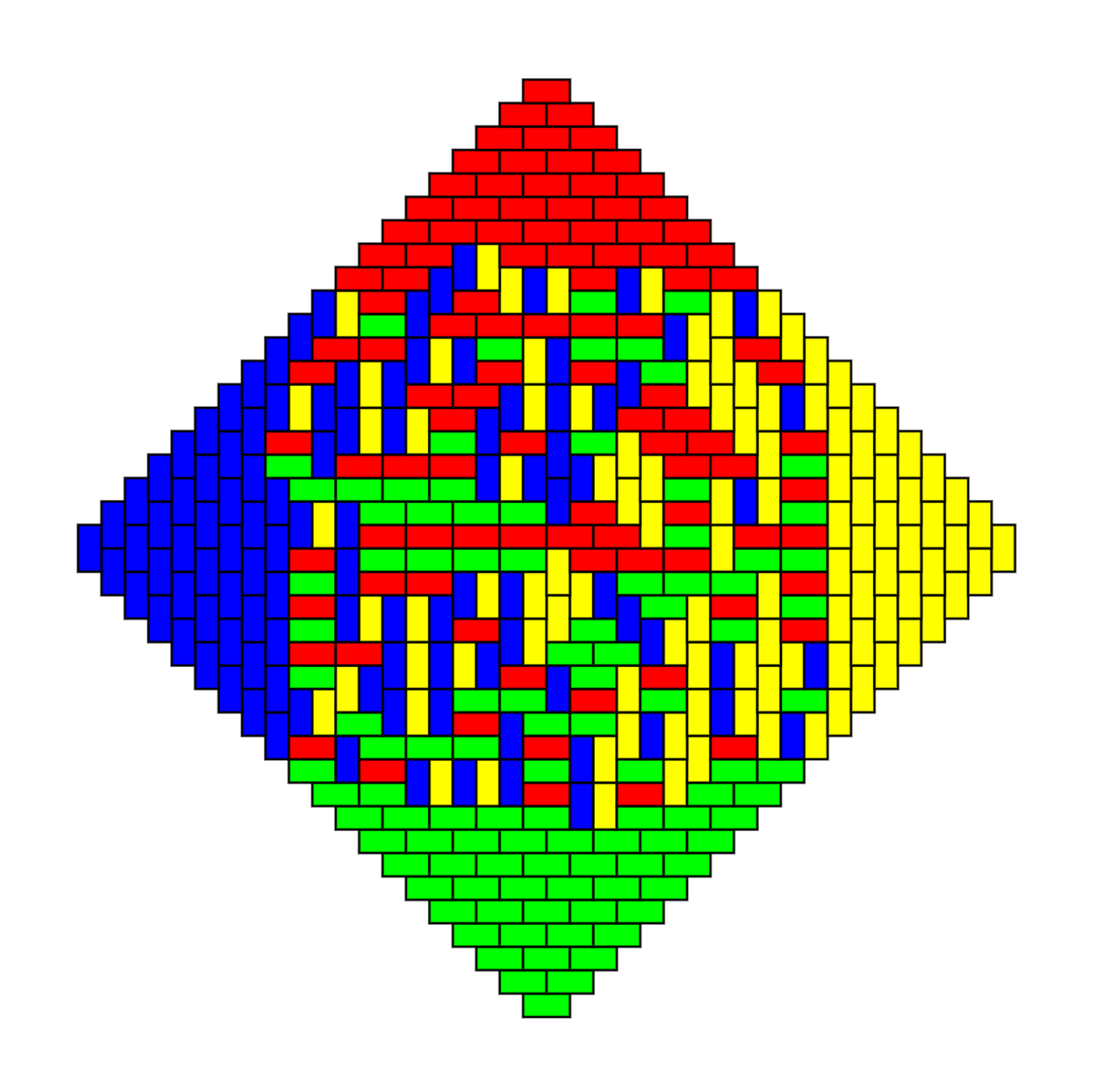}
         & 
         \includegraphics[scale=0.25]{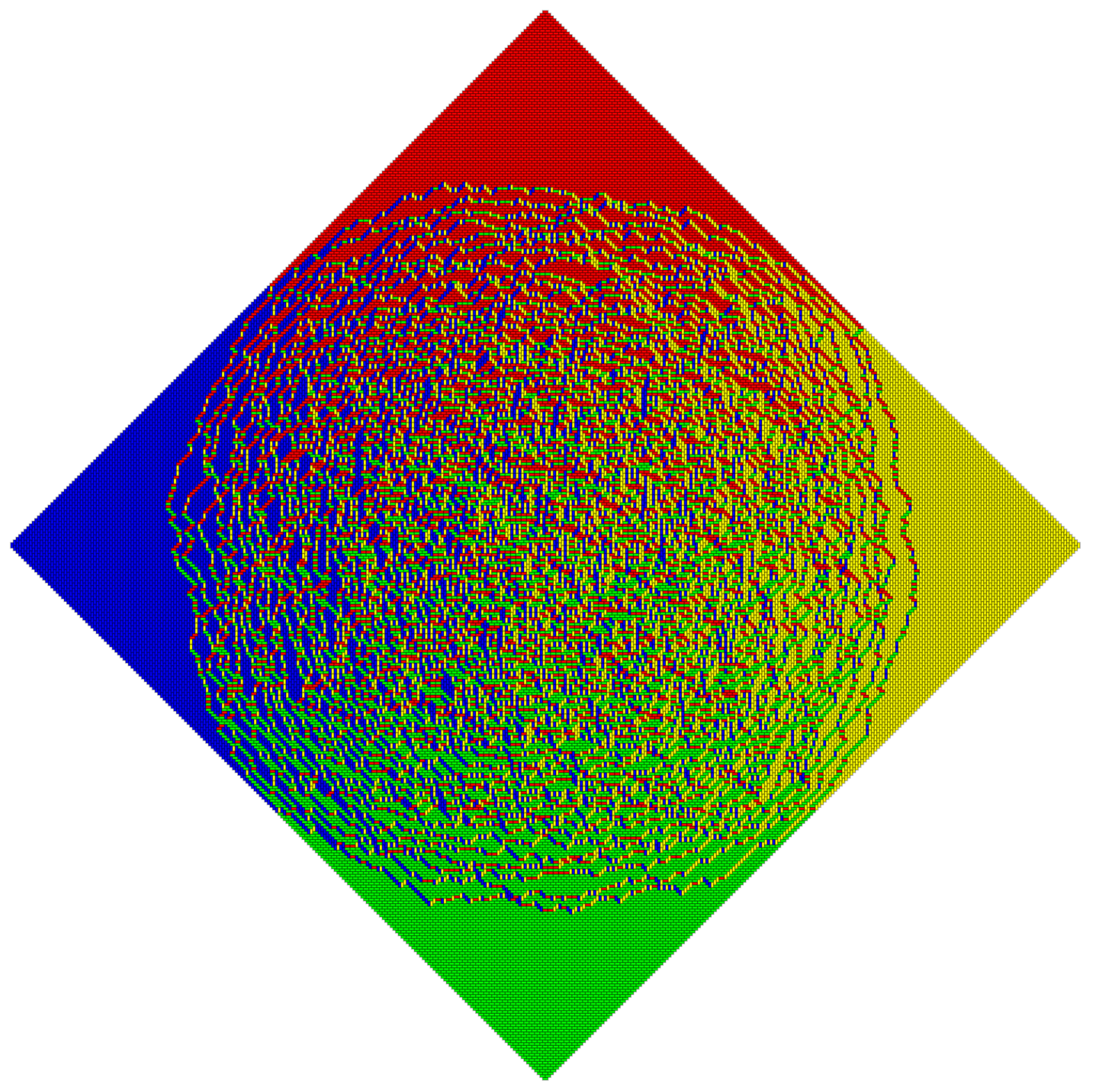}
    \end{tabular}
    }
    \caption{From left to right, tilings of the Aztec diamond of rank 2, 20, and 200, respectively. Each domino covers two faces of the dual graph to the Aztec diamond graph and corresponds to an edge belonging to the dimer cover. The different colors distinguish the four possible dimer orientations. Note that one can clearly see the arctic curve appearing in the right tiling.}
    \label{fig:introAD}
\end{figure}

A remarkable feature of dimer models is the \emph{shuffling algorithm}, the general version of which was introduced by Propp \cite{MR1990768}. The shuffling algorithm consists of certain local modifications to the graph $G$ along with its edge weights for which you can keep track of the change in the dimer model. A precise description of the shuffling algorithm is given in Section \ref{sect: gen_shuffling}. This algorithm allows us to recursively compute the probability that a randomly-chosen matching of an Aztec diamond will include a particular edge as well as sample a random matching. In \cite{circle_pattern}, the shuffling algorithm is shown to be equivalent to certain transformation of the complex planar embedding for dimer models on planar graphs. For more applications of the shuffling algorithm, see \cite{nicoletti2022localstatisticsshufflingdimers, rail-yard, MR3821249, anderson2025localstatisticsmndimermodel} for a non-exhaustive list. 

The dimer model can also be viewed as a stepped surface via \emph{Thurston's height function} \cite{MR4556490}. For a finite graph $G$, let $G^*$ be the planar dual of $G$ with vertices corresponding to the faces of $G$. Given any dimer cover $M$ of a planar, bipartite graph $G$, the height function $h: V(G^*) \to \R$ is a function on the set of vertices in the augmented dual $G^*$, obtained by fixing a reference matching $M_0$ of $G$ and the boundary values $h(v^*)=0$ for outer vertices of $G^*$. For an edge $(bw^*) \in V(G^*)$ oriented such that the white vertex is on the left, the height function $h(v^*)-h(u^*)=-\bone_{bw\in M} + \bone_{bw\in M_0}$. See Figure \ref{fig:height_function} for an example.
\begin{figure}
    \centering
    \begin{subfigure}{.45\textwidth}
    \centering
    \includegraphics[scale=.2]{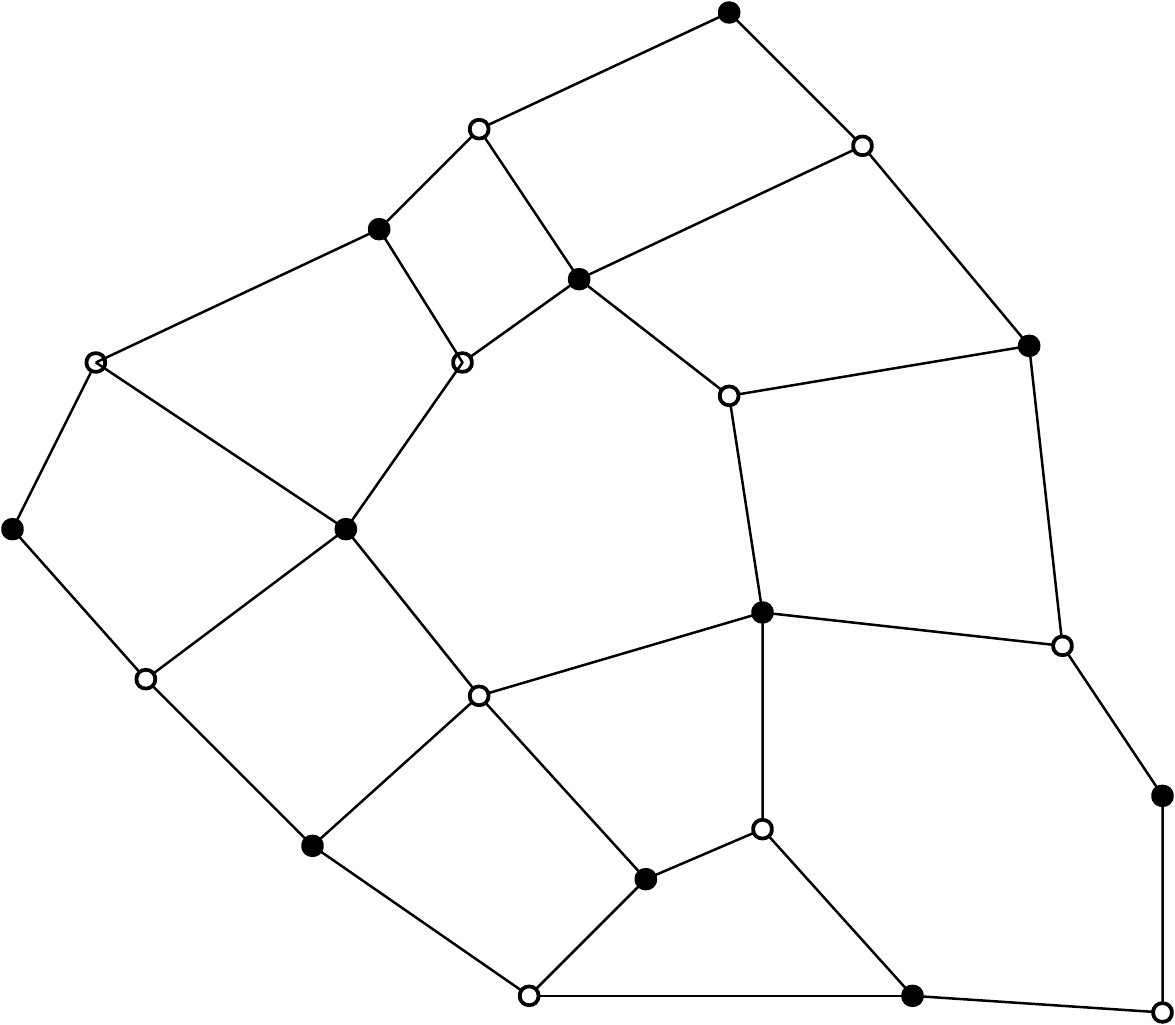}
    \end{subfigure}
    \begin{subfigure}{.45 \textwidth}
    \centering
        \includegraphics[scale=.2]{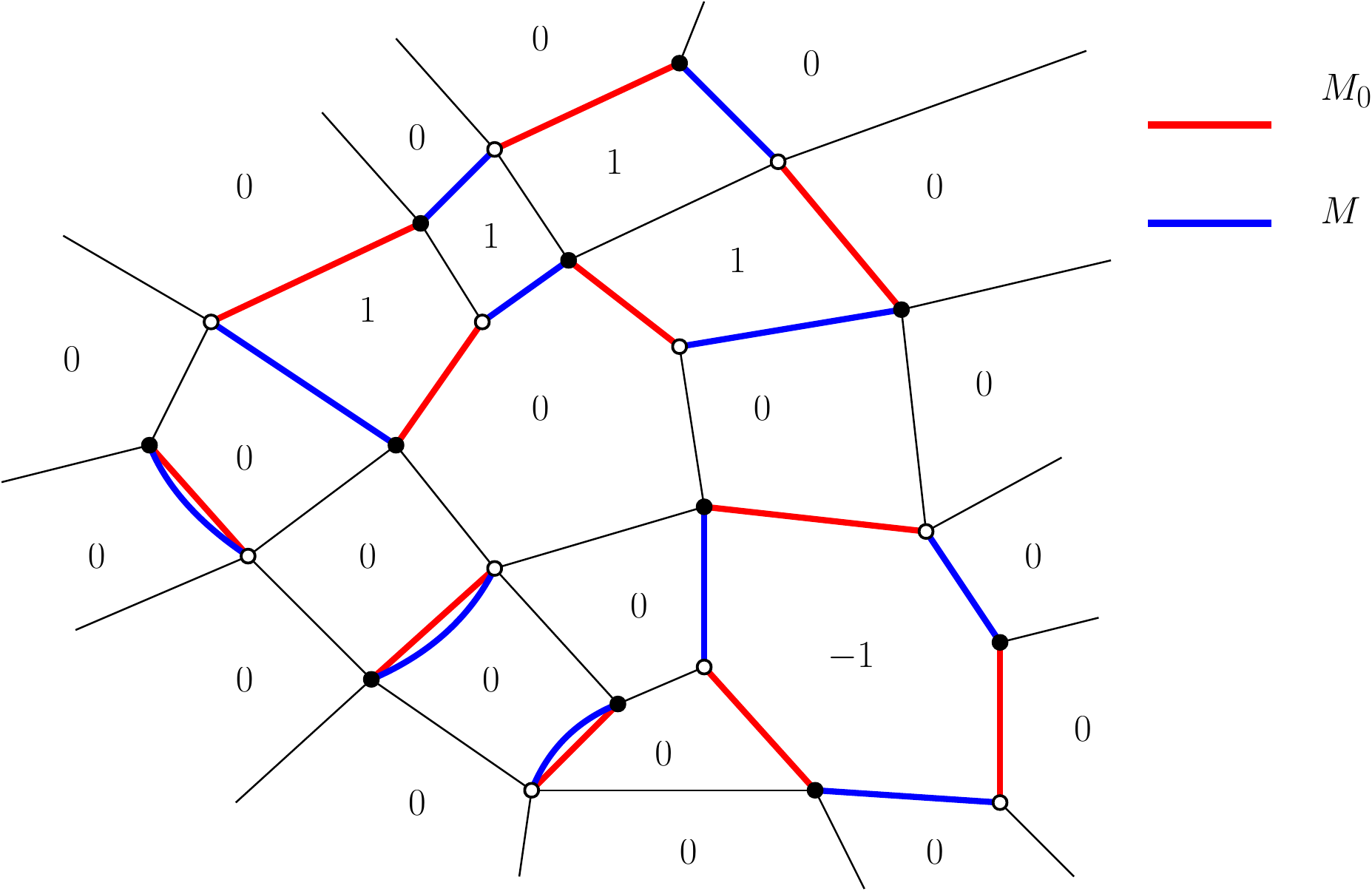}
    \end{subfigure}
    \caption{Example of a height function.}
    \label{fig:height_function}
\end{figure}

The height function plays an important role in describing the limiting structure of a dimer model. Namely, the fluctuations of the height function away from its mean were first shown to be related to the Gaussian Free Field (GFF) in \cite{Ken00, Ken01} for the cases of dimer models on Temperleyan domains which, roughly speaking, correspond to domains for which there are no frozen region. These results were later extended to several other cases by Russkikh \cite{Russ18,Russ21}. More generally, it was conjectured by Kenyon and Okounkov in 2007, that the height fluctuations in the liquid regions converge to the Gaussian free field in a non-trivial complex conformal structure \cite{OK1, OK2}. For background on the Gaussian free field, we refer the reader to \cite{MR4466634,MR2322706,berestycki2025gaussianfreefieldliouville}. This has motivated much subsequent research, and the conjecture has been proven in a handful of cases \cite{MR3821250,MR3148098,MR3020314,MR3278913,MR3298467,MR4448539,MR3825881,MR3861715,MR4653769, MR3821249,huang2020heightfluctuationsrandomlozenge}. More general results on height function fluctuations can also be found in \cite{MR3861715,MR3325273}.

For some 2-dimensional statistical mechanics models, one can consider the underlying objects (graphs, lattices, etc.) embedded in the complex planes. Then, a correct notion of discrete holomorphicity and discretization of the plane can be used to analyze the asymptotic behavior of these models by taking the limit as the mesh size goes to 0. This technique is extremely fruitful, namely, in the study of the critical Ising model on $\Z^2$ and isoradial graphs; see, e.g. \cite{CS11,CS12,HS13,CHI15,Smi06,Smir10,Smir10ICM}. 

Recently, in \cite{circle_pattern,MR4588721}, the authors introduced a new embedding of planar graphs in the complex plane called a \emph{t-embedding} $\mathcal{T}$ as well as a mapping to the plane called the \emph{origami map} $\mathcal{O}$. These embeddings were studied in \cite{MR4588721} and extended to the notion of \emph{perfect t-embeddings} in \cite{chelkak2021bipartitedimermodelperfect}, see Def. \ref{defn: perfect_t_embedding}. It is shown in \cite{chelkak2021bipartitedimermodelperfect,MR4588721} that for a sequence of graph $(G_n)_{n>0}$, if a corresponding sequence $\left(\mathcal{T}_n, \mathcal{O}_n\right)$ of perfect t-embeddings and associated origami maps exists and satisfies some technical assumptions (Assumptions \ref{assumption:LIP} and \ref{assumption:exp-fat}), and if the graphs of $\mathcal{O}_n$ over $\mathcal{T}_n$ converge to a maximal surface $S$ in the Minkowski space $\mathbb{R}^{2,1}$, then the gradients of $k$-point correlation functions converge to those of the standard Gaussian free field in the intrinsic metric of the surface $S$. Note that this is, however, is a weaker statement than the conjecture by Okounkov and Kenyon in \cite{OK1}, see Thm. \ref{thm:Main_thm_CLR21} and Section 7 of \cite{MR4588721}. Proving these technical assumptions is, in general, not a straightforward task. Until now, there are only a handful of cases in which these technical assumptions have been verified \cite{tower_AD_perfect, berggren2024perfecttembeddingslozengetilings,berggren2025perfecttembeddingsdoublyperiodic,berggren2025perfecttembeddingsoctahedronequation}. 

The main focus of this paper is the study of perfect t-embeddings of a large class of graphs we call \emph{generalized tower graphs}. For a function $f: \R \to \R$, satisfying certain constraints given in Section \ref{sect: shufupdown}, we may approximate $f$ by a an \emph{up-down path} on $\Z^2$. Using a version of the shuffling algorithm, we show how to construct a planar, bipartite graph whose boundary is determined by the up-down path approximating $f$. By scaling the integer lattice to have step size $\frac{1}{n+1}$, we can construct a family of graphs $(G_n^f)_{n>0}$ we call the \emph{generalized tower graphs approximating $f$}. Examples of these graphs include the Aztec diamond which is indexed by the function $f(x)=1$ and the tower graph which is indexed by $f(x)=\frac{1}{3}x+\frac{4}{3}$. Other examples are given in Figure \ref{fig:introEx}. These graphs can be seen as an example of \emph{rail-yard graphs} \cite{rail-yard}, for which the asymptotic behavior can be studied via the Schur process introduced by Reshetikhin and Okounkov in \cite{OR_Schur_03} (see Rmk. \ref{rmk:railyard}), although we do not pursue that here.  For linear functions $f$, this family of graphs also appear as a special case of the graphs of \cite{MR4782740} in the context of octahedron recurrence (see Rmk. \ref{rmk:2d}).   
\begin{figure}
    \centering
    \includegraphics[width=0.3\textwidth]{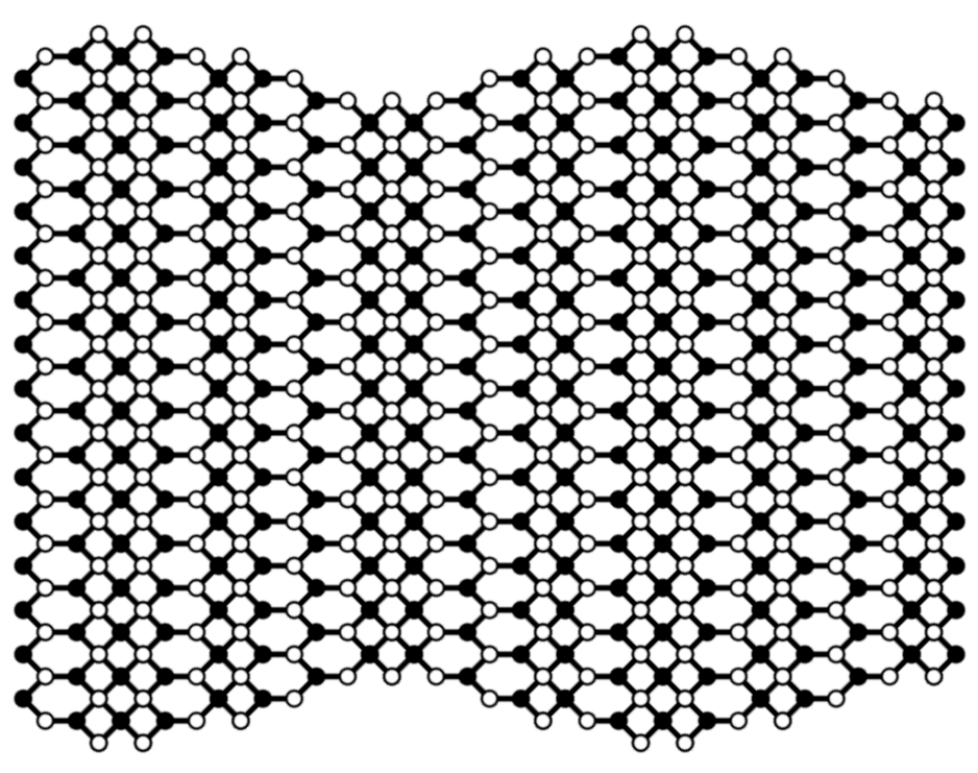}
     \qquad 
    \includegraphics[scale=.2]{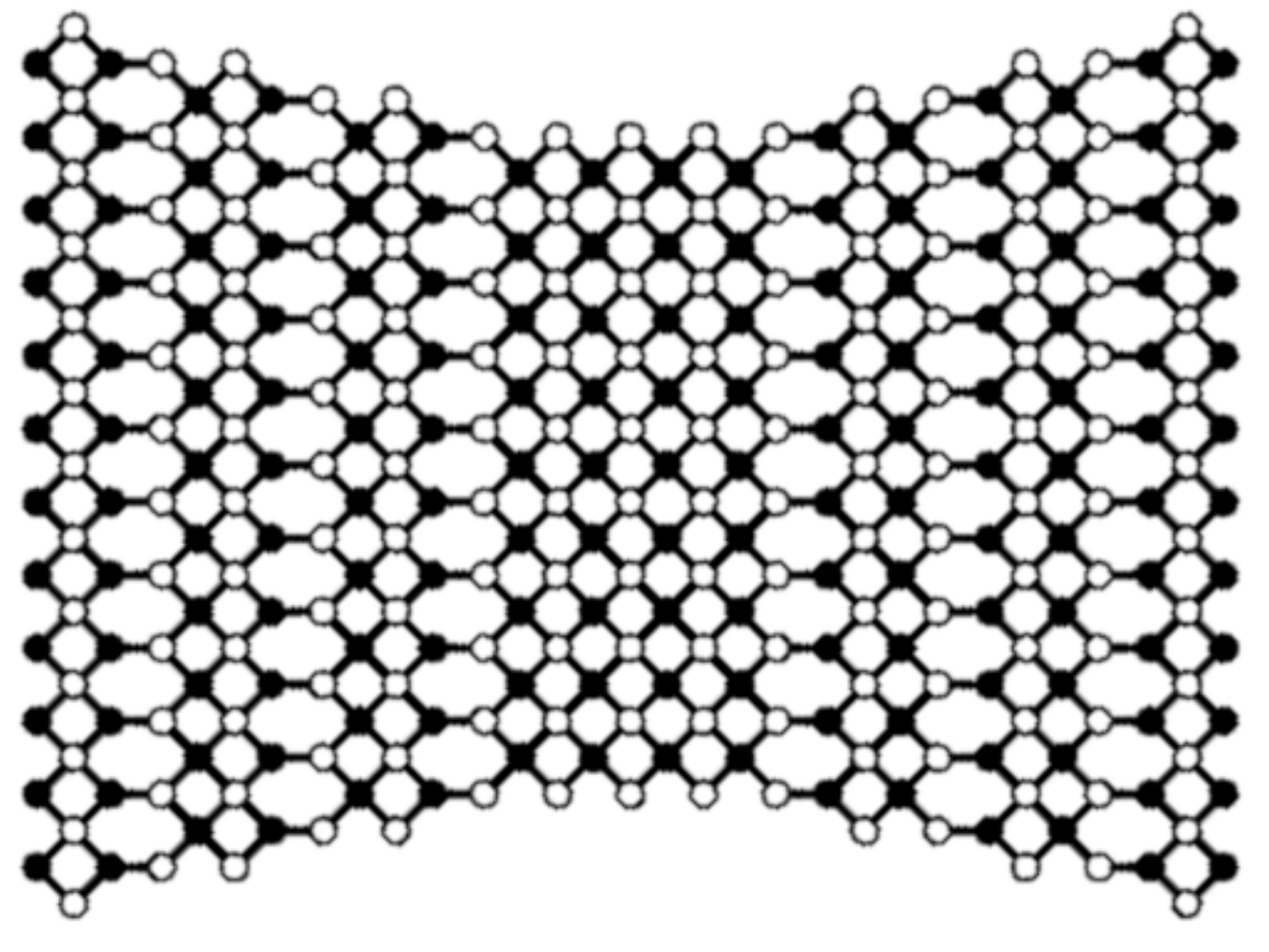}
    \caption{Left: Generalized tower graph of rank 15 approximating the function $f(x)=\frac{1}{3\pi} \sin (2\pi x)+1$. Right: Generalized tower graph of rank 15 approximating $f(x) = \frac{1}{3}x^2+\frac{2}{3}$.}
    \label{fig:introEx}
\end{figure}

The goal of this paper is to verify the assumptions in \cite{chelkak2021bipartitedimermodelperfect,MR4588721} and confirm that the gradients of $k$-point correlation functions, see Thm. \ref{thm:Main_thm_CLR21} for the precise formula, converge to those of the standard GFF.  As a byproduct of this analysis, we also compute the arctic curves for the generalized tower graphs.

The organization of this paper is the following. In Section \ref{sect: background}, we provide the necessary background on dimer models and t-embeddings, and present the main result of \cite{chelkak2021bipartitedimermodelperfect,MR4588721} that leads to the convergence of the gradients of $k$-point correlation of height function fluctuations as in Thm. \ref{thm:Main_thm_CLR21}. In Section \ref{sect: gen_shuffling}, we introduce the generalized tower graphs and describe their construction via a shuffling algorithm. Section \ref{sect: t_embedding} describes the relationship between the t-embeddings and shuffling. In particular, we relate the t-embedding of the generalized tower graphs to that of the Aztec diamond (Lem. \ref{lem: gen_tower_AD_coord}).  This allows us to use known edge probability integral formulas for the Aztec diamond to study the generalized tower graphs. Our main result is that the perfect t-embeddings of the generalized tower graphs approximating a function $f$ satisfy all the technical assumptions in Thm. \ref{thm:Main_thm_CLR21}. In Section \ref{sect: rigidity}, we prove the technical \emph{rigidity assumptions} \ref{assumption: Rigidity}) for the perfect t-embedding for the \emph{generalized tower graph}. 

\textbf{Acknowledgments.} The authors would like to thank Tomas Berggren, Matthew Nicoletti, Leonid Petrov, and Marianna Russkikh for insightful discussions. We are also grateful for the hospitality during the long program \emph{Geometry, Statistical Mechanics, and Integrability} at Institute for Pure and Applied Mathematics, supported by the
National Science Foundation (Grant No. DMS-1925919), where this work was started.  HTV is supported by the David G. Bourgin Departmental Mathematics Fellowship. DK is supported by the NSF RTG grant DMS 1937241.

\section{Background on Dimer Models and t-Embeddings}
\label{sect: background}
In this section, we follow the presentation of the background material in \cite{tower_AD_perfect} and \cite{chelkak2021fluctuations}. All of the material in this section is stated precisely with the same notations and conventions as in \cite{tower_AD_perfect}. Later on, we will clarify and provide some new notation for the family of graphs we are considering. We also suggested to the readers \cite{chelkak2021bipartitedimermodelperfect,MR4588721} for the motivation and many technical details behind t-embedding and the detail of the proof of Thm. \ref{thm:Main_thm_CLR21} that motivates this project. 

\subsection{t-embeddings and perfect t-embeddings}
Consider the graph $G$ and add a vertex $v_{\text {out }}$ to the graph by connecting it with all vertices of the outer face of ${G}$. Define \emph{augmented dual} ${G}^*$ of $G$ as the dual graph of ${G} \cup v_{\text {out }}$.
For the rest of this paper, when we mention $G^*$, we will always mean the augmented dual if we do not specify otherwise.
By an embedding of $G$ on the complex plane we mean an embedding of the augmented dual graph ${G}^*$ on the complex plane, that is, an assignment to each vertex of $G^*$ a complex number. Since we only consider bipartite graphs $G$, we will always consider the faces of the embedding to be colored black or white depending on whether they correspond to black or white vertex of $G$. A \emph{t-embedding} of $G$ is an embedding on the complex plane satisfying certain properties we now state. 

\begin{figure}
    \centering
    \includegraphics[width=\linewidth]{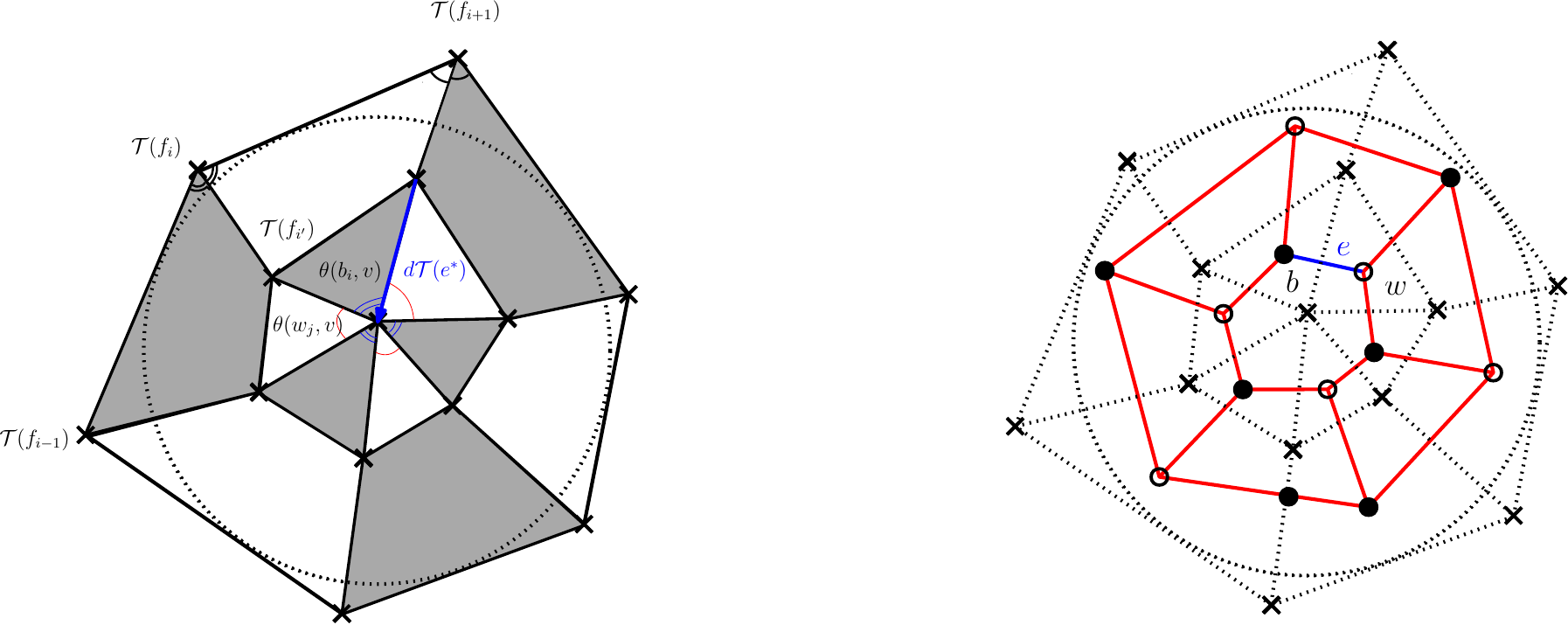}
    \caption{An example of a perfect t-embedding.}
    \label{fig:perfect_t_embedding_defn_example}
\end{figure}

\begin{defn}[\emph{t-embedding}] 
\label{def:t_embedding}
     Given weighted planar bipartite graph $({G}, \nu)$, a t-embedding $\mathcal{T}\left({G}^*\right)$ is a proper embedding of an augmented dual graph $G^*$ such that the following conditions are satisfied:
     \begin{enumerate}
         \item The sum of the angles at each inner vertex of $\mathcal{T}\left({G}^*\right)$ at the corners corresponding to black faces is equal to $\pi$ (and similarly for white faces),
         \[
\sum_{i}\theta(b_i,v)=\sum_j\theta(w_j,v)=\pi
         \]
         \item The geometric weights (dual edge lengths) $\left|\mathcal{T}\left(\nu_1^*\right)-\mathcal{T}\left(\nu_2^*\right)\right|$ are gauge equivalent to $\nu_e$, where $v_{1,2}^*$ are vertices of the dual edge $e^*$.
     \end{enumerate}
\end{defn}
\noindent From here on out, we will denote $d \mathcal{T}\left(bw^*\right):=\mathcal{T}\left(v^{\prime}\right)-\mathcal{T}(v)$, where $vv'$ is an oriented edge on the complex plane such that $w$ is on the left. 

The existence of t-embeddings is guaranteed for some families of graphs.
\begin{thm}[\cite{circle_pattern}]
    \label{circle_pattern_t_embedding_existence}
    t-embeddings of the (augmented) dual graph $G^*$ exist at least in the following case: 
    \begin{enumerate}
        \item $(G, \nu)$ is a non-degenerate bipartite finite weighted graph admitting a dimer cover and with outer face of degree $4$.  
        \item  $(G, \nu)$ is a infinite doubly periodic weighted bipartite graph, equipped with an equivalence class of doubly periodic edge weights.
    \end{enumerate}
\end{thm}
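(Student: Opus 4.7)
The plan is to reduce the existence of a t-embedding to the existence of a \emph{Coulomb gauge}: unit-modulus complex functions $\alpha$ on the black vertices $B$ and $\beta$ on the white vertices $W$ such that, for each black vertex $b$, the weighted sum $\sum_{w \sim b} \nu(bw)\,\alpha(b)\beta(w)$ vanishes, and similarly at each white vertex. Given such a pair $(\alpha,\beta)$, one defines a complex $1$-form on the edges of $G$ by $d\mathcal{T}(bw^*) := \ci\,\nu(bw)\,\alpha(b)\beta(w)$; the vanishing condition at each vertex of $G$ is exactly the statement that this $1$-form is closed on $G^*$, so its path integral on the dual produces a well-defined vertex map $\mathcal{T}$ whose edge lengths equal $|\nu(bw)|$, giving condition (2) of Definition~\ref{def:t_embedding}. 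The angle-sum condition (1) at each inner vertex $v$ follows from $|\alpha|=|\beta|=1$ together with the fact that, around $v$, the corners alternate between black and white faces, forcing the black and white corner angles to each sum to $\pi$.

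For part (1), the existence of a Coulomb gauge reduces to solving a finite linear system $K\beta = 0$, where $K$ is the Kasteleyn matrix of $(G,\nu)$, with Dirichlet-type boundary data adapted to the four corners of the outer face. Non-degeneracy of $(G,\nu)$ and admissibility of a dimer cover ensure that the relevant bulk minor of $K$ is invertible; the four-sided outer face provides exactly the right number of degrees of freedom to prescribe unit-modulus boundary phases (one constant per side) and solve uniquely up to gauge. The analogous system yields $\alpha$. For part (2), I would instead pass to the magnetically twisted Kasteleyn operator $K(z,w)$ on a fundamental domain and choose a point $(z_0,w_0)$ in the interior of the amoeba of the spectral curve $\det K(z,w)=0$; the null vectors of $K(z_0,w_0)$ and of its adjoint lift to quasi-periodic phase functions on the universal cover, and the associated $1$-form integrates to a t-embedding with two real translation periods. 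The gauge equivalence class of periodic weights is precisely what matches the ambiguity in the choice of $(z_0,w_0)$ on a connected component of the amoeba complement.

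The genuinely difficult step in both cases is not constructing the gauge---which is essentially linear algebra---but verifying that $\mathcal{T}$ is a \emph{proper} embedding (no overlapping faces), since the angle-sum conditions guarantee only local flatness. For part (1) I would run a Tutte-type argument: the degree-$4$ outer face plays the role of a convex quadrilateral boundary, and one proves by induction on the number of inner faces, using a local move (for instance spider moves or vertex contractions that preserve the t-embedding property), that every intermediate configuration remains properly embedded. For part (2), properness on the universal cover amounts to local injectivity on the fundamental domain, which I would extract from strict convexity of the Ronkin function at the chosen amoeba point, together with the Harnack property of bipartite dimer spectral curves. I expect this global injectivity step to be the main obstacle, since it is where combinatorial topology (winding numbers, degree counts along the boundary) has to be combined with positivity/convexity information coming from the spectral theory of $K$.
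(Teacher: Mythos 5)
This theorem is quoted by the paper from \cite{circle_pattern} and is not proved there, so there is no internal proof to compare against; your proposal has to stand on its own, and as written it does not. First, your formulation of the Coulomb gauge is overconstrained: you demand both $|\alpha|\equiv|\beta|\equiv 1$ and $\sum_{w\sim b}\nu(bw)\,\alpha(b)\beta(w)=0$ at every black vertex (and the mirror condition at white vertices). Since $\alpha(b)$ factors out, this asks for a kernel vector of the weighted (Kasteleyn-type) matrix all of whose entries have modulus one, which is generically impossible. The setup that actually works, and the one matching Definition~\ref{def:t_embedding}, takes \emph{arbitrary} complex kernel functions $F$ on $B$ and $G$ on $W$ and sets $d\mathcal{T}\left((bw)^*\right)=F(b)K(b,w)G(w)$; then the edge lengths $|d\mathcal{T}|$ are only \emph{gauge equivalent} to $\nu$ (with gauge $|F|,|G|$), and the unit-modulus data appear a posteriori as the origami square root $\eta$. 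Insisting that the lengths equal $\nu$ exactly is both unnecessary (condition (2) only requires gauge equivalence) and unachievable in general, and your angle-sum argument also needs the telescoping-of-arguments computation rather than just alternation of colors around a vertex.

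Second, and more seriously, the theorem asserts the existence of a t-embedding, i.e.\ a \emph{proper} embedding, and you explicitly defer the properness/global injectivity step, calling it ``the main obstacle'' and offering only an unexecuted Tutte-type induction for case (1) and an appeal to convexity of the Ronkin function for case (2). That step is the substance of the result: constructing a closed $1$-form from kernel vectors is soft linear algebra, while showing the primitive is injective with non-overlapping faces is where \cite{circle_pattern} does real work (in the finite case the gauge functions are built explicitly from the inverse Kasteleyn matrix using the four outer vertices, and properness is then established by a separate argument). Likewise, your claims that non-degeneracy makes ``the relevant bulk minor of $K$'' invertible, that the degree-$4$ outer face supplies ``exactly the right number of degrees of freedom,'' and that null vectors of the magnetically twisted operator at an interior amoeba point yield a proper doubly periodic embedding are precisely the assertions that need proof, not routine consequences. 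So the framework is the right one, but the proposal stops short of a proof at its crux.
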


In this paper, we consider a more specific type of t-embedding which is more susceptible to boundary conditions.
\begin{defn}[\emph{Perfect t-embedding}]
\label{defn: perfect_t_embedding}
	A t-embedding is perfect if the outer face of $\cT(G^*)$ is a tangential polygon to a circle and all the non-boundary edges adjacent to boundary vertices lie on the bisectors of the corresponding angle. See Figure \ref{fig:perfect_t_embedding_defn_example}.  
\end{defn}

\subsection{Origami Maps}
For each oriented edge, let $bw^*$ be the corresponding edge of $G^*$ oriented so that $w$ is to the left.  To each t-embedding $\cT(G^*)$ one can associate the \emph{origami map} $\cO: G^*\to \C$. 

\begin{defn}
    \label{defn:origami_sqrt}
     A function $\eta: V(\mathcal{G})=W \cup B \rightarrow \mathbb{T}$ where $\mathbb T$ denotes the unit circle, is said to be an origami square root function if it satisfies the identity
$$
\bar{\eta}_b \bar{\eta}_w=\frac{d \mathcal{T}\left(bw^*\right)}{\left|d \mathcal{T}\left(bw^*\right)\right|}
$$
for all pairs $(w,b)$ of white and black neighboring faces of $\mathcal{G}^*$. 
\end{defn}

\begin{defn}
    \label{def:origami_map}
    The origami map $\cO$ is the primitive of the piecewise differential form $d\cO$ defined by 
    \begin{equation}\label{eq:origami_diff_form}
        d\cO(z)=\begin{cases}
            \eta_w^2dz & \textrm{if}\ z \in \cT(w)\\
            \bar{\eta_b}^2d\bar z & \textrm{if} \ z \in \cT(b)
        \end{cases}.
    \end{equation}
\end{defn}
Let us describe the geometric interpretation for these definitions. One can think of the map $z \mapsto \mathcal O(z)$ as folding along the edges of $G^*$ passing through a face path connecting the root face $b_0$ to some other face corresponding to some vertex $v \in V(G)$. This procedure is locally consistent due to the angle condition in the Def. \ref{def:t_embedding}. That is, the order of folding preserves the image of $\mathcal O$, see Figure \ref{fig:origami_map}. The primitive of $d\cO(z)$ is defined up to a global additive constant and thus one can set $\eta_{w_0}=1$ and $\cO(w_0)=\cT(w_0)$. From this folding procedure, we can see that if $\cT(G^*)$ is a perfect t-embedding, then the image of the outer faces of $G^*$ obtained from the origami map by fixing one outer face as root face coincides.  

\begin{figure}
    \centering
    \includegraphics[width=0.8\linewidth]{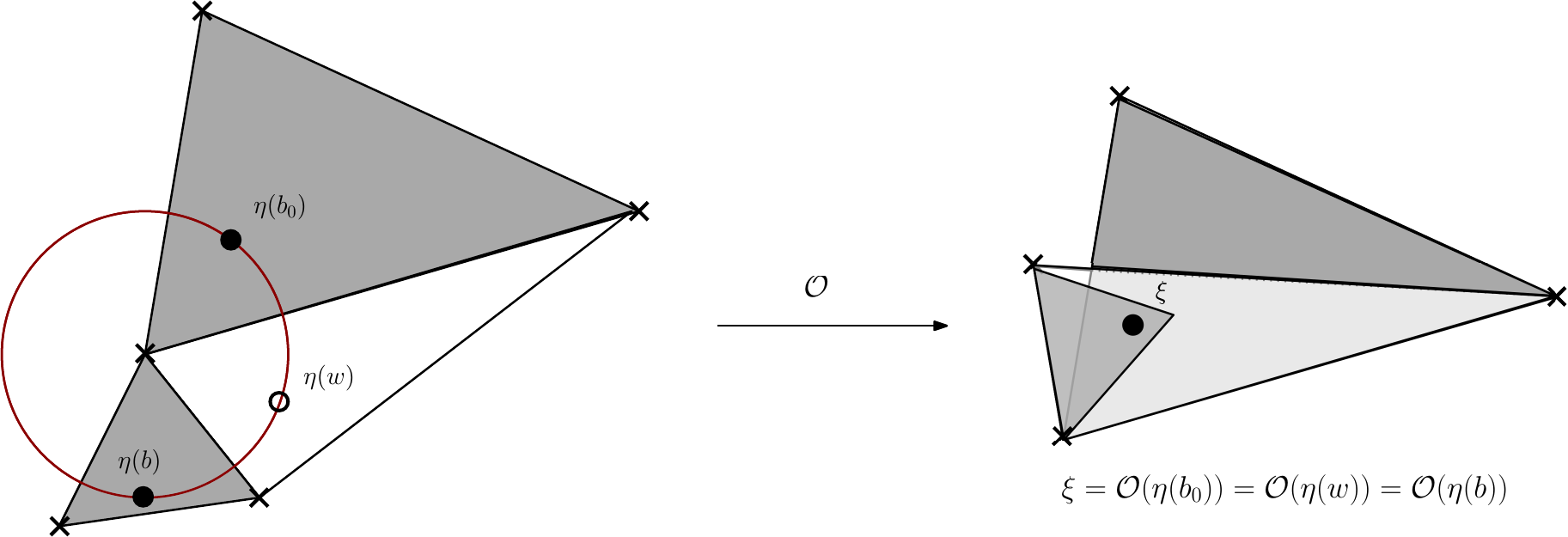}
    \caption{The image of a point under the origami map by folding the faces along adjacent edges.}
    \label{fig:origami_map}
\end{figure}

\subsection{Technical assumptions for perfect t-embeddings}

After one shows that the structure of perfect t-embedding and origami map exists, there are several technical assumptions from \cite{chelkak2021bipartitedimermodelperfect,MR4588721,chelkak2021fluctuations} that one must verify in order to conclude the convergence of the gradients of $k$-point correlation functions to that of the GFF. We first consider the pseudo-Riemannian manifold $\R^{2,1}$ as $\mathbb{R}^3$ equipped with the metric $d x^2+d y^2-d \vartheta^2$. We denote by $\Omega_{\mathcal{T}}$ the region covered by the union of faces of $\mathcal{T}(G^*)$. We are now ready to formulate the main theorem of \cite{MR4588721}.

\begin{thm}[\cite{MR4588721}]
    \label{thm:Main_thm_CLR21}
    Suppose there exists a sequence of perfect t-embeddings $\mathcal{T}_n$ of weighted planar bipartite graphs ${G}_n$ satisfying the following properties:
\begin{enumerate}
    \item \label{hyp: domain} The regions $\Omega_{\mathcal{T}_n}$ approximate (in the Hausdorff sense) a simply connected domain $\Omega \subset \mathbb{C}$ as $n \rightarrow \infty$.
    \item \label{hyp:convergence}We have the convergence of origami maps $\mathcal{O}_n(z) \rightarrow \vartheta(z)$, uniformly on compact subsets, for some function $\vartheta: \Omega \rightarrow \mathbb{R}$. Furthermore, the limiting graph $S_{\Omega}=\{(z, \vartheta(z))\}_{z \in \Omega}$ of $\mathcal{O}_n$ over $\mathcal{T}_n$ is a space-like surface with zero mean curvature in $\mathbb{R}^{2,1}$.
    \item \label{hyp:technical} There is a sequence of scales $\delta=\delta_n \rightarrow 0$, such that the sequence $\mathcal{T}_n$ satisfies Assumptions \ref{assumption:LIP} and \ref{assumption:exp-fat} on compact subsets.
\end{enumerate}
Let $\zeta_{m, i} \in \mathbb{D}, m=1, \ldots, k, i=1,2$ be points of the disk, and let $v_{m, i}^{(n)}$ be vertices of $\mathcal{T}_n$ which approximate the points $z\left(\zeta_{m, i}\right)$, where $(z(\zeta), \vartheta(\zeta))$ is a conformal parameterization of $S_{\Omega}$. Suppose that for $i=1,2$, the points $v_{1, i}^{(n)}, \ldots, v_{k, i}^{(n)}$ stay uniformly away from each other and from the boundary $\partial \Omega$. Then,

$$
\mathbb{E}\left[\prod_{m=1}^k\left(\bar{h}_n\left(v_{m, 2}^{(n)}\right)-\bar{h}_n\left(v_{m, 1}^{(n)}\right)\right)\right] \rightarrow \sum_{r_1, \ldots, r_k \in\{1,2\}}(-1)^{r_1+\cdots+r_k} G_k\left(\zeta_{1, r_1}, \ldots, \zeta_{k, r_k}\right)
$$
where $\bar{h}_n(v)=h_n(v)-\mathbb{E}\left[h_n(v)\right]$ denotes the mean subtracted height function on $\mathcal{T}_n$, and $G_k$ is the $k$-point correlation function of the Gaussian free field on $\mathbb{D}$.

\end{thm}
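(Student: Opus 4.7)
The plan is to follow the determinantal/discrete-holomorphic strategy developed for perfect t-embeddings. First I would express the $k$-point height-difference correlation on the left-hand side in terms of the inverse Kasteleyn matrix $K_n^{-1}$ associated to $\cT_n$. Each difference $\bar h_n(v_{m,2}^{(n)})-\bar h_n(v_{m,1}^{(n)})$ is a mean-subtracted signed sum over edge indicators $\bone_{e\in M}$ for edges $e$ crossed by a fixed dual path from $v_{m,1}^{(n)}$ to $v_{m,2}^{(n)}$, so the standard Kasteleyn expansion rewrites the $k$-fold expectation as a sum of products of minors of $K_n^{-1}$. The t-embedding geometry produces a Kasteleyn matrix whose nonzero entries are the complex edge vectors $d\cT_n(bw^*)$, and in this normalization $K_n^{-1}$ behaves like a discrete Cauchy-type kernel on $\cT_n$.

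Second, I would show that a suitable normalization of $K_n^{-1}$ converges, locally uniformly in the bulk, to a continuous kernel on $\Omega$ that is holomorphic in one variable with respect to the conformal structure induced by the origami map. This is where the technical hypotheses are indispensable: the LIP assumption gives uniform Lipschitz control on $\cT_n$, while the exp-fat assumption prevents faces from degenerating too quickly, so that the discrete Cauchy-Riemann equations satisfied by $K_n^{-1}$ admit a priori regularity estimates in the spirit of discrete complex analysis. These estimates yield equicontinuity of discrete holomorphic functions on $\cT_n$ and let one extract subsequential limits, with uniqueness coming from the boundary data enforced by the perfect t-embedding condition of Def. \ref{defn: perfect_t_embedding}.

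Third, I would use hypothesis (\ref{hyp:convergence}) to identify the limit analytically. Because $S_\Omega$ is space-like with zero mean curvature in $\R^{2,1}$, it carries a natural conformal parameterization by $\mathbb{D}$; under this parameterization the continuous kernel produced in the previous step coincides with the Cauchy kernel on the disk. A contour-integral/residue computation then recasts each centered height difference as a Gaussian linear functional of the limiting free field, and Wick's theorem collapses the determinantal $k$-point expansion into the alternating sum $\sum_{r_1,\ldots,r_k\in\{1,2\}}(-1)^{r_1+\cdots+r_k}G_k(\zeta_{1,r_1},\ldots,\zeta_{k,r_k})$ appearing on the right-hand side.

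The hard part is the regularity step. Sharp estimates for discrete holomorphic functions on irregular t-embeddings are not off-the-shelf, and it is precisely to make these estimates work that the non-trivial assumptions \ref{assumption:LIP} and \ref{assumption:exp-fat} are imposed. Verifying those assumptions together with the convergence of $\cO_n$ to a maximal surface is in practice the only input one needs to supply in each application of the theorem; the passage from these inputs to the GFF limit is carried out in \cite{MR4588721}, and the role of the present paper will be to supply them in the case of the generalized tower graphs.
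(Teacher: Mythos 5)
This statement is not proved in the paper at all: it is quoted verbatim from \cite{MR4588721} and used as a black box, and the only proof obligation the present paper assumes is the verification of hypotheses \ref{hyp: domain}--\ref{hyp:technical} for the generalized tower graphs (Theorem \ref{thm: main_thm_generalized_tower}, with the rigidity analysis in Section \ref{sect: rigidity}). Your sketch is therefore not comparable to any argument in the paper itself; it is, however, a fair high-level summary of the strategy of the cited reference --- Kasteleyn expansion of the centered height moments, convergence of the normalized inverse Kasteleyn kernel via t-holomorphic function theory (where the LIP and EXP-FAT assumptions supply the compactness and non-degeneracy needed for the discrete regularity estimates), and identification of the limit through the conformal parameterization of the maximal surface $S_\Omega$ --- and your closing paragraph correctly locates the division of labor: the hard analytic passage from the hypotheses to the GFF limit lives entirely in \cite{MR4588721}, while papers such as this one only supply the inputs. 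The one caveat is that your middle steps (e.g., treating the entries of the Kasteleyn matrix as the edge vectors $d\mathcal{T}_n$ and invoking Wick's theorem at the end) are schematic rather than proofs, so your proposal should be read as an outline of the external argument rather than a self-contained derivation; within the scope of this paper that is exactly how the theorem is meant to be used.
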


The two assumptions, Assumptions \ref{assumption:LIP} and \ref{assumption:exp-fat}, in the theorem are called assumptions $\operatorname{LIP}$ and $\operatorname{EXP-FAT}$, respectively, in \cite{MR4588721}. Verifying these assumptions will be the main goal of this paper.   

\begin{defn}
    \label{defn:Lipschitz_defn}
    Given constants $\delta>0$ and $\kappa \in(0,1)$, we say that a t-embedding $\mathcal{T}$ satisfies the assumption $\operatorname{LIP}(\kappa, \delta)$ on a compact subset $K \subset \Omega$ if
$$
\left|\mathcal{O}\left(z^{\prime}\right)-\mathcal{O}(z)\right| \leq \kappa\left|z^{\prime}-z\right|
$$
for all $z, z^{\prime} \in K$ such that $\left|z^{\prime}-z\right| \geq \delta$.
\end{defn}

\begin{assumption}[LIP assumption, \cite{MR4588721}]
\label{assumption:LIP}
    Given a sequence $\delta=\delta_n \rightarrow 0$, we say that a sequence of t-embeddings $\mathcal{T}_n$ satisfies the Lipschitz assumption on compact subsets if for each compact $\mathcal{K} \subset \Omega$ there exists $\kappa \in(0,1)$ such that for $n$ large enough $\mathcal{T}_n$ satisfies $\operatorname{LIP}(\kappa, \delta)$ on $\mathcal{K}$.
\end{assumption}
Intuitively speaking, Def. \ref{defn:Lipschitz_defn} implies that the function does not increase in length with respect to the Euclidean norm at the discrete level. Thus, this assumption implies that the origami map is a 1-Lipschitz function with constant strictly less than 1. 

The second assumption is a non-degeneracy condition on the faces of the sequence of t-embeddings, which roughly asserts that for almost every face, the radius of the largest circle which can be inscribed in the face cannot decay exponentially fast as $n \rightarrow \infty$. We say that a face of $\mathcal{T}$ is $\rho$-fat if there exists a disk of radius $\rho$ contained in the face. To formulate the second assumption, we also need to recall the definition of a splitting of a t-embedding introduced in \cite{MR4588721}. A \emph{splitting} $\mathcal{T}_{\text {spl }}^{\circ}$ (resp. $\mathcal{T}_{\text {spl }}^{\bullet}$ ) of $\mathcal{T}$ is obtained from $\mathcal{T}$ by adding diagonals to each white (resp. black) face of degree larger than 3 such that all of these faces are split into triangles.

\begin{assumption}[EXP-FAT assumption, \cite{MR4588721}]
    \label{assumption:exp-fat}
    Given a sequence $\delta=\delta_n \rightarrow 0$, we say that a sequence of t-embeddings $\mathcal{T}_n$ satisfies the assumption EXP-FAT( $\delta$ ) on compact subsets if for each compact $\mathcal{K} \subset \Omega$, there exists some sequence $\delta^{\prime}=\delta_n^{\prime} \rightarrow 0$ such that:
    \begin{itemize}
        \item There exist splittings $\left(\mathcal{T}_n\right)_{\text {spl }}^{\bullet}$ such that after removing all $\exp \left(-\delta^{\prime} \delta^{-1}\right)$-fat white faces and black triangles, the size of any remaining vertex connected component in $\mathcal{K}$ converges to 0 as $n \rightarrow \infty$.
        \item There exists splittings $\left(\mathcal{T}_n\right)_{\mathrm{spl}}^{\circ}$ satisfying the same condition.
    \end{itemize}
\end{assumption}

In \cite{tower_AD_perfect}, the authors provide a stronger but more straightforward condition, called the \emph{rigidity assumption}, that would imply the validity of the two assumptions $\operatorname{LIP}$ and $\operatorname{EXP-FAT}$ with the scaling $\delta=\delta_n=\frac{\log n}{n}$.  

\begin{assumption}[Rigidity Assumption, \cite{tower_AD_perfect}]
    \label{assumption: Rigidity}
   Given a compact set $\mathcal{K} \subset \Omega$, there exist positive constants $N_{\mathcal{K}}, C_{\mathcal{K}}$ and $\varepsilon_{\mathcal{K}}$ which only depend on $\mathcal{K}$, such that for all pairs of adjacent vertices $v, v^{\prime}$ of the dual graph $\mathcal{G}_n^*$ such that both $\mathcal{T}_n(v)$ and $\mathcal{T}_n\left(v^{\prime}\right)$ are contained in $\mathcal{K}$ we have
$$
\frac{\mu_n}{C_{\mathcal{K}}} \leq\left|\mathcal{T}_n\left(v^{\prime}\right)-\mathcal{T}_n(v)\right| \leq \mu_n C_{\mathcal{K}}
$$
for all $n>N_{\mathcal{K}}$. In addition, the angles of the faces of the perfect t-embedding inside $\mathcal{K}$ are contained in $\left(\varepsilon_{\mathcal{K}}, \pi-\varepsilon_{\mathcal{K}}\right)$ for all $n>N_{\mathcal{K}}$.
\end{assumption}

We refer to Section 6 of \cite{tower_AD_perfect} for the detail discussion of the \emph{rigidity assumption}. Let us now give some background and motivation on these technical assumptions LIP and EXP-FAT, which are given in a much more detail in the introduction of \cite{MR4588721}. In \cite{Ken00, Ken01}, Kenyon used the fact that the entries of the inverse Kasteleyn matrix (also known as the coupling function in \cite{Ken00}) of the dimer model on the square grid satisfy a discrete version of the Cauchy-Riemann equation to prove the convergence of the height fluctuations to the Gaussian Free Field for the dimer models on Temperleyan domains (domains, intuitively speaking, do not possess frozen regions in the scaling limit). However, for other cases, the scaling limits possess a more complicated complex structure, first witnessed in \cite{CKP} by Cohn-Kenyon-Propp and later described in detail by Kenyon-Okounkov in \cite{OK1}. This is due to the fact that for many cases, the entries of the inverse Kasteleyn matrix are not uniformly bounded and do not converge as the mesh $\delta \to 0$. These assumptions are steps toward developing a framework to remove the exponential growth of the model but still be able to utilize the discrete complex analysis toolbox to study discrete equations arising from the non-trivial complex structures. 

Currently, there are several instances where perfect t-embeddings are found satisfying the properties in Thm. \ref{thm:Main_thm_CLR21} \cite{tower_AD_perfect,berggren2024perfecttembeddingslozengetilings,berggren2025perfecttembeddingsdoublyperiodic}. The goal of the current work is to extend this to a large family of graphs.  Include in our family of graphs will be the Aztec diamond and the tower graph, defined by Borodin and Ferrari in \cite{MR3821250}.

\subsection{Convergence of t-embeddings and origami maps}
For the discussion in this section, it is more convenient to consider the rotated version of the origami map. Let us assume that a perfect t-embedding $\cT_n$ exists, with boundary contour $C_\lozenge = \{z=x+\mathrm i y:|x|+|y|=1 \}$. For the origami map $\cO_n$ corresponding to folding faces of $\cT_n$, define,
$$
\mathcal{O}_n^{\prime}:=\mathrm{e}^{i \frac{\pi}{4}}\left(\mathcal{O}_n-\frac{1+i}{2}\right),
$$
which is a composition of a translation and rotation of the origami map $\mathcal{O}_n$ defined in Section \ref{sect: t_embedding}. We define $S_{\lozenge}$ as the maximal surface in $\R^{2,1}$, in the sense that $S_\lozenge$ locally maximizes the area in the Minkowski space $\mathbb{R}^{2,1}$ with the boundary contour $C_{\lozenge}$. This is the unique space-like surface in $\mathbb{R}^{2,1}$ with this boundary contour, which has vanishing mean curvature, see \cite{Kobayashi83}.
The second assumption in Thm.  \ref{thm:Main_thm_CLR21} is that the sequence of t-embeddings and origami maps $(\cT_n,\cO_n )$  converges to 
this maximal surface in the Minkowski space $\R^{2,1}$.

\section{General Shuffling}
\label{sect: gen_shuffling}
In this section, we define the class of graphs that we will consider. These graphs will be constructed by generalizing the shuffling algorithm used to construct the Aztec diamond graph \cite{MR1990768}. Included in our family of graphs will be the Aztec diamond as well as the tower graph studied in \cite{tower_AD_perfect,nicoletti2022localstatisticsshufflingdimers,MR3821250}. The building block of this construction are certain local transformations applied to the dimer model graph $G$, which preserve the partition function up to an overall factor. In words, these transformations are: 
\begin{enumerate}
\item combining a double edge into a single edge or vice-versa,
\item adding or contracting a degree 2 vertex,
\item or performing a spider move. 
\end{enumerate}
The description of how these operations change the graph and its weights is shown in Figure \ref{fig:elt_moves}. We will also need the additional operation of applying a gauge to the weights. Let $W$ be the set of white vertices of our graphs and $B$ the set of black vertices.  We say that two choices of edge weights, $w$ and $\widetilde w$, are \emph{gauge equivalent} if there exist functions $F:W\to \R_{>0}$ and $G:W\to \R_{>0}$ such that for every edge $e=(wb)$ we have
\[
\widetilde w(e) =  w(e) F(w)G(b).
\]
Note that gauge-equivalent weights have the same partition function up to an overall factor depending on the gauge and thus define the same dimer model.

\begin{figure}
    \centering
    \includegraphics[width=0.9\linewidth]{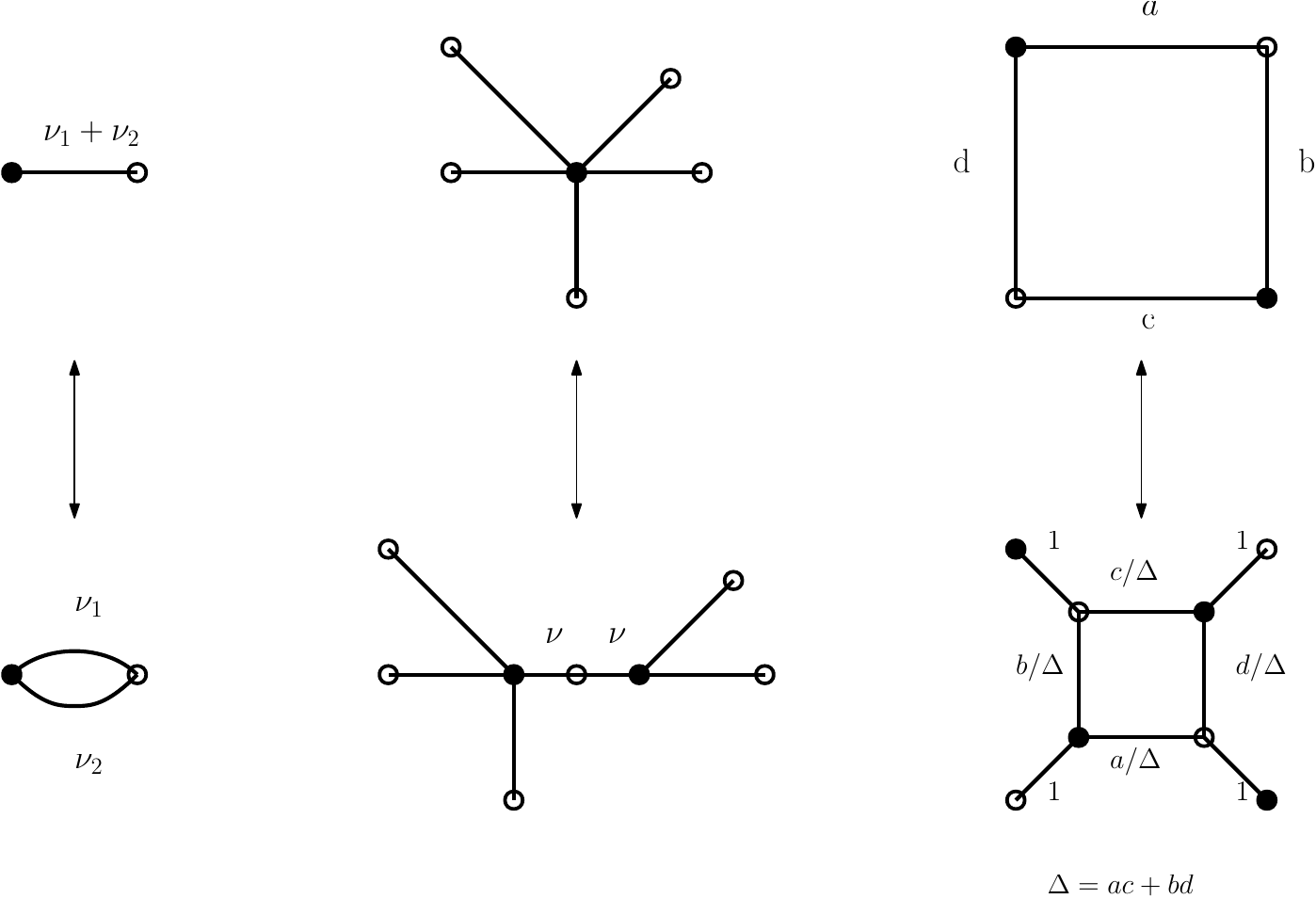}
    \caption{Elementary moves on $G$ along with their adjusted weights. Note that going from the top row to the bottom the leftmost transformation the partition function is unchanged, for the middle transformation the partition function gains an overall factor of $\nu$, and in the rightmost transformations (spider move) the partition function gains an overall factor of $\Delta^{-1}$.}
    \label{fig:elt_moves}
\end{figure}

\subsection{The Aztec Diamond}\label{subsect:AD}
Here we recall the shuffling algorithm for the Aztec diamond. Consider the square lattice $\left(\mathbb{Z}+\frac{1}{2}\right)^2$ where we label the faces by pairs of integers $(j,k)\in \mathbb{Z}^2$. We can assign a bipartite coloring to the vertices by choosing the top-left corner of the face at $(0,0)$ to be black. We define the \emph{Aztec diamond} of rank $n$ to be the set of faces $(j,k)$ such that $|j| + |k| \le n-1$ and denote it by $AD_{n}$. By convention we let $AD_0$ be the empty graph. Figure \ref{fig:AD} shows the Aztec diamond of rank 4. For each $c\in\mathbf{Z}$, we call the set of faces such that $j+k = c$ the $c$-th \emph{column} of the graph.

\begin{figure}
    \centering
   \includegraphics{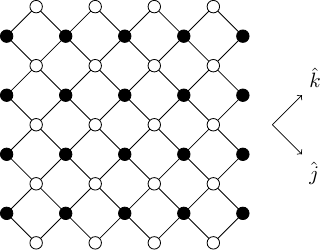}
    \caption{Aztec diamond of rank 4. We draw the graph such that the axis are rotated by $\pi/2$ so that faces such that $j+k$ is constant form vertical columns. }\label{fig:AD}
\end{figure}

It is shown in \cite{MR1990768} that one can construct $AD_{n}$ from $AD_{n-1}$ using what is known as the domino shuffling algorithm. By iterating this algorithm, one may start from $AD_0$, construct $AD_1$, and so on, until they reach $AD_{n}$. In terms of the dimer model, we present the algorithm below.
\begin{center}
 \textbf{Shuffling algorithm for the Aztec diamond}  
 \begin{enumerate}
 \item Start at rank $0$, with no faces or edges included in the graph.
 \item To get from rank $m$ to rank $m+1$ do the following:
 \begin{enumerate}
     \item \textbf{Decorate the boundary:} Add the edges and vertices around the faces located at $|j|+|k|=m$ to the graph. To the newly added vertices of degree $2$, add a new edge connecting to a new vertex of degree $1$. Give all the newly added edges weight 1. 
     \item \textbf{Shuffle:} Perform a spider move at all faces such that $m-|j|-|k|\mod 2 = 0$. 
     \item \textbf{Contract:} Contract all degree-$2$ vertices.
     \item \textbf{Gauge:} Multiply the weight of the edges adjacent to each white vertex by 2.
 \end{enumerate}
 This gives an Aztec diamond of rank $m+1$ with uniform weights.
 \item Repeat step 2 until you reach an Aztec diamond of rank $n$.
 \end{enumerate}
\end{center}
An example of step 2 can be seen in Figure \ref{fig:ADShuffle}.

Similar shuffling algorithms can be used to generate other families of graphs. Figure \ref{fig:towergraphshuffling} shows such an algorithm for the so-called \emph{tower graph} defined in \cite{MR3821250} (see \cite{tower_AD_perfect}). 
\begin{defn}[\emph{Tower graph}]
    \label{def: tower graph}
    A tower graph of size $n$ is a union of $3 n-1$ columns of faces with $n$ columns of hexagonal faces separated by two columns of square faces. The first column contain $n$ hexagons, and the last column has $2 n$ squares. For $p \geq 0$ the ($3 p+1$)st column contains $n+p$ hexagons and it is adjacent to a column of $n+p-1$ squares on the left (if $p>0$) and to a column of $n+p+1$ squares on the right. See Figure \ref{fig:towergraphshuffling} for examples of the rank 2 and 3 tower graphs.
\end{defn}

\begin{figure}
    \centering
    \includegraphics{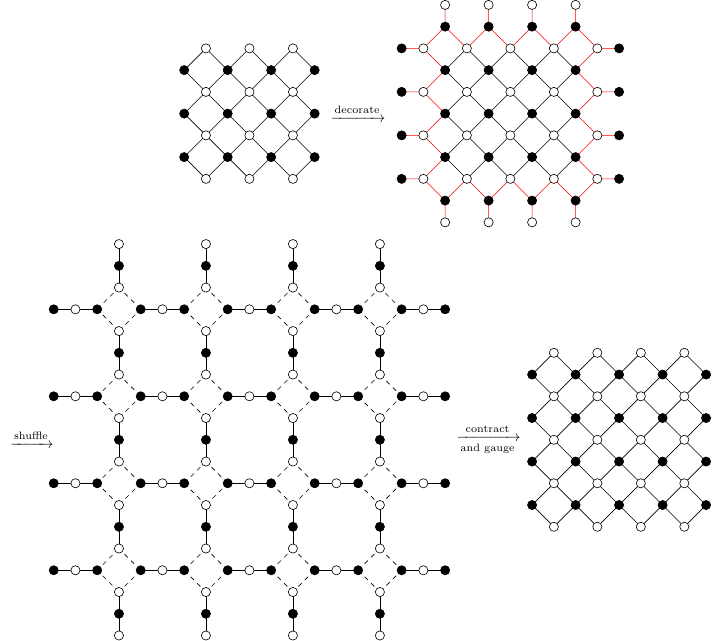}
        \caption{Steps in the shuffling algorithm from the Aztec diamond of rank 3 to the Aztec diamond of rank 4. The red edges are those added as decoration to the boundary. The dashed edges have weight $1/2$ while all other edges have weight $1$.}\label{fig:ADShuffle}
\end{figure}

\begin{figure}
    \centering
    \includegraphics{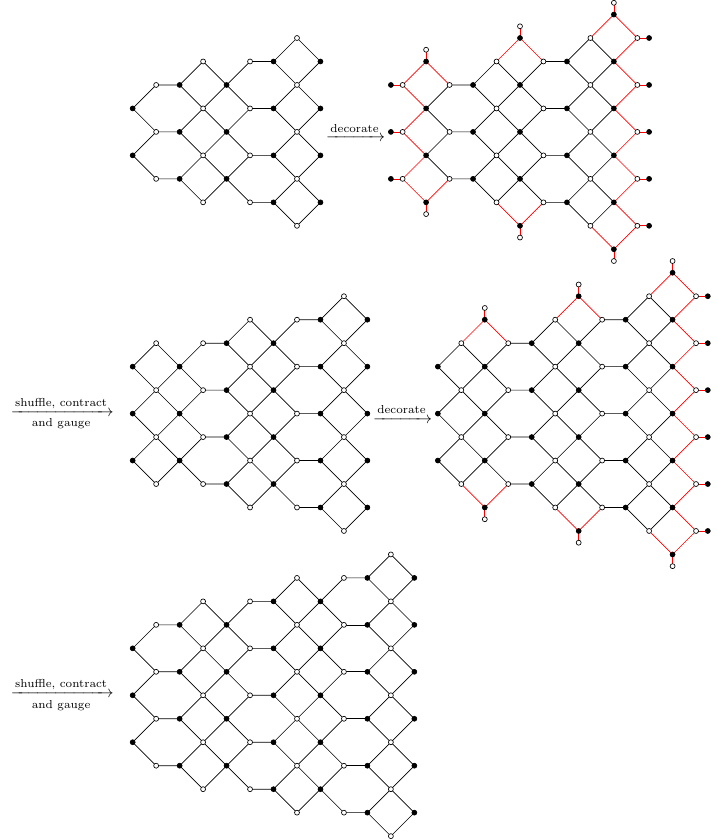}
    \caption{Generating the rank-3 tower graph from the rank-2 tower graph using a shuffling algorithm. During the shuffle step, a spider move is done at every face of the decorated columns. Unlike the shuffling algorithm for the Aztec diamond, the tower graph requires two rounds of decorating then shuffling.} \label{fig:towergraphshuffling}
\end{figure}

\subsection{Column-wise Shuffling}
We would like to highlight two features of the shuffling algorithm for the Aztec diamond. First, during each step, we always shuffle all faces in a column of the graph. Moreover, the number of faces in a column increases by two after shuffling. Second, we can shuffle non-adjacent faces in any order. In particular, the different columns of the Aztec diamond can be shuffled in any order and the resulting weighted graph would be the same. We will use the above observation to describe an alternative formulation of the shuffling algorithm that will be useful for us in what follows.

For this alternative description, we will consider shuffling only at a single column at a time, and so will need to understand how this affects our graphs. To that end, we say that the \emph{height} of a column is the number of faces of the graph it contains, and we call a column \emph{shufflable} if it is a column of square faces such that its height is less than its neighboring columns. Shuffling at a column takes the form shown in Figure \ref{fig:colShuf}. Note the height of the column increases by two after the shuffle.

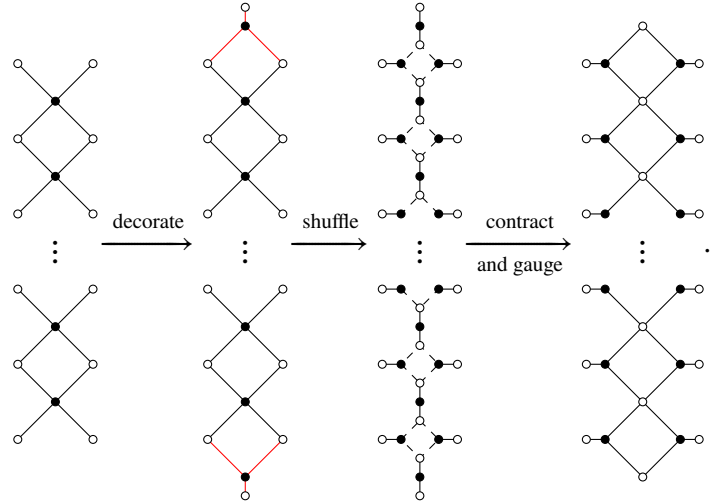
\begin{figure}
\begin{center}

\[
    \begin{tikzpicture}[baseline = (current bounding box).center,scale=0.5]
        \draw (0,0)--(1,1)--(0,2)--(-1,1)--(0,0);
        \draw (0,2)--(1,3);
        \draw (0,2)--(-1,3);
        \draw (0,0)--(1,-1);
        \draw (0,0)--(-1,-1);
        \node at (0,4) {$\vdots$};
        \draw (1,5)--(0,6)--(1,7)--(0,8)--(-1,7)--(0,6)--(-1,5);
        \draw (0,8)--(1,9);
        \draw (0,8)--(-1,9);

        \draw[black, fill=black] (0,0) circle (3pt);
        \draw[black, fill=black] (0,2) circle (3pt);
        \draw[black, fill=black] (0,6) circle (3pt);
        \draw[black, fill=black] (0,8) circle (3pt);
        \draw[black, fill=white] (1,-1) circle (3pt);
        \draw[black, fill=white] (-1,-1) circle (3pt);
        \draw[black, fill=white] (1,1) circle (3pt);
        \draw[black, fill=white] (-1,1) circle (3pt);
        \draw[black, fill=white] (1,3) circle (3pt);
        \draw[black, fill=white] (-1,3) circle (3pt);
        \draw[black, fill=white] (1,5) circle (3pt);
        \draw[black, fill=white] (-1,5) circle (3pt);
        \draw[black, fill=white] (1,7) circle (3pt);
        \draw[black, fill=white] (-1,7) circle (3pt);
        \draw[black, fill=white] (1,9) circle (3pt);
        \draw[black, fill=white] (-1,9) circle (3pt);
    \end{tikzpicture}
    \xrightarrow[]{\text{decorate}}
    \begin{tikzpicture}[baseline = (current bounding box).center,scale=0.5]
        \draw (0,0)--(1,1)--(0,2)--(-1,1)--(0,0);
        \draw (0,2)--(1,3);
        \draw (0,2)--(-1,3);
        \draw (0,0)--(1,-1);
        \draw (0,0)--(-1,-1);
        \node at (0,4) {$\vdots$};
        \draw (1,5)--(0,6)--(1,7)--(0,8)--(-1,7)--(0,6)--(-1,5);
        \draw (0,8)--(1,9);
        \draw (0,8)--(-1,9);
        \draw[red] (-1,-1)--(0,-2)--(1,-1);
        \draw[red] (0,-2)--(0,-2.5);
        \draw[red] (-1,9)--(0,10)--(1,9);
        \draw[red] (0,10)--(0,10.5);

        \draw[black, fill=black] (0,-2) circle (3pt);
        \draw[black, fill=black] (0,0) circle (3pt);
        \draw[black, fill=black] (0,2) circle (3pt);
        \draw[black, fill=black] (0,6) circle (3pt);
        \draw[black, fill=black] (0,8) circle (3pt);
        \draw[black, fill=black] (0,10) circle (3pt);
        \draw[black, fill=white] (0,-2.5) circle (3pt);
        \draw[black, fill=white] (1,-1) circle (3pt);
        \draw[black, fill=white] (-1,-1) circle (3pt);
        \draw[black, fill=white] (1,1) circle (3pt);
        \draw[black, fill=white] (-1,1) circle (3pt);
        \draw[black, fill=white] (1,3) circle (3pt);
        \draw[black, fill=white] (-1,3) circle (3pt);
        \draw[black, fill=white] (1,5) circle (3pt);
        \draw[black, fill=white] (-1,5) circle (3pt);
        \draw[black, fill=white] (1,7) circle (3pt);
        \draw[black, fill=white] (-1,7) circle (3pt);
        \draw[black, fill=white] (1,9) circle (3pt);
        \draw[black, fill=white] (-1,9) circle (3pt);
        \draw[black, fill=white] (0,10.5) circle (3pt);
    \end{tikzpicture}
    \xrightarrow[]{\text{shuffle}}
    \begin{tikzpicture}[baseline = (current bounding box).center,scale=0.5]
        \draw (0,-2)--(0,-2.5);
        \draw[dashed] (0,-0.5)--(0.5,-1)--(0,-1.5)--(-0.5,-1)--(0,-0.5);
        \draw (0,-0.5)--(0,0);
        \draw (0,-1.5)--(0,-2);
        \draw (0.5,-1)--(1,-1);
        \draw (-0.5,-1)--(-1,-1);
        \draw[dashed] (0,0.5)--(0.5,1)--(0,1.5)--(-0.5,1)--(0,0.5);
        \draw (0,1.5)--(0,2);
        \draw (0,0.5)--(0,0);
        \draw (0.5,1)--(1,1);
        \draw (-0.5,1)--(-1,1);
        \draw[dashed] (-0.5,3)--(0,2.5)--(0.5,3);
        \draw (0,2.5)--(0,2);
        \draw (0.5,3)--(1,3);
        \draw (-0.5,3)--(-1,3);
        
        \draw[dashed] (0.5,5)--(0,5.5)--(-0.5,5);
        \draw (0,5.5)--(0,6);
        \draw (0.5,5)--(1,5);
        \draw (-0.5,5)--(-1,5);
        \draw[dashed] (0,6.5)--(0.5,7)--(0,7.5)--(-0.5,7)--(0,6.5);
        \draw (0,7.5)--(0,8);
        \draw (0,6.5)--(0,6);
        \draw (0.5,7)--(1,7);
        \draw (-0.5,7)--(-1,7);
        \draw[dashed] (0,8.5)--(0.5,9)--(0,9.5)--(-0.5,9)--(0,8.5);
        \draw (0,9.5)--(0,10);
        \draw (0,8.5)--(0,8);
        \draw (0.5,9)--(1,9);
        \draw (-0.5,9)--(-1,9);
        \draw (0,10)--(0,10.5);
        \node at (0,4) {$\vdots$};

        \draw[black, fill=black] (0,-2) circle (3pt);
        \draw[black, fill=black] (0.5,-1) circle (3pt);
        \draw[black, fill=black] (-0.5,-1) circle (3pt);
        \draw[black, fill=black] (0,0) circle (3pt);
        \draw[black, fill=black] (0.5,1) circle (3pt);
        \draw[black, fill=black] (-0.5,1) circle (3pt);
        \draw[black, fill=black] (0,2) circle (3pt);
        \draw[black, fill=black] (0.5,3) circle (3pt);
        \draw[black, fill=black] (-0.5,3) circle (3pt);
        \draw[black, fill=black] (0.5,5) circle (3pt);
        \draw[black, fill=black] (-0.5,5) circle (3pt);
        \draw[black, fill=black] (0,6) circle (3pt);
        \draw[black, fill=black] (0.5,7) circle (3pt);
        \draw[black, fill=black] (-0.5,7) circle (3pt);
        \draw[black, fill=black] (0,8) circle (3pt);
        \draw[black, fill=black] (0.5,9) circle (3pt);
        \draw[black, fill=black] (-0.5,9) circle (3pt);
        \draw[black, fill=black] (0,10) circle (3pt);

        \draw[black, fill=white] (0,-2.5) circle (3pt);
        \draw[black, fill=white] (0,-1.5) circle (3pt);
        \draw[black, fill=white] (1,-1) circle (3pt);
        \draw[black, fill=white] (-1,-1) circle (3pt);
        \draw[black, fill=white] (0,-0.5) circle (3pt);
        \draw[black, fill=white] (0,0.5) circle (3pt);
        \draw[black, fill=white] (1,1) circle (3pt);
        \draw[black, fill=white] (-1,1) circle (3pt);
        \draw[black, fill=white] (0,1.5) circle (3pt);
        \draw[black, fill=white] (0,2.5) circle (3pt);
        \draw[black, fill=white] (1,3) circle (3pt);
        \draw[black, fill=white] (-1,3) circle (3pt);
        \draw[black, fill=white] (1,5) circle (3pt);
        \draw[black, fill=white] (-1,5) circle (3pt);
        \draw[black, fill=white] (0,5.5) circle (3pt);
        \draw[black, fill=white] (0,6.5) circle (3pt);
        \draw[black, fill=white] (1,7) circle (3pt);
        \draw[black, fill=white] (-1,7) circle (3pt);
        \draw[black, fill=white] (0,7.5) circle (3pt);
        \draw[black, fill=white] (0,8.5) circle (3pt);
        \draw[black, fill=white] (1,9) circle (3pt);
        \draw[black, fill=white] (-1,9) circle (3pt);
        \draw[black, fill=white] (0,9.5) circle (3pt);
        \draw[black, fill=white] (0,10.5) circle (3pt);
    \end{tikzpicture}
    \xrightarrow[\text{and gauge}]{\text{contract}}
    \begin{tikzpicture}[baseline = (current bounding box).center,scale=0.5]
        \draw (0,0)--(1,1)--(0,2)--(-1,1)--(0,0);
        \draw (0,2)--(1,3);
        \draw (0,2)--(-1,3);
        \draw (0,0)--(1,-1);
        \draw (0,0)--(-1,-1);
        \node at (0,4) {$\vdots$};
        \draw (1,5)--(0,6)--(1,7)--(0,8)--(-1,7)--(0,6)--(-1,5);
        \draw (0,8)--(1,9);
        \draw (0,8)--(-1,9);
        \draw (-1,-1)--(0,-2)--(1,-1);
        \draw (-1,9)--(0,10)--(1,9);
        \draw (1,-1)--(1.5,-1);
        \draw (1,1)--(1.5,1);
        \draw (1,3)--(1.5,3);
        \draw (1,5)--(1.5,5);
        \draw (1,7)--(1.5,7);
        \draw (1,9)--(1.5,9);
        \draw (-1,-1)--(-1.5,-1);
        \draw (-1,1)--(-1.5,1);
        \draw (-1,3)--(-1.5,3);
        \draw (-1,5)--(-1.5,5);
        \draw (-1,7)--(-1.5,7);
        \draw (-1,9)--(-1.5,9);

        \draw[black, fill=white] (0,-2) circle (3pt);
        \draw[black, fill=white] (0,0) circle (3pt);
        \draw[black, fill=white] (0,2) circle (3pt);
        \draw[black, fill=white] (0,6) circle (3pt);
        \draw[black, fill=white] (0,8) circle (3pt);
        \draw[black, fill=white] (0,10) circle (3pt);
        \draw[black, fill=white] (1.5,-1) circle (3pt);
        \draw[black, fill=white] (1.5,1) circle (3pt);
        \draw[black, fill=white] (1.5,3) circle (3pt);
        \draw[black, fill=white] (1.5,5) circle (3pt);
        \draw[black, fill=white] (1.5,7) circle (3pt);
        \draw[black, fill=white] (1.5,9) circle (3pt);
        \draw[black, fill=white] (-1.5,-1) circle (3pt);
        \draw[black, fill=white] (-1.5,1) circle (3pt);
        \draw[black, fill=white] (-1.5,3) circle (3pt);
        \draw[black, fill=white] (-1.5,5) circle (3pt);
        \draw[black, fill=white] (-1.5,7) circle (3pt);
        \draw[black, fill=white] (-1.5,9) circle (3pt);
        \draw[black, fill=black] (1,-1) circle (3pt);
        \draw[black, fill=black] (-1,-1) circle (3pt);
        \draw[black, fill=black] (1,1) circle (3pt);
        \draw[black, fill=black] (-1,1) circle (3pt);
        \draw[black, fill=black] (1,3) circle (3pt);
        \draw[black, fill=black] (-1,3) circle (3pt);
        \draw[black, fill=black] (1,5) circle (3pt);
        \draw[black, fill=black] (-1,5) circle (3pt);
        \draw[black, fill=black] (1,7) circle (3pt);
        \draw[black, fill=black] (-1,7) circle (3pt);
        \draw[black, fill=black] (1,9) circle (3pt);
        \draw[black, fill=black] (-1,9) circle (3pt);
    \end{tikzpicture}.
\]

\end{center}
\caption{The steps involved in shuffling a single column of our graph.}\label{fig:colShuf}
\end{figure}

Using the single column shuffling one may check that we have the following lemma.

\begin{lemma}\label{lem:colshuf}
Shuffling at a column turns adjacent columns from squares into hexagons and vice-versa.
\end{lemma}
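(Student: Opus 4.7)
The statement is a purely local, combinatorial claim about how a single column shuffle acts on its neighboring columns, so my plan is to verify it by directly tracking the effect of the four column-shuffle steps on the boundary shared between the shufflable column $C$ and a face $f'$ of an adjacent column $C'$. The gauge step leaves the graph combinatorics unchanged, and the decoration step alters only the extremal ends of $C$, so the main work reduces to understanding how the spider move at a face $f\in C$, followed by contraction of all degree-$2$ vertices, modifies the common boundary with the adjacent face $f'\in C'$.

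The key observation is that a spider move at $f$ (together with the subsequent contractions) introduces precisely two new degree-$3$ vertices along each of the two sides of $f$ that are shared with the neighboring columns. I would then split into two cases by the current type of $f'$. If $f'$ is a square, the shared boundary with $f$ consists of two edges meeting at a single vertex; after the spider move at $f$ and the contractions, the two new vertices persist on the boundary of $f'$, so the degree of $f'$ rises from $4$ to $6$ and it becomes a hexagon. If instead $f'$ is a hexagon, the shared boundary with $f$ has four edges and three intermediate vertices; the spider move at $f$ renders two of these intermediate vertices degree-$2$, and contracting them lowers the degree of $f'$ from $6$ to $4$, turning it back into a square. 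A careful inspection of the cell-by-cell picture shown in Figure~\ref{fig:colShuf} at the interface between columns confirms both cases, and in particular confirms that no other face of $C'$ is combinatorially affected.

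The principal obstacle is not conceptual but bookkeeping: one must check that the local analysis also applies at the topmost and bottommost faces of $C$, where the decoration step genuinely adds new vertices and edges. Here the shufflability hypothesis---that the height of $C$ is strictly less than the heights of its neighboring columns---ensures that after decoration each extremal face of $C$ still has a preexisting face of $C'$ on its side, so the same local spider-plus-contract pattern applies at the endpoints as in the interior. Applying this local transformation to every face of $C'$ sharing a boundary with $C$ then yields the lemma.
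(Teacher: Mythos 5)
Your square-to-hexagon case is correct in spirit, although the bookkeeping is slightly off: a single face $f'$ of the adjacent column shares only \emph{one} edge with $f$; the two edges of $f'$ facing the shuffled column $C$ belong to two \emph{consecutive} faces of $C$, so the relevant local move is the pair of spider moves at both of those faces together with the contraction of their common corner vertex (each shared edge becomes a three-edge path, and the contraction then merges the two new middle vertices), which is what yields the net increase from four to six boundary edges.

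The genuine gap is in the hexagon-to-square case. A hexagon $f'$ adjacent to the shuffled column never shares four edges with it: on the ``long'' (four-edge) side of a hexagon, the two outer edges are shared with the hexagons directly above and below $f'$ in its \emph{own} column, not with $C$; and, more importantly, a hexagon adjacent to a \emph{shufflable} column necessarily presents its two-edge side to that column. (If its four-edge side faced $C$, those hexagons would have been created by an earlier shuffle of $C$, after which $C$ is strictly taller than $C'$, and since heights only grow under shuffling and $C'$ cannot be shuffled while it is hexagonal, $C$ could not have become shufflable again; this is the height bookkeeping underlying Prop.~\ref{prop:pathtographbij}.) The mechanism you describe --- the shared edges surviving while two intermediate vertices drop to degree two --- is also not what happens: the spider moves delete every edge of the shuffled faces, in particular the two edges shared with $f'$, replacing each by a three-edge path, so $f'$ temporarily has ten boundary edges; then \emph{three} old vertices on its boundary become degree two and are contracted (the common corner of the two shuffled faces, and the two endpoints of the shared two-edge arc, whose only remaining edges lay in the shuffled faces), giving $10-6=4$. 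The orientation issue is not cosmetic: in the configuration you assume, the same procedure would produce an octagon rather than a square, so without identifying the correct local picture the case analysis fails. The paper itself treats the lemma as a direct check against the single-column shuffle of Figure~\ref{fig:colShuf}, which exhibits exactly this local count; your inspection of that figure needs to be redone for the hexagonal neighbor, including the argument that only the two-edge side can face the shuffled column.
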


\begin{remark}
    Note that our graph will no longer consist of only square faces but may include hexagonal faces as well. This will not effect how we label the faces by elements of $\mathbb{Z}^2$ and we will continue to use the same coordinate system.
\end{remark}

We will also want to allow shuffling of columns on the boundary of our graph. Let $l,r\in\mathbb{Z}_{\ge 0}$ and suppose our graph includes all columns such that $-l< j+k< r$ where $(j,k)\in \mathbb{Z}^2$ are the coordinates of the faces in the column. Then we call \emph{boundary columns} the columns satisfying $j+k=r$ and $j+k=-l$, obtained by adding faces to the the left and right boundary of the graph as 
\[
\begin{tikzpicture}[baseline = (current bounding box).center,scale=0.5]
        \draw (-1,-1)--(0,0)--(-1,1)--(0,2)--(-1,3);
        \draw[red] (0,0)--(1,1)--(0,2)--(1,3);
        \draw[red] (1,1)--(1.5,1);
        \draw[red] (1,3)--(1.5,3);
        \node at (0,4) {$\vdots$};
        \draw (-1,5)--(0,6)--(-1,7)--(0,8)--(-1,9);
        \draw[red] (1,5)--(0,6)--(1,7)--(0,8);
        \draw[red] (1,5)--(1.5,5);
        \draw[red] (1,7)--(1.5,7);
   
        \draw[black, fill=black] (0,0) circle (3pt);
        \draw[black, fill=black] (0,2) circle (3pt);
        \draw[black, fill=black] (0,6) circle (3pt);
        \draw[black, fill=black] (0,8) circle (3pt);
        \draw[black, fill=white] (-1,-1) circle (3pt);
        \draw[black, fill=white] (1,1) circle (3pt);
        \draw[black, fill=white] (-1,1) circle (3pt);
        \draw[black, fill=white] (1,3) circle (3pt);
        \draw[black, fill=white] (-1,3) circle (3pt);
        \draw[black, fill=white] (1,5) circle (3pt);
        \draw[black, fill=white] (-1,5) circle (3pt);
        \draw[black, fill=white] (1,7) circle (3pt);
        \draw[black, fill=white] (-1,7) circle (3pt);
        \draw[black, fill=white] (-1,9) circle (3pt);
        \draw[black, fill=black] (1.5,1) circle (3pt);
        \draw[black, fill=black] (1.5,3) circle (3pt);
        \draw[black, fill=black] (1.5,5) circle (3pt);
        \draw[black, fill=black] (1.5,7) circle (3pt);
    \end{tikzpicture}
\]
for the right boundary and the horizontal flip for the left boundary, where the added edges are colored red. Note that there is no freedom in the placement of dimers in the boundary columns, and the boundary column will always have height two less than the adjacent non-boundary column. All other columns are called \emph{interior columns}.

We consider boundary columns to always be shufflable by decorating the column with an additional square face at the top and bottom, as well as corresponding adjacent degree one vertices, then doing a spider move at each face, contracting and gauging. This turns the boundary column into an interior column and increases the number of columns in the graph by one.  

We may give a slight variation on the shuffling algorithm of the Aztec diamond using column shuffling that highlights the fact that the order of shuffling at non-adjacent columns does not effect the end graph.
\begin{center}
 \textbf{Alternate shuffling algorithm for the Aztec diamond}  
 \begin{enumerate}
 \item Start with  rank-$0$ Aztec diamond.
 \item Choose a shufflable column whose height is not equal to the height of the same column in $AD_n$ and perform the single column shuffling.
 \item Repeat step 2 until you reach an Aztec diamond of rank $n$.
 \end{enumerate}
\end{center}

\subsection{Shuffling and Up-Down paths}\label{sect: shufupdown}

We would like to consider all graphs that can be obtained by starting with the rank-$0$ Aztec diamond and then shuffling the columns in some order until they reach the prescribed heights.
In order to describe these graphs, we first define a certain lattice path that will contain the information of the column heights of our graph. 

By an \emph{up-down path} we mean an assignment for every $c\in \mathbb{Z}$ of a height $H(c)\in\mathbb{Z}$ such that $H(c+1)-H(c)=\pm 1$. We then linearly interpolate between the integer points so that $H:\R\to\R$. By convention, we will represent the graph $AD_0$ by the up-down path such that for each $c\in \mathbb{Z}$ the height $H$ of the path at $c$ is
\[
H(x) = |x|-1.
\]
We have the following proposition relating these paths to our graphs.

\begin{prop} \label{prop:pathtographbij}
    There is a bijection between up-down paths and square-hexagon graphs satisfying the following properties:
    \begin{itemize}
        \item Suppose the graph contains columns $-l<j+k<r$. Then the height of the path at $c\in\{-l+1,-l+2\ldots,r-1\}$ is the height of column $c$ in the graph.
        \item For all $c\le -l$ we have $H(c)=-c-1$ and for all $c\ge r$ we have $H(c)=c-1$. Note that $H(-l)$ and $H(r)$ correspond to the height of the boundary columns of the graph. 
        \item For $c\in\{-l,-l+1,\ldots,r\}$ if $H(c) > H(c+1),H(c-1)$ or $H(c) < H(c+1),H(c-1)$ then column $c$ of the graph consists of square faces. If $H(c-1)<H(c)<H(c+1)$ or $H(c-1)>H(c)>H(c+1)$ then column $c$ of the graph consists of hexagonal faces.
        \item Shuffling at column $c$ corresponds to an upward corner flip of the up-down path at $c$. That is, $H(c)\mapsto H(c)+2$ and segments connecting to $H(c\pm1)$ are changed accordingly.
    \end{itemize}
\end{prop}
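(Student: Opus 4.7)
The plan is to prove the bijection by constructing both maps explicitly via the shuffling algorithm and then verifying that the four listed properties are preserved inductively. In one direction, given a square-hexagon graph built from $AD_0$ by a sequence of column shuffles, I would define $H$ by letting $H(c)$ be the height of column $c$ (with the boundary convention $H(c)=|c|-1$ outside the graph). In the other direction, given an up-down path $H$ satisfying the boundary condition, I would read off the required column heights and then construct the graph by repeated single-column shuffling starting from $AD_0$.

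First I would establish the base case: by inspection, $AD_0$ (the empty graph consisting only of boundary columns) corresponds exactly to the V-shaped path $H(x)=|x|-1$, and the parity/type of each column on the boundary is consistent with property (3). Next, I would show the inductive step: if a graph $G$ corresponds to a path $H$, then shuffling a shufflable column $c$ of $G$ produces a graph corresponding to the upward-corner-flipped path $H'$ (where $H'(c)=H(c)+2$ and other values unchanged). The shufflability condition on $c$ — namely that it is a square column with height strictly less than both neighbors — translates exactly to $c$ being a strict local minimum of $H$, which is precisely the configuration allowing an upward corner flip. The column-shuffle picture in Figure \ref{fig:colShuf} together with Lem. \ref{lem:colshuf} gives that (a) the height of column $c$ increases by 2, and (b) the two neighboring columns swap between square and hexagonal type. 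I would then verify that the four properties of the proposition are exactly what is needed to track these changes: property (3) classifies the face type of each column by local extrema versus monotone behavior of $H$, and this classification is preserved under corner flips because flipping at $c$ only affects the local shape of $H$ at $c-1,c,c+1$, and in those three positions the new shape matches the new graph around the shuffled column.

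To upgrade the inductive correspondence into a bijection I would use two separate reachability facts. Any up-down path $H$ satisfying the boundary condition can be obtained from the base path $|c|-1$ by a finite sequence of upward corner flips (proceed by induction on $\sum_c(H(c)-|c|+1)$, a nonnegative integer which strictly decreases under downward flips and strictly increases under upward flips, and observe that any path not equal to the base has at least one interior strict local minimum to flip up from the base side). Correspondingly, any graph in our family is reached from $AD_0$ by a finite sequence of column shuffles. Well-definedness of the inverse then reduces to showing that if two different shuffling sequences produce the same final path, they produce the same final graph; this follows from the fact that non-adjacent column shuffles commute (as they are supported on disjoint sets of faces and edges) together with a standard diamond-lemma argument, since any two shuffling sequences reaching the same path can be interconverted by commutations of non-adjacent shuffles.

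The main obstacle I anticipate is the careful bookkeeping of property (3) under shuffling, i.e., verifying that the square/hexagon labeling really is determined purely by the local shape of $H$ and is preserved at the two neighboring columns whenever one performs a shuffle. Concretely, if column $c$ is a strict local minimum being flipped up, then before the flip both $c-1$ and $c+1$ are monotone points of $H$ (hence hexagonal by the rule), while after the flip they become local extrema (hence square) — and one must check that this matches exactly the picture in Figure \ref{fig:colShuf}, where Lem. \ref{lem:colshuf} asserts the swap of square and hexagonal neighbors. A symmetric check is needed for shuffles at boundary columns, where the decoration step must be shown to be consistent with property (2), so that the boundary values $H(-l)$ and $H(r)$ evolve correctly as new columns are added to the graph.
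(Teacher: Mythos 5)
Your proposal is correct and takes essentially the same route as the paper: the paper's proof is a short induction starting from the observation that $AD_0$ admits a unique column shuffle (producing $AD_1$), which matches the unique corner flip of the base path $H(x)=|x|-1$, and then concludes by induction using Lem.~\ref{lem:colshuf}. The additional ingredients you spell out---well-definedness via commutation of non-adjacent column shuffles and reachability of every admissible path by upward corner flips---are details the paper leaves implicit (order-independence of non-adjacent shuffles is recorded as an observation earlier in the section), so your write-up is a fleshed-out version of the same argument.
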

\begin{proof}
    Note that starting from $AD_0$ the only possible column shuffle results in $AD_1$. This agrees with the up-down path corresponding to $AD_0$ having only one possible corner flip which results in the path corresponding to $AD_1$. The proposition then follows from induction using Lem. \ref{lem:colshuf}.
\end{proof}

The up-down paths corresponding to the Aztec diamond of rank 3 and the tower graph of rank 3 are shown in Figure \ref{fig:Paths}.

\begin{remark}\label{rmk:railyard}
    The representation of the graphs as an up-down path is related to the Schur process representation as a sequence of interlacing partitions. See \cite{rail-yard,MR3821249} for details. In particular, in \cite{MR3821249} the authors represent their graph by a Young diagram called ``encoded shape". If we restrict the interlacing to only use $\{\preceq',\succeq\}$ and draw the encoded shape in Russian convention, then we recover our up-down path description with $\preceq'$ corresponding to up steps and $\succeq$ corresponding to down steps. For the examples in Figure \ref{fig:Paths}, the Aztec diamond of rank 3 corresponds to the interlacing sequence
    \[
    \emptyset \preceq' \lambda^{(1)} \succeq \lambda^{(2)} \preceq' \lambda^{(3)} \succeq \lambda^{(4)} \preceq' \lambda^{(5)} \succeq \emptyset
    \]
    while the tower graph of rank 3 corresponds to the interlacing sequence
    \[
    \emptyset \preceq' \lambda^{(1)} \preceq' \lambda^{(2)} \succeq \lambda^{(3)} \preceq' \lambda^{(4)} \preceq' \lambda^{(5)} \succeq \lambda^{(6)} \preceq' \lambda^{(7)} \preceq' \lambda^{(8)} \succeq \emptyset.
    \]
    Similar square-hexagon graphs to those we study here were studied using Schur-process techniques in \cite{boutillier2021limit}.
\end{remark}

\begin{figure}
    \centering
   \includegraphics{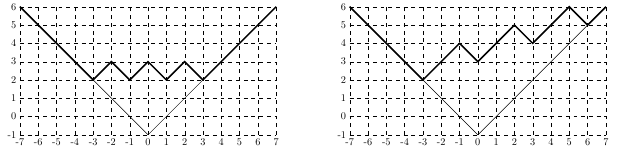}
    \caption{Left: The corresponding up-down path for the Aztec diamond of rank $3$. Right:  The corresponding up-down path for the tower graph of rank $3$.}\label{fig:Paths}
\end{figure}

\begin{figure}
    \centering
    \begin{tabular}{ccc}
    \includegraphics[width=0.4\textwidth]{quadratic0.33N15graph-eps-converted-to.pdf}
    & \quad &
    \includegraphics[width=0.4\textwidth]{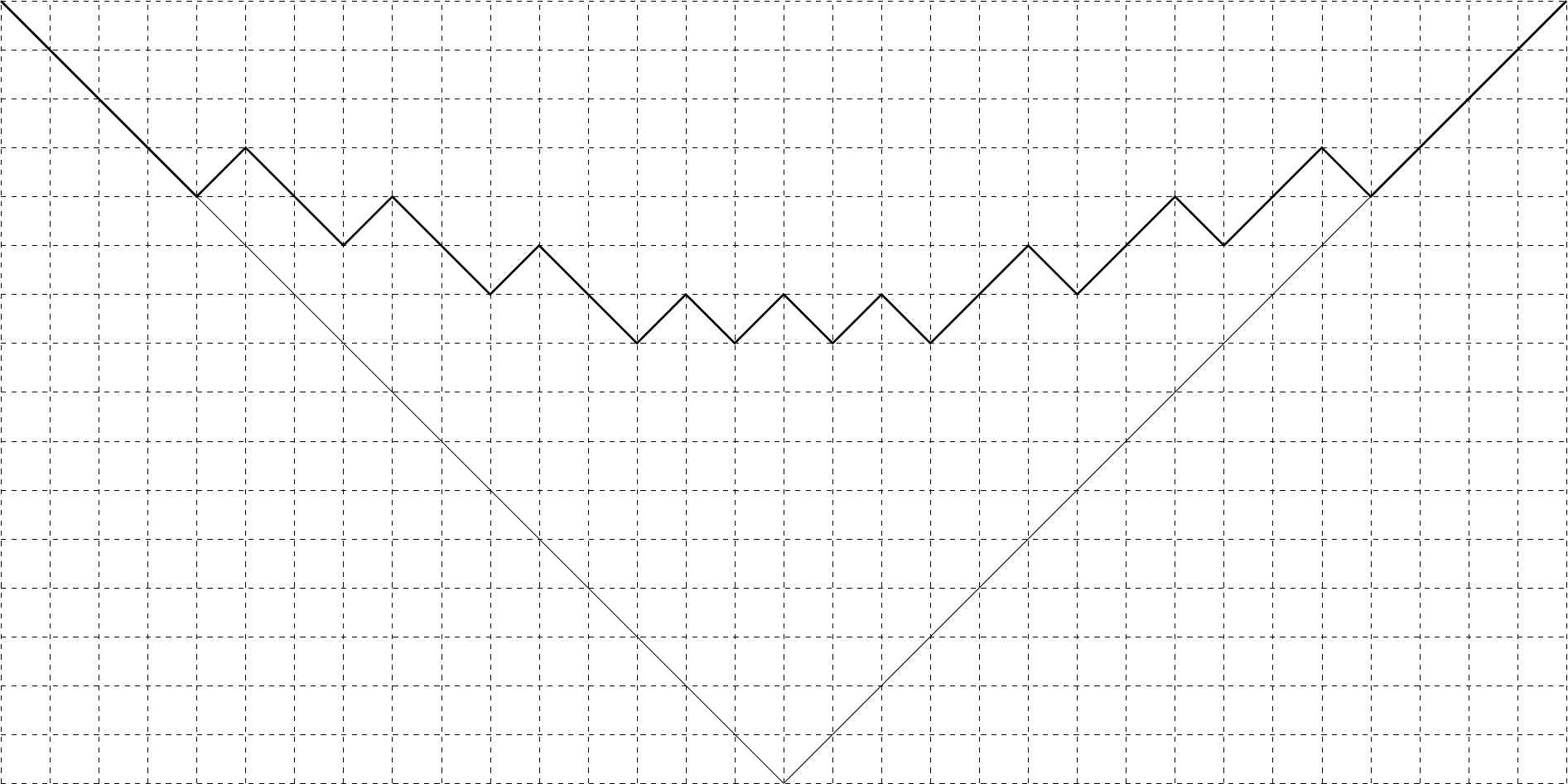}
    \end{tabular}
    \caption{The generalized tower graph of rank 15 approximating $f(x)=\frac{1}{3}x^2+\frac{2}{3}$ and the corresponding up-down path.}
    \label{fig:quadratic}
\end{figure}

\begin{figure}
    \centering
    \begin{tabular}{ccc}
    \includegraphics[width=0.4\textwidth]{sin2N15graph-eps-converted-to.pdf}
    & \quad &
    \includegraphics[width=0.4\textwidth]{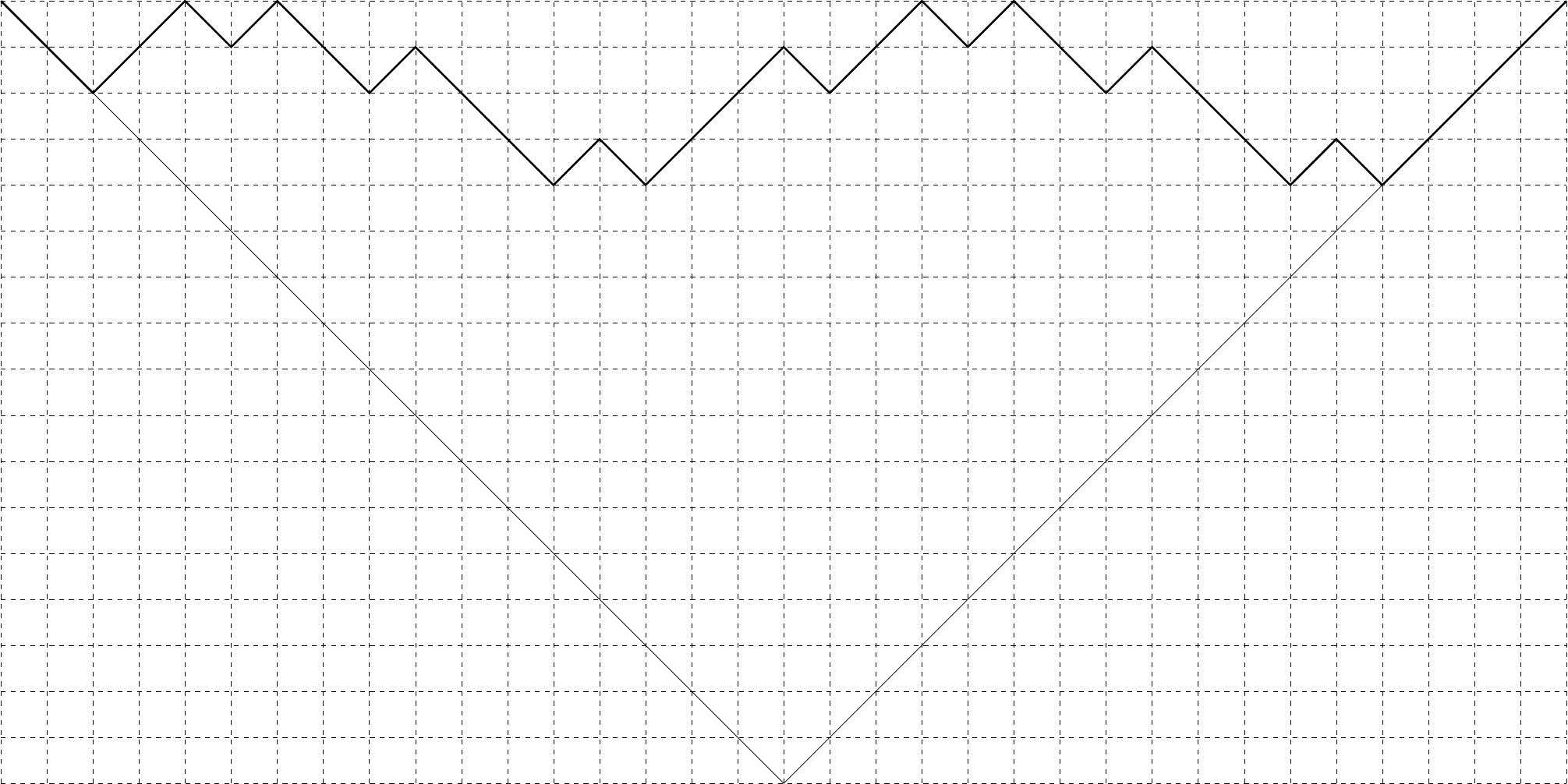}
    \end{tabular}
    \caption{The generalized tower graph of rank 15 approximating $f(x)=\frac{1}{3\pi}\sin(2\pi x) + 1$ and the corresponding up-down path.}
    \label{fig:sin}
\end{figure}

Instead of first starting with a family of graphs in mind, we may instead start with a curve $f$ and use this to define the graphs. To make this precise, consider a function $f:\R\to\R$ satisfying the conditions
\begin{itemize}
    \item $f(-1)=1$,
    \item there exists a smallest $x_*>-1$ such that $f(x_*)=x_*$,
    \item and $|f'(x)| \le 1$ for all $x\in [-1,x_*]$.
\end{itemize}
We then define $\bar f$ by
\begin{equation}\label{eq:barf}
\bar f(x) = 
\begin{cases}
    -x, & x<-1 \\
    f(x), & x\in[-1,x_*] \\
    x, & x>x_*
\end{cases}.
\end{equation}

One can put a partial order on the paths according to the rule that for paths $H$ and $H'$, $H\le H'$ if and only if $H(c)\le H'(c)$ for all $c\in\Z$. We define a sequence of up-down paths $(H_n^f)_{n\in\Z_{\ge 0}}$ approximating $f$ by letting $H_n^f$ be the largest up-down path in this partial order such that
\begin{equation}\label{eq:approxPath}
    \frac{1}{n+1}\left(H_n^f((n+1)x) + 1\right) \le \bar f(x)
\end{equation}
for all $x\in \R$. By Prop. \ref{prop:pathtographbij}, this allows us to define a sequence of graphs whose column heights approximate $f$. We denote these graphs \emph{generalized tower graphs approximating $f$}. We call $n$ the rank of such a graph. Note that we have
\begin{equation}\label{eq:error}
    \left| (n+1)\bar f(x) - H_n^f((n+1)x) + 1\right| \le 2
\end{equation}
for all $x\in \R$.

\begin{example}
    The up-down paths for the Aztec Diamond graphs $AD_n$ approximate the function $f(x)=1$. The up-down paths for the tower graphs approximate the function $f(x) = \frac{1}{3}x + \frac{4}{3}$. Figure \ref{fig:quadratic} shows the up-down path corresponding to the generalized tower graph of rank $n=15$ approximating $f(x) = \frac{1}{3}x^2+\frac{2}{3}$, while Figure \ref{fig:sin} shows the same for $f(x) = \frac{1}{3\pi}\sin(2\pi x) + 1$.
\end{example}

\begin{remark}\label{rmk:2d}
    Rather than column heights one can generalize the above to a two-dimensional version in which we consider the up-down surfaces  that encode the height $H(j,k)$ at each face. For $AD_0$, define the corresponding up-down surface heights to be 
    \[
    H_0(j,k) = \begin{cases}
        |j+k|-1,& j,k\ge 0 \text{ or } j,k \le 0 \\
        |j-k|-1, & j\ge 0,  k\le 0 \text{ or } j\le 0, k\ge 0.
    \end{cases}
    \]
    The analogue of corner flips for the surfaces are local moves such that if $H(j,k) = h-1$ and $H(j\pm1,k) = H(j,k\pm1) = h$, then we may ``flip" at $(j,k)$ and change the height at that face to from $h-1$ to $h+1$. This corresponds to shuffling at the face $(j,k)$ in the graph. Graphs constructed in this way include the $(r,s,t)$-graphs of \cite{MR4782740}. The change in height after a move of shuffling is identical to the change on the stepped initial data surface of the octahedron recurrence. One can then define families of graphs that approximate two-dimensional surfaces $f:\R^2\to\R$ similar to what we have for the generalized tower graphs. 

    The up-down paths and generalized tower graphs correspond to the case for which one always shuffles at every face in a column, so that the two-dimensional surfaces are actual constant in one direction and satisfy $H(j,k) = H(j+k)$. We restrict ourself here to the one-dimensional case for simplicity, but note that the asymptotic analysis in Section \ref{sect: rigidity} will not depend on this fact and applies equally well to the more general case.
\end{remark}

\subsection{Reduced Graphs}

In order to study the t-embeddings we need to consider shuffling on the reduced graphs. Given a graph $G$, by \emph{reduced graph} we mean the graph $G'$ obtained by contracting all degree two vertices of $G$.  The reduced graphs for the Aztec diamond of rank 3 and the tower graph of rank 3 are shown in Figure \ref{fig:reducedgraphs}.  We note that, in particular, the reduced graph $G'$ has the same number of faces as the graph $G$, so we may keep the same labeling of columns. 

In \cite{tower_AD_perfect}, the authors describe how the the shuffling algorithm translates to the reduced graph for the Aztec diamonds and tower graphs. The difference in the shuffling occurs at the boundaries, with Figures \ref{fig:reducedboundaryshuffle} and \ref{fig:reducedboundaryshuffle2} describing how to implement the shuffle at a boundary column in the reduced graph.  Note that the shuffling algorithm for reduced graphs results in the same graph as first taking the corresponding unreduced graph, shuffling, then reducing.

\begin{figure}
    \centering
    \begin{tabular}{ccc}
   \includegraphics{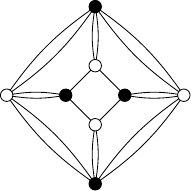}
    &
    \hspace{1cm}
    &
    \includegraphics{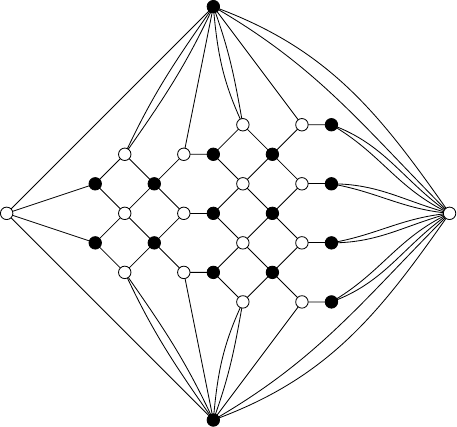}
    \end{tabular}
    \caption{Left: Reduced graph for the Aztec diamond of rank 3. Right: Reduced graph for the tower graph of rank 3.}
    \label{fig:reducedgraphs}
\end{figure}

\begin{figure}
    \centering
    \includegraphics{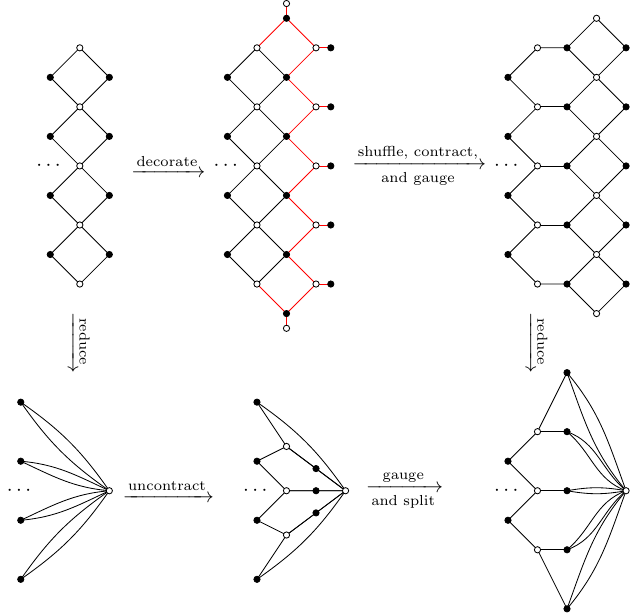} 
    \caption{Shuffling at a boundary column for which the adjacent interior column is made up of square faces. The bottom line shows how to translate the shuffling to the reduced graph. First, one repeatedly uncontracts at the white vertex, each time turning it into a two white vertices connected by a degree 2 black vertex. Then, one multiplies the weight of each edge adjacent to the original white vertex by 2 and splits them into double edges with weight 1.}
    \label{fig:reducedboundaryshuffle}
\end{figure}

\begin{figure}
    \centering \includegraphics{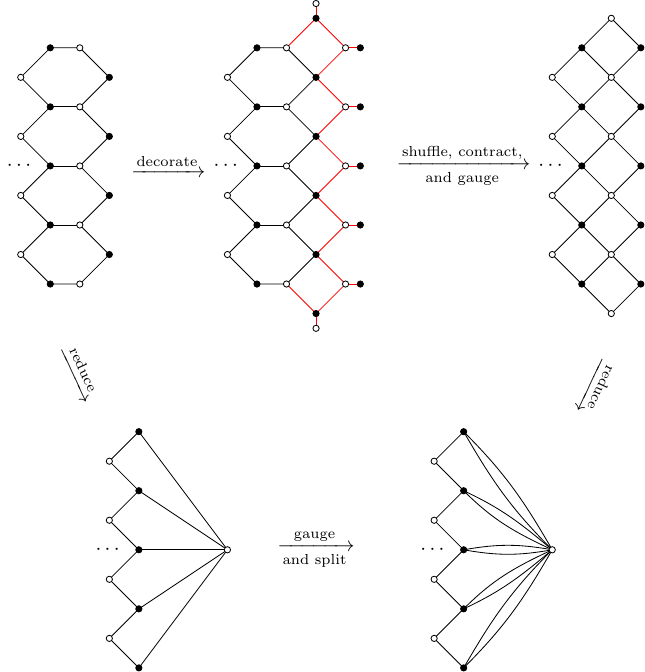} 
       \caption{Shuffling at a boundary column for which the adjacent interior column is made up of hexagonal faces. The bottom line shows how to translate the shuffling to the reduced graph. In this case, one multiplies the weight of each edge adjacent to the leftmost white vertex by 2 and splits them into double edges with weight 1.}
    \label{fig:reducedboundaryshuffle2}
\end{figure}

The following proposition follows from the reduced graph shuffling algorithm.
\begin{prop} \label{prop:boundary4}
    All the reduced generalized tower graphs have degree 4 boundary.
\end{prop}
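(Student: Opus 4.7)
The proof proceeds by induction on the number of column shuffles applied in building the generalized tower graph from $AD_0$ via the alternate shuffling algorithm of Section~\ref{sect: shufupdown}. By Proposition~\ref{prop:pathtographbij} every generalized tower graph arises this way, and since the reduced-graph version of a shuffle produces exactly the reduction of the post-shuffle graph (as noted in the paragraph preceding the proposition), it is enough to show that the ``outer face has degree $4$'' property holds for the smallest nontrivial graph in the sequence and is preserved under each reduced-graph shuffling step.

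For the base case, I would verify by direct inspection that the reduced graph of the smallest nontrivial generalized tower graph, e.g.\ the reduced $AD_1$, has outer face of degree $4$.

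For the inductive step, two cases occur. An \emph{interior column shuffle} acts entirely in the interior of the graph: since a shufflable interior column has both of its neighboring columns already present, the spider move, subsequent contractions, and gauge step modify only faces, vertices, and edges that lie strictly inside the graph, so no edge or vertex of the outer face is affected and the outer face is unchanged. A \emph{boundary column shuffle} is described explicitly by Figures~\ref{fig:reducedboundaryshuffle} and~\ref{fig:reducedboundaryshuffle2}, depending on whether the adjacent interior column consists of square or hexagonal faces. In both sub-cases one reads directly from the figure that the shuffle turns the former boundary column into an interior column and appends a new boundary column on the outside, and that the net effect on the outer-face boundary is to replace the portion adjacent to the old boundary column by a portion of the same edge count adjacent to the new boundary column; the remaining outer-face edges (those on the untouched side of the graph) are unchanged. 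Hence the total number of outer-face edges remains four.

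The main obstacle is precisely the last verification for boundary column shuffles, which requires tracking the outer-face boundary through the local moves depicted in Figures~\ref{fig:reducedboundaryshuffle} and~\ref{fig:reducedboundaryshuffle2}; once one confirms in each of these two figures that any new vertex appearing on the outer face after the operation either has degree $2$ (and is therefore contracted, leaving the edge count unchanged) or exactly replaces one of the former outer-face endpoints, the induction closes.
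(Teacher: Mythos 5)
Your overall strategy --- induction along the column-shuffling construction, with the reduced-graph boundary-column moves of Figures \ref{fig:reducedboundaryshuffle} and \ref{fig:reducedboundaryshuffle2} handling the left and right ends --- is the route the paper intends (the paper simply asserts that the statement follows from the reduced-graph shuffling algorithm). However, your inductive step has a genuine gap: the claim that an interior column shuffle ``modifies only faces, vertices, and edges that lie strictly inside the graph'' is not correct. Every single-column shuffle, interior or not, begins with the decorate step of Figure \ref{fig:colShuf}, which attaches a new square face (plus a pendant vertex) at the top and at the bottom of the shuffled column; these new faces border the outer face, and the boundary of the graph genuinely changes at every shuffle (the up-down path undergoes a corner flip, so the top and bottom profile of the graph moves). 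Consequently the dichotomy ``interior shuffles leave the outer face alone / boundary-column shuffles are read off the figures'' does not address the place where the outer face could in principle gain degree, namely the top and bottom of the shuffled column, and the induction does not close as written.

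What has to be checked --- and what your argument silently assumes --- is that in the \emph{reduced} graph these top and bottom modifications are absorbed by the reduction: after the spider moves and contractions, the new extremal vertices of the column have degree two and vanish upon reduction (their neighbors merge, any resulting double edge being collapsed), so that the outer face retains exactly four edges. This verification is of exactly the same nature as the one you carry out for boundary columns, but it is not contained in Figures \ref{fig:reducedboundaryshuffle} and \ref{fig:reducedboundaryshuffle2}, which only describe the left/right boundary columns. A smaller point: the base case needs care, since applying ``contract all degree-two vertices'' literally to $AD_1$ (a single quadrilateral with all four vertices of degree two) degenerates the graph; one should either declare the reduced $AD_1$ to be $AD_1$ itself or start the induction at a slightly larger graph where the reduction is unambiguous.
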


\section{t-Embeddings and Generalized Tower Graphs}
\label{sect: t_embedding}
In this section we relate the t-embeddings of the generalized tower graphs to that of the Aztec diamond (Lem. \ref{lem: gen_tower_AD_coord}). This will serve as main lemma allowing for the asymptotic analysis in Section \ref{sect: rigidity}. To relate the various t-embeddings, we will use the connection with the shuffling algorithm. It is shown in \cite{circle_pattern} that perfect t-embedding and the origami map  are invariant under the elementary transformations in Figure  \ref{fig:elt_moves}. We refer to Section 2.14 of \cite{tower_AD_perfect} for details of the following proposition.
\begin{prop}[\cite{circle_pattern, tower_AD_perfect}]
\label{prop:t_embedding_invariant}
    Perfect t-embeddings of $G$ are preserved under elementary transformations of ${G}$. More precisely,
    \begin{enumerate}
        \item  replacement of a single edge with weight $\nu_1+\nu_2$ by parallel edges with weights $\nu_1, \nu_2$ corresponds to adding a point dividing the corresponding edge of the t-embedding into two segments with proportion $\left[\nu_1: \nu_2\right]$. Merging double edges corresponds to removing a degree 2 vertex of the t-embedding, which due to the angle condition has to lie on a line with its two adjacent vertices.
        \item Contracting a degree 2 vertex corresponds to removing a degree two face of a t-embedding which can be seen as a diagonal of a face of the t-embedding. Splitting a vertex of degree $d_1+d_2$ to two vertices of degrees $d_1+1$ and $d_2+1$ and adding a degree two vertex between them corresponds to adding a diagonal to the corresponding face of the t-embedding with respect to the structure of the split graph.
        \item Performing a spider move corresponds to a central move of points of t-embedding. The transformation is given by,
        $$
        \frac{\left(u_1-u\right)\left(u_3-u\right)}{\left(u_2-u\right)\left(u_4-u\right)}=\frac{\left(u_1-\tilde{u}\right)\left(u_3-\tilde{u}\right)}{\left(u_2-\tilde{u}\right)\left(u_4-\tilde{u}\right)}
        $$
        where $u$ and $\tilde u$ are points on Figure \ref{fig:embedding_moves}.
    \end{enumerate}
\end{prop}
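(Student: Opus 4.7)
The plan is to verify each of the three parts by carefully unpacking the two defining conditions of a t-embedding from Def.~\ref{def:t_embedding}, namely the angle sum condition at each inner vertex and the gauge-equivalence of edge lengths to dimer weights. Since the elementary moves are strictly local, in each case I would argue that only the small region of $\mathcal{T}(G^*)$ corresponding to the affected edges and faces needs to be altered, and that the defining conditions on the remainder of the embedding are automatically preserved.

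For part (1), I would start by noting that parallel edges between the same pair of faces $f_1, f_2$ in $G^*$ with weights $\nu_1, \nu_2$ bound a bigon whose two bounding edges are the same segment in $\mathcal{T}(G^*)$ only in the limit. More precisely, after merging, these two edges collapse to a single segment of length proportional to $\nu_1+\nu_2$ (which matches the merged weight by condition~(2) of Def.~\ref{def:t_embedding}). Conversely, to split we insert a new vertex $v$ of $G^*$ on this segment. The angle-sum condition at $v$ forces the angles on each side (black and white) to sum to $\pi$, which is precisely the statement that $v$ lies on the straight line between the two endpoints of the original segment. The proportion $[\nu_1:\nu_2]$ is then forced by condition~(2) applied to the two new segments.

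For part (2), a degree-$2$ vertex $u$ of $G$ corresponds to a bigon face of $G^*$, which in the t-embedding is degenerate: by the angle-sum condition applied at $u$, the two edges of this face must be collinear, so the bigon degenerates to a single segment inside a larger polygon. Contracting $u$ in $G$ removes the bigon and identifies that segment with a diagonal of the enclosing face of the t-embedding, whence the second bullet follows. Conversely, splitting a vertex of degree $d_1 + d_2$ into two vertices of degrees $d_1 + 1$ and $d_2 + 1$ joined by a new degree-$2$ vertex corresponds to drawing a diagonal in the appropriate face; the angle conditions at each of the three new vertices of $G^*$ can be checked to be compatible with the original angle condition at the split vertex.

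For part (3), the spider move is the subtle case and is the main obstacle. The local picture involves a central face with four white and four black corners whose t-embedding has a distinguished interior vertex $u$; after the move, $u$ is replaced by $\tilde u$. I would proceed as follows. First, using Prop.~\ref{prop:t_embedding_invariant}~(1) and (2), I would decompose a spider move into a sequence of simpler operations (edge splits, contractions, and gauges), so that it suffices to verify one algebraic identity relating $u$ and $\tilde u$. Then I would impose the angle-sum condition at $\tilde u$ together with the requirement that the four new edge lengths $|u_i - \tilde u|$ be gauge-equivalent to the transformed weights prescribed by the spider move in Fig.~\ref{fig:elt_moves}. Combining these local constraints produces a rational equation in $\tilde u$ with coefficients in $u_1, u_2, u_3, u_4$; after simplification this gives the claimed identity
\[
\frac{(u_1-u)(u_3-u)}{(u_2-u)(u_4-u)} = \frac{(u_1-\tilde u)(u_3-\tilde u)}{(u_2-\tilde u)(u_4-\tilde u)}.
\]
The hard part is checking that the weight transformation rule of the spider move and the angle condition together are equivalent to this single cross-ratio-type invariant; for this, I would rely on the detailed computation in \cite{circle_pattern} and its exposition in Section~2.14 of \cite{tower_AD_perfect}, and merely indicate the main steps here, since the full verification is purely a local planar computation independent of our generalized tower setup.
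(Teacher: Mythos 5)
This proposition is not proved in the paper at all: it is quoted from \cite{circle_pattern}, with the paper simply pointing to Section~2.14 of \cite{tower_AD_perfect} for the details. Your decision to defer the substantive verification of the central move to those references therefore matches the paper's treatment, and your local checks for parts (1) and (2) via the angle and weight conditions of Def.~\ref{def:t_embedding} are the standard arguments (collinearity of the new dual vertex from the angle condition, the ratio $[\nu_1:\nu_2]$ from gauge-equivalence of lengths, since both parallel edges share the same white and black endpoints). Two corrections, though. In part (2), the angle condition is imposed at vertices of $\cT(G^*)$, not ``at $u$'', which is a face of the embedding; in fact no angle condition is needed there, since a face with two straight edges joining the same pair of embedded points is automatically a degenerate segment. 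More seriously, in part (3) the proposed reduction of the spider move to ``a sequence of edge splits, contractions, and gauges'' is not available: the spider move is an independent elementary transformation and is not generated by the other moves (which is why it appears as a separate move in Figure~\ref{fig:elt_moves}); in \cite{circle_pattern} the relation for $\tilde u$ is derived by imposing the t-embedding conditions directly on the local configuration of Figure~\ref{fig:embedding_moves}, not by such a decomposition. Since you ultimately rely on that cited computation anyway, this misstep does not sink the plan, but the decomposition step should be deleted; note also that the displayed relation determines $\tilde u$ only as the second root of a quadratic whose other root is $u$ itself, which is the intended reading of the central move.
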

An example of these transformations is shown in Figure  \ref{fig:embedding_moves}. Let us focus briefly on the central move (rightmost image in Figure \ref{fig:embedding_moves}). Note that a face $(j,k)$ in the generalized tower graphs is shufflable only if $j+k+n$ is odd, where $n$ is the height of column $j+k$. Define the set $\Lambda^+:=\{(j,k,n)|j+k+n \text{ is odd}\}$. Suppose $(j,k,n-1)\in\Lambda^+$, then the central move says,
\[
\mathcal{T}_{n+1}(j,k) + \mathcal{T}_{n-1}(j,k) = \frac{1}{2}\left( \mathcal{T}_{n}(j+1,k) + \mathcal{T}_{n}(j-1,k) + \mathcal{T}_{n}(j,k+1) + \mathcal{T}_{n}(j,k-1) \right)
\]
where $\mathcal{T}_{n}(j,k)$ is the value of the t-embedding for face $(j,k)$ for the graph of rank $n$, with similar formulas holding for the other elementary moves. This gives us a recursive formula for computing the t-embedding provided we know the appropriate boundary conditions for the recurrence.

\begin{figure}[htbp]
    \centering
        \includegraphics[width=\textwidth]{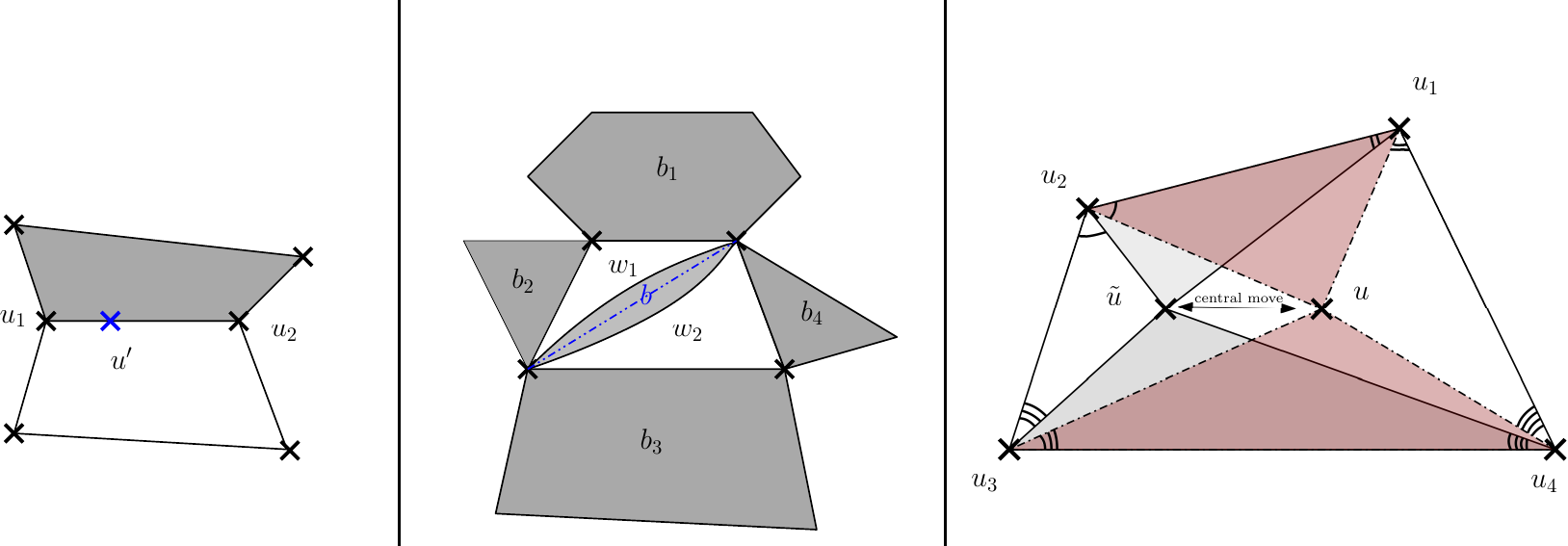}
    \caption{Transformation in the t-embedding corresponding to the transformation in $G$}
    \label{fig:embedding_moves}
\end{figure}

The general form of this recurrence is given in \cite{chelkak2021fluctuations}. We reproduce it here. Let $F: \Lambda^+\to \C$. Fix complex parameters $(b_0,b_{SE},b_{NE},b_{NW},b_{SW})$. The following conditions determine $F$ uniquely\footnote{In \cite{chelkak2021fluctuations}, solutions $F$ to the recurrence are said to solve the \emph{discrete wave equation in the cone}.}:
\begin{enumerate}
    \item $F(j,k,n)=0$ if $|j|+|k|\ge n$;
    \item Boundary condition at tip of the cone: $F(0,0,1)=b_0$;
    \item Boundary conditions at the edges: for each $n\ge 1$,
    \[
    \begin{aligned}
        F(-n,0,n+1) = \frac{1}{2}\left(F(-n+1,0,n)+b_{NW}\right), & \quad F(0,n,n+1) = \frac{1}{2}\left(F(0,n-1,n)+b_{NE}\right), \\
        F(0,-n,n+1) = \frac{1}{2}\left(F(0,-n+1,n)+b_{SW}\right), & \quad F(n,0,n+1) = \frac{1}{2}\left(F(n+1,0,n)+b_{SW}\right);
    \end{aligned}
    \]
    \item In the bulk: $|j|+|k|\le n$ 
    \[
    \begin{aligned}
    &F(j,k,n+1) + F(j,k,n-1) =\\
    &\frac{1}{2}\left(F(j+1,k,n-1) + F(j+1,k,n-1) + F(j,k+1,n-1) + F(j,k-1,n-1)\right).
    \end{aligned}
    \]
\end{enumerate}

By comparing the general recursion relation to the recursion relation one gets from the shuffling algorithm, one can prove the following proposition. Let $\cT_n(j,k) \in \C$ be the point in the perfect t-embedding corresponds to the face $(j,k)$ in $AD'_{n+1}$.

\begin{prop}[Proposition 2.7 of \cite{chelkak2021fluctuations}]
    \label{prop: AD_t_embedding}
    \hfill
    \begin{enumerate}
    \item The t-embedding of the reduced Aztec diamond $\mathcal{T}_n(j,k)$ solves the above recursion relation with $(b_0,b_{SE},b_{NE},b_{NW},b_{SW})=(0,1,i,-1,-i)$. 
    \item The origami map of the reduced Aztec diamond $\mathcal{O}_n(j,k)$ solves the above recursion relation with $(b_0,b_{SE},b_{NE},b_{NW},b_{SW})=(0,1,i,1,i)$.
    \end{enumerate}
\end{prop}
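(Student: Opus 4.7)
The plan is to proceed by induction on $n$, showing that both $\mathcal{T}_n$ and $\mathcal{O}_n$ satisfy the four conditions (support, tip, boundary-edge, and bulk) that uniquely determine a solution of the discrete wave equation in the cone. The main structural input is that $AD'_{n+1}$ is obtained from $AD'_n$ by a single pass of the shuffling algorithm, and that both the perfect t-embedding and its origami map are preserved by the elementary moves making up each shuffle, by Prop.~\ref{prop:t_embedding_invariant}.

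For the base case I would compute directly on the reduced rank-$2$ Aztec diamond $AD'_2$. By Prop.~\ref{prop:boundary4} together with the four-fold symmetry of the Aztec diamond, the outer face of $AD'_2$ is forced to be the tangential square with vertices $\{1, i, -1, -i\}$. The unique interior face at $(0,0)$ lies at the origin by symmetry, so $\mathcal{T}_1(0,0) = 0 = b_0$, and the normalization $\mathcal{O}(w_0) = \mathcal{T}(w_0)$ gives $\mathcal{O}_1(0,0) = 0$. The four corner values $(b_{SE}, b_{NE}, b_{NW}, b_{SW})$ for the t-embedding can be read off directly as $(1, i, -1, -i)$.

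The bulk recursion is literally the central-move identity
\[
\mathcal{T}_{n+1}(j,k) + \mathcal{T}_{n-1}(j,k) = \tfrac{1}{2}\bigl(\mathcal{T}_n(j+1,k) + \mathcal{T}_n(j-1,k) + \mathcal{T}_n(j,k+1) + \mathcal{T}_n(j,k-1)\bigr)
\]
displayed before the proposition, which follows from Prop.~\ref{prop:t_embedding_invariant}(3) at the spider move indexed by $(j,k,n-1) \in \Lambda^+$; the identical recursion for $\mathcal{O}$ holds since the origami map is also invariant under elementary moves. The boundary-edge recursion comes from the boundary shuffle rules of Figs.~\ref{fig:reducedboundaryshuffle}--\ref{fig:reducedboundaryshuffle2}: the new boundary face at rank $n+1$ along, say, the NW edge is placed at the midpoint of its rank-$n$ neighbor and the NW corner of the tangential polygon, producing $\mathcal{T}_{n+1}(-n, 0) = \tfrac{1}{2}(\mathcal{T}_n(-n+1, 0) + b_{NW})$ and analogously for the other three edges. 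The support condition is automatic since faces outside the cone are simply not present in $AD'_{n+1}$.

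The step I expect to be the main obstacle is identifying the origami boundary data $(b_{SE}, b_{NE}, b_{NW}, b_{SW}) = (1, i, 1, i)$, which differs from the t-embedding data by the pairwise identifications $\mathrm{NW} \leftrightarrow \mathrm{SE}$ and $\mathrm{SW} \leftrightarrow \mathrm{NE}$. These arise from the folding implicit in $d\mathcal{O} = \eta_w^2\, dz$ (respectively $\bar{\eta}_b^{\,2}\, d\bar{z}$) on white (black) faces: one propagates the origami square-root function $\eta$ along a boundary face path from the reference face to each outer corner and squares, and a direct calculation on $AD'_2$ with the standard choice $\eta_{w_0} = 1$ yields the stated pairwise identification. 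Elementary-move invariance then propagates these boundary constants to all ranks $n$, completing the induction.
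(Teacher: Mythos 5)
Your plan follows exactly the route the paper indicates: the paper does not prove this statement itself but cites Proposition 2.7 of \cite{chelkak2021fluctuations}, remarking only that one compares the general recursion relation with the recursion produced by the shuffling algorithm, which is precisely your induction via Prop.~\ref{prop:t_embedding_invariant} (central move in the bulk, edge-splitting at boundary columns giving the factor $\tfrac12$ toward the corner values, base case on $AD'_2$, and the folding computation fixing the origami corner data $(1,i,1,i)$). So the proposal is correct in outline and essentially the same approach as the paper's (cited) argument.
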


\begin{remark}\label{rmk:tembshuf}
    We would like to stress that in the recursive formula, each update is given by a local relation depending only on a single face and its neighbors (possibly including a boundary value) and corresponds to a local transformation of the corresponding graph coming from the shuffling algorithm. As the order of which one shuffles the faces in the shuffling algorithm does not effect the resulting graph, the same is true for the order in which one does updates for the t-embedding and origami map. 
\end{remark}

Similarly, one can show that the origami map also preserves under the elementary transformation in Figure \ref{fig:elt_moves}.

\begin{prop}[\cite{circle_pattern}]
    The local transformations in Figure \ref{fig:elt_moves} preserves the origami maps. Namely, the image of $u$ and $\tilde u$ in Figure \ref{fig:embedding_moves} under the origami map ${\mathcal O}$ satisfies, 
    $$
\frac{\left(\mathcal{O}\left(u_1\right)-\mathcal{O}(u)\right)\left(\mathcal{O}\left(u_3\right)-\mathcal{O}(u)\right)}{\left(\mathcal{O}\left(u_2\right)-\mathcal{O}(u)\right)\left(\mathcal{O}\left(u_4\right)-\mathcal{O}(u)\right)}=\frac{\left(\mathcal{O}\left(u_1\right)-\mathcal{O}(\tilde{u})\right)\left(\mathcal{O}\left(u_3\right)-\mathcal{O}(\tilde{u})\right)}{\left(\mathcal{O}\left(u_2\right)-\mathcal{O}(\tilde{u})\right)\left(\mathcal{O}\left(u_4\right)-\mathcal{O}(\tilde{u})\right)}
$$
\end{prop}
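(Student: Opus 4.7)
The approach is to exploit the piecewise isometric structure of the origami map. From Definition \ref{def:origami_map} and \eqref{eq:origami_diff_form}, the restriction of $\cO$ to a white face $F$ is the direct similarity $z \mapsto \eta_F^2 z + c_F$, and the restriction to a black face is the orientation-reversing similarity $z \mapsto \bar{\eta}_F^2 \bar z + c_F$. In particular, whenever two t-embedding vertices $p, q$ lie on a common face $F$, the difference $\cO(q) - \cO(p)$ equals $\eta_F^2(q - p)$ if $F$ is white and $\bar{\eta}_F^2(\bar q - \bar p)$ if $F$ is black.

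For the spider move, the central vertex $u$ of $\cT$ lies at the common corner of four triangular faces with outer vertices $u_1, u_2, u_3, u_4$, alternating in color by bipartiteness of $G$. Up to relabeling we may assume the triangles $F_{12}$ (containing $u, u_1, u_2$) and $F_{34}$ (containing $u, u_3, u_4$) are both white. Evaluating each difference on one of these common faces yields
\[
\cO(u_i) - \cO(u) = \eta_{12}^{2}(u_i - u) \ \text{ for } i=1,2, \qquad \cO(u_i) - \cO(u) = \eta_{34}^{2}(u_i - u) \ \text{ for } i=3,4,
\]
so the $\eta^{2}$ factors cancel in the cross-ratio and
\[
\frac{\left(\cO(u_1) - \cO(u)\right)\left(\cO(u_3) - \cO(u)\right)}{\left(\cO(u_2) - \cO(u)\right)\left(\cO(u_4) - \cO(u)\right)} = \frac{(u_1 - u)(u_3 - u)}{(u_2 - u)(u_4 - u)}.
\]
The same computation after the spider move, with $\tilde u$ in place of $u$ and the face coloring about $\tilde u$ inherited from that about $u$, gives the analogous equality, and combining the two with the t-embedding identity of Proposition \ref{prop:t_embedding_invariant}(3) delivers the desired relation for $\cO$. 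The remaining elementary moves of Figure \ref{fig:elt_moves} are immediate: they only add or remove a collinear point on an existing edge, or insert/erase a diagonal of a face, so the claim follows from consistency of $\cO$ with its affine restriction to each face.

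The step requiring the most care is the face-coloring bookkeeping: one must verify that the two white triangles around $u$ are $F_{12}$ and $F_{34}$ rather than $F_{23}$ and $F_{41}$, so that after cancellation one recovers the t-embedding cross-ratio rather than an expression involving mixed $\eta^{2}$ and $\bar\eta^{2}$ factors. If the opposite pairing occurs, the same argument applied to the two black triangles produces the complex conjugate of the t-embedding cross-ratio identity; since that identity is an equality of complex numbers, taking complex conjugates of both sides yields the same conclusion. This short case analysis, guided by the bipartite conventions in Figure \ref{fig:embedding_moves}, is what turns the one-line cancellation above into a complete proof.
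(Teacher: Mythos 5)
Your core computation is sound and is, in substance, the standard argument: on each white face $\cO$ restricts to $z\mapsto \eta_w^2 z+\mathrm{const}$ and on each black face to $z\mapsto \bar\eta_b^2\bar z+\mathrm{const}$, so an increment of $\cO$ between two dual vertices lying on a common face differs from the corresponding increment of $\cT$ by a unimodular factor $\eta^2$ (or by an anti-linear conjugation), and these factors cancel in the cross-ratio at the degree-four dual vertex. (That the incident faces are triangles is irrelevant; you only use that $u$, $u_i$, $u_{i+1}$ lie on a common face.) Note that the paper does not actually prove this proposition itself but defers to Corollary 2.15 of \cite{tower_AD_perfect}, so your self-contained derivation is a legitimate substitute and follows essentially the same route as the cited source.

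One step needs repair. You claim the face coloring about $\tilde u$ is ``inherited from that about $u$''; in fact the spider move exchanges black and white vertices of $G$ around the modified quad, so the four t-embedding faces incident to the central vertex swap colors after the move. Your closing case analysis also misplaces the danger: if $F_{23}$ and $F_{41}$ are the white faces, you can still route all four increments through those two white faces (each $u_i$ shares a white face with the central vertex, since the colors alternate), and the $\eta^2$ factors cancel exactly as before, giving the unconjugated cross-ratio --- no conjugation is forced by that pairing. The genuine issue is the mixed situation produced by the color flip: evaluating through white faces at $u$ but through what are now black faces at $\tilde u$ yields the $\cT$ cross-ratio on one side and its complex conjugate on the other, and your argument as written does not rule this out. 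Two clean fixes: (i) at each of $u$ and $\tilde u$ separately, always evaluate the four increments through the two white (or the two black) faces incident to that vertex; both sides are then the plain (respectively conjugated) $\cT$ cross-ratios, and the identity follows from Prop.~\ref{prop:t_embedding_invariant}(3), after conjugating both sides in the black case; or (ii) observe that by the angle condition of Def.~\ref{def:t_embedding} the two same-colored angles at a degree-four dual vertex sum to $\pi$, so $\frac{(u_1-u)(u_3-u)}{(u_2-u)(u_4-u)}$ is a negative real number and conjugation is immaterial. With either patch the proof is complete; your one-line treatment of the remaining elementary moves is fine.
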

\begin{proof}
    See Corollary 2.15 of \cite{tower_AD_perfect} for the detailed proof and visualization. 
\end{proof}

\subsection{Convergence to maximal surface}
Recall that to one of the assumptions in Thm. \ref{thm:Main_thm_CLR21} is the convergence of $(\mathcal{T}_n,\mathcal{O}'_n)$ to a maximal surface in $\R^{2,1}$. In the case of the Aztec diamond, the following proposition is conjectured in \cite{chelkak2021fluctuations} and proven in \cite{tower_AD_perfect} using the same asymptotic expansion of the edge probabilities we will use in Cor. \ref{cor: embedding_ori_prob}.
\begin{prop}
    \label{prop: convergence_max_surface}
     Let $\mathcal{T}_n(j, k)$ denote the image of the vertex $(j, k) \in\left(AD_{n+1}^{\prime}\right)^*$ under the perfect t-embedding of $AD_{n+1}^{\prime}$ and $\mathcal{O}_n^{\prime}(j, k)$ denote the image of this vertex under the origami map. Then,
$$
\begin{aligned}
\mathcal{T}_n(j, k) & =z(j / n, k / n)+o(1) \\
\mathcal{O}_n^{\prime}(j, k) & =\vartheta(j / n, k / n)+o(1)
\end{aligned}
$$
where $z(x, y)$ and $\vartheta(x, y)$ are smooth functions on the rescaled Aztec domain $\mathcal A:=\{(x,y): |x|+|y|<1\}$, defined by
$$
\begin{aligned}
z(x, y) & :=\Psi_E(x, y)+i \Psi_N(x, y)-\Psi_W(x, y)-i \Psi_S(x, y) \\
\vartheta(x, y) & :=\frac{1}{\sqrt{2}}\left(\Psi_E(x, y)-\Psi_N(x, y)+\Psi_W(x, y)-\Psi_S(x, y)\right)
\end{aligned}
$$
where the function $\Psi_{(.)}$ is given by
$$
\begin{aligned}
\Psi_E(x, y)&:=\int_0^1 \Psi_0(x-s, y, 1-s) d s \\
\Psi_N(x, y)&:=\int_0^1 \Psi_0(x, y-s, 1-s) d s \\
\Psi_W(x, y)&:=\int_0^1 \Psi_0(x+s, y, 1-s) d s \\
\Psi_S(x, y)&:=\int_0^1 \Psi_0(x, y+s, 1-s) d s 
\end{aligned}
$$
with
$$
\Psi_0(x, y, t):= \begin{cases}\frac{1}{\pi}\left(t^2-2 x^2-2 y^2\right)^{-1 / 2} & \text { if } x^2+y^2<\frac{1}{2} t^2 \\ 0 & \text { otherwise }\end{cases}
$$
and the o(1) error is uniform for $(j / n, k / n)$ in compact subsets of $\mathcal{A}$.
\end{prop}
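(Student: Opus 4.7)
The plan is to extract the continuum limit directly from the discrete recursion satisfied by $\mathcal{T}_n$ and $\mathcal{O}'_n$ given in Prop.~\ref{prop: AD_t_embedding}, viewing that recursion as a discrete wave equation. The bulk relation rewrites as
\[
F(j,k,n+1) - 2F(j,k,n) + F(j,k,n-1) = \tfrac{1}{2}\bigl[\Delta_j + \Delta_k\bigr] F(j,k,n),
\]
where $\Delta_j,\Delta_k$ are the standard nearest-neighbor discrete Laplacians. Under the hyperbolic rescaling $(j,k,n) = (nx, ny, n)$, this converges to the continuum 2D wave equation $\partial_t^2 F = \tfrac{1}{2}(\partial_x^2 + \partial_y^2) F$, whose retarded fundamental solution is exactly
\[
G(x,y,t) = \frac{1}{\pi\sqrt{t^2 - 2(x^2+y^2)}}\,\mathbf{1}_{\{x^2+y^2 < t^2/2\}} = \Psi_0(x,y,t),
\]
namely the 2D Green's function with wave speed $c=1/\sqrt{2}$.

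Next, along each of the four edges of the discrete cone $\{|j|+|k|\le n\}$, the edge recursion contracts values on that edge toward the associated constant $b_\bullet$ at geometric rate $1/2$, so in the rescaled limit one is solving the wave equation inside $\{(x,y,t):|x|+|y|<t\}$ with prescribed constants $b_{SE}, b_{NE}, b_{NW}, b_{SW}$ on the four bounding rays through the origin. By linearity one may superpose four elementary solutions, and each of $\Psi_E, \Psi_N, \Psi_W, \Psi_S$ is, by inspection of its integral definition, the Duhamel integral of $\Psi_0$ against a unit source travelling along the corresponding ray. Substituting the boundary data $(0,1,i,-1,-i)$ for the t-embedding and $(0,1,i,1,i)$ for the origami map, and then applying the translation-rotation $\mathcal{O}_n\mapsto e^{i\pi/4}(\mathcal{O}_n - (1+i)/2)$ to obtain $\mathcal{O}'_n$ (which sends the boundary quintuple to $(-i/\sqrt 2,\,1/\sqrt 2,\,-1/\sqrt 2,\,1/\sqrt 2,\,-1/\sqrt 2)$), yields exactly the claimed formulas for $z$ and $\vartheta$; the $b_0$ contribution is a point source at the cone tip and thus disappears from the bulk asymptotic.

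The principal obstacle is upgrading this formal continuum limit to convergence that is \emph{uniform} on compact subsets of $\mathcal{A}$. Two features complicate the analysis: the boundary data in the continuum problem is singular, being supported on one-dimensional rays, and the kernel $\Psi_0$ diverges on the light cone $x^2+y^2 = t^2/2$, so uniformity breaks down near the arctic circle. Restricting to a compact $K\subset\mathcal{A}$ handles the second issue, but one still needs a quantitative discrete stability statement. One route is a discrete energy/maximum-principle estimate for the wave recursion that absorbs the exponentially small error between the true edge value and its limit constant $b_\bullet$; this would give an $O(n^{-\alpha})$ bound uniform on $K$. A cleaner alternative, which is the one followed in \cite{tower_AD_perfect} and which the present paper reuses via Cor.~\ref{cor: embedding_ori_prob}, expresses the increments $d\mathcal{T}_n(bw^*)$ as explicit linear combinations of edge-inclusion probabilities for the uniformly weighted Aztec diamond. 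Those probabilities admit standard double contour integral representations from Kasteleyn theory, and a saddle-point analysis simultaneously produces $\Psi_0$ as the local density, the four $\Psi$-functions from the four families of dimer orientations, and uniform-on-compacts error bounds with $o(1)$ controlled explicitly by the steepest-descent estimates. Either approach gives the stated convergence.
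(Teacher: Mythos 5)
Your rigorous route is essentially the paper's: the paper does not reprove this proposition but invokes Cor.~\ref{cor: embedding_ori_prob} together with the Chhita--Johansson--Young double contour integral formulas and the steepest-descent analysis carried out in Section~5 of \cite{tower_AD_perfect}, which is exactly the ``cleaner alternative'' you end on, so on that count your proposal matches the intended proof. The wave-equation derivation you lead with is a genuinely different (and illuminating) way to see where the formulas come from: identifying the bulk recursion of Prop.~\ref{prop: AD_t_embedding} as a discrete wave equation with speed $1/\sqrt{2}$, recognizing $\Psi_0$ as the corresponding 2D retarded Green's function, and reading $\Psi_E,\Psi_N,\Psi_W,\Psi_S$ as Duhamel integrals against sources moving along the four boundary rays correctly reproduces $z$ and $\vartheta$, including the affine change $\mathcal{O}_n \mapsto e^{i\pi/4}(\mathcal{O}_n - \tfrac{1+i}{2})$, whose action on the boundary quintuple you compute correctly. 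But as you acknowledge, this heuristic is not yet a proof: the conversion of the moving Dirichlet-type edge relations into ray sources of unit strength, the vanishing of the tip contribution, and above all the \emph{uniform} convergence on compact subsets of $\mathcal{A}$ all require a quantitative stability estimate that you only gesture at; note also that for $\vartheta$ to come out real after the rotation one needs the identity $\Psi_E+\Psi_N+\Psi_W+\Psi_S=1$ in the liquid region, which your superposition argument uses implicitly but never states. In short: the contour-integral/saddle-point route you defer to is the paper's proof and suffices; the PDE limit is a nice complementary explanation but, as written, would need the discrete energy or maximum-principle estimate you mention to stand on its own.
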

\begin{cor}
    \label{cor: AD_ori_convergence}
 On compact subsets of $\lozenge$
$$
\mathcal{O}_n^{\prime}(z) \rightarrow \vartheta(z)
$$
uniformly as $n\to \infty$, where the graph of $\vartheta$ is the maximal surface $S_{\lozenge} \subset \mathbb{R}^{2,1}$ with boundary contour $C_{\lozenge}$.
\end{cor}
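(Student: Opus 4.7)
The plan is to deduce the corollary from Proposition~\ref{prop: convergence_max_surface} in two steps: first upgrading the pointwise asymptotic expansion to uniform convergence on compact subsets, and then identifying the limit surface with $S_\lozenge$ via uniqueness of maximal surfaces with prescribed boundary.

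For uniform convergence, I would fix a compact $\mathcal K \subset \lozenge$ together with a slightly larger compact $\mathcal K'$ satisfying $\mathcal K \subset \operatorname{int}\mathcal K' \subset \lozenge$. For each $z \in \mathcal K$, choose a face $(j,k)$ of $\mathcal T_n$ with $|z - (j+ik)/n| = O(1/n)$, which places $(j/n, k/n) \in \mathcal K'$ for $n$ large. Proposition~\ref{prop: convergence_max_surface} gives $|\mathcal O_n'(j,k) - \vartheta(j/n, k/n)| = o(1)$ uniformly over such lattice points, while smoothness of $\vartheta$ on $\mathcal K'$ gives $|\vartheta(j/n,k/n) - \vartheta(z)| = O(1/n)$. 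To interpolate inside a face, I would use that $\mathcal O_n'$ is obtained by folding $\mathcal T_n$ along edges, so that $|d\mathcal O_n'| = |d\mathcal T_n|$; the expansion for $\mathcal T_n$ in Proposition~\ref{prop: convergence_max_surface} shows that neighboring lattice values differ by $O(1/n)$ on compact subsets, and hence the oscillation of $\mathcal O_n'$ within a single face is also $O(1/n)$. Assembling these three bounds yields $\sup_{z\in\mathcal K} |\mathcal O_n'(z) - \vartheta(z)| \to 0$.

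For the identification of the limit with $S_\lozenge$, I would invoke the Kobayashi uniqueness theorem cited in the introduction: since $S_\lozenge$ is the unique space-like zero-mean-curvature surface in $\mathbb R^{2,1}$ with boundary contour $C_\lozenge$, it is enough to check that the graph of $\vartheta$ over $\lozenge$ enjoys these three properties. Boundary values follow by evaluating $(z, \vartheta)$ on $\{|x|+|y| = 1\}$ and exploiting the support condition $x^2 + y^2 < \tfrac12 t^2$ in $\Psi_0$, which collapses the potentials $\Psi_{E,N,W,S}$ on $\partial\lozenge$ into explicit pieces assembling to the diamond contour $C_\lozenge$. The space-like and zero mean-curvature properties can in principle be checked by direct differentiation, recognizing the integrals for $(z, \vartheta)$ as a Weierstrass-type representation of a maximal surface with light-cone density $\Psi_0$. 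The cleanest route, however, is to cite the verification carried out in~\cite{tower_AD_perfect}, where the same integral formulas arise for the reduced Aztec diamond and this identification is performed explicitly.

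The principal obstacle is the direct algebraic verification of the maximal surface PDE $(1 - \vartheta_y^2)\vartheta_{xx} + 2\vartheta_x\vartheta_y\vartheta_{xy} + (1 - \vartheta_x^2)\vartheta_{yy} = 0$ from the integral representation, which is computationally heavy. I would avoid this by quoting~\cite{tower_AD_perfect} for the maximal surface identification, so that the only genuinely new work for the corollary is the short interpolation argument of the first step.
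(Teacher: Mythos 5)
Your second step is fine and is in fact all the paper itself does: the paper gives no argument for this corollary, deferring both Prop.~\ref{prop: convergence_max_surface} and the corollary to Section 5 of \cite{tower_AD_perfect}, so quoting that reference (or \cite{Kobayashi83} for uniqueness) for the maximal-surface identification is consistent. The genuinely new step you propose, however, has a real gap. You locate the face relevant to a point $z\in\mathcal K\subset\lozenge$ by requiring $|z-(j+\mathrm{i}k)/n|=O(1/n)$ and then compare $\mathcal O_n'(j,k)$ with $\vartheta(j/n,k/n)$. This conflates the rescaled lattice coordinate $(j/n,k/n)$ with the embedded position $\mathcal T_n(j,k)$. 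The corollary --- and Assumption~\ref{hyp:convergence} of Thm.~\ref{thm:Main_thm_CLR21}, which it is meant to verify --- concerns the graph of $\mathcal O_n'$ \emph{over} $\mathcal T_n$: for $z$ in the embedded picture, $\mathcal O_n'(z)$ is the value of the origami map on the face of $\mathcal T_n(G^*)$ lying at $z$, and the limit $\vartheta(z)$ is the function whose graph over $\lozenge$ is $S_\lozenge$, i.e.\ $\vartheta(x,y)$ of Prop.~\ref{prop: convergence_max_surface} evaluated at the $(x,y)$ with $z(x,y)=z$. For the Aztec diamond the map $(x,y)\mapsto z(x,y)$ is far from the identity: by Cor.~\ref{cor: embedding_ori_prob} it is built from edge probabilities, so the four frozen regions degenerate into $\partial\lozenge$ (essentially to its corners) and only the liquid region $\{x^2+y^2<\tfrac12\}$ maps onto $\lozenge$. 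Hence the face of $\mathcal T_n$ near a given $z$ has lattice coordinates near the preimage of $z$ under $z(\cdot,\cdot)$, not near $nz$, and your bound $|\vartheta(j/n,k/n)-\vartheta(z)|=O(1/n)$ compares $\vartheta$ at two unrelated points. As written, the interpolation argument only restates Prop.~\ref{prop: convergence_max_surface} in lattice coordinates; it does not yield the corollary.

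To repair the step you need exactly the ingredients the shortcut skips: (i) that $z(\cdot,\cdot)$ is a homeomorphism (indeed diffeomorphism) from the liquid region onto $\lozenge$, so that a compact $\mathcal K\subset\lozenge$ pulls back to a compact subset of $\{x^2+y^2<\tfrac12\}$ on which the composed function is well defined and continuous; and (ii) the uniform convergence $\mathcal T_n(j,k)=z(j/n,k/n)+o(1)$ together with this injectivity and the degeneration of the frozen regions, to conclude that every face whose image meets $\mathcal K$ has $(j/n,k/n)$ in a slightly larger compact subset of the liquid region. With those facts in hand, your remaining observations (faces have diameter $o(1)$ on compacts, the origami map is a piecewise isometry so oscillation within a face is controlled) do assemble into uniform convergence; but (i)--(ii), as well as the identification of the limiting graph with $S_\lozenge$, are precisely the content of Section 5 of \cite{tower_AD_perfect} that the paper is citing, so they cannot be waved through.
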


The proof for Prop. \ref{prop: convergence_max_surface} and Cor. \ref{cor: AD_ori_convergence} are written in detail in Section 5 of \cite{tower_AD_perfect} so we will not repeat here.

\subsection{Perfect t-embeddings of generalized tower graphs}

As the generalized tower graphs are also constructed by shuffling, Rmk. \ref{rmk:tembshuf} and Prop. \ref{prop:boundary4} ensure that perfect t-embeddings also exist, although in general they do not satisfy such a simple recursive formula as that of the Aztec diamond. However, we may relate the t-embeddings of the generalized tower graphs to those of the Aztec diamond through the following lemma. 

\begin{lemma}
\label{lem: gen_tower_AD_coord}
    Let $\widetilde \cT_n(j,k)$ be the t-embedding of the reduced generalized tower graph of rank $n+1$ approximating $f$. Then 
\[
\begin{aligned}
     \widetilde \cT_n(j,k) &= \cT_{\rr}(j,k) \\
     \widetilde \cO_n(j,k) &= \cO_{\rr}(j,k) \\
\end{aligned}
\]
with $r(n;j,k)+j+k$ odd. Moreover, 
\[
\rr=(n+1) \bar f\left(\frac{j+k}{n+1}\right)+\epsilon(j,k)
\]
with $|\epsilon(j,k)|\le 2 $. 
\end{lemma}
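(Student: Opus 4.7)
The strategy is to prove the identities by strong induction on the number of column shuffles used to construct the generalized tower graph from $AD_0$. The inductive statement is: at every intermediate graph $G$ in the construction with current up-down path $H$, one has $\widetilde{\mathcal{T}}_G(j,k) = \mathcal{T}_{H(j+k)}(j,k)$ for every face $(j,k)$ of $G$, with the analogous identity for the origami map. Applying this at the terminal state $H = H_n^f$ yields the lemma with $r(n;j,k) = H_n^f(j+k)$. The parity condition $r + j + k$ odd is automatic because $H_0(c) + c = |c| - 1 + c$ is odd for every $c$ and each shuffle increments a single $H(c)$ by $+2$, preserving this parity. Moreover, column $c = j+k$ in $AD'_{r+1}$ has exactly $r$ faces when $r+c$ is odd (by a direct computation of Aztec diamond column heights), matching the column height $H_n^f(c) = r$ in the generalized tower graph and thereby matching the number of spider moves performed at $(j,k)$ in the two constructions.

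For the inductive step, consider a shuffle at a shufflable column $c^\ast$. Since $c^\ast$ is a strict local minimum of $H$ and the path has step size one, we have $H(c^\ast \pm 1) = H(c^\ast) + 1$; set $r := H(c^\ast)$. For each face $(j,k)$ with $j + k = c^\ast$, the central move (Prop.~\ref{prop:t_embedding_invariant}) applied to $\widetilde{\mathcal{T}}$ gives
\begin{equation*}
\widetilde{\mathcal{T}}_{\text{new}}(j,k) + \widetilde{\mathcal{T}}_{\text{old}}(j,k) = \tfrac{1}{2}\bigl(\widetilde{\mathcal{T}}(j+1,k) + \widetilde{\mathcal{T}}(j-1,k) + \widetilde{\mathcal{T}}(j,k+1) + \widetilde{\mathcal{T}}(j,k-1)\bigr).
\end{equation*}
By the inductive hypothesis, $\widetilde{\mathcal{T}}_{\text{old}}(j,k) = \mathcal{T}_r(j,k)$ and each of the four neighbors (lying in columns $c^\ast \pm 1$ of current height $r+1$) satisfies $\widetilde{\mathcal{T}}(j',k') = \mathcal{T}_{r+1}(j',k')$. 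Combining with the Aztec diamond recursion of Prop.~\ref{prop: AD_t_embedding}, we obtain $\widetilde{\mathcal{T}}_{\text{new}}(j,k) = \mathcal{T}_{r+2}(j,k) = \mathcal{T}_{H_{\text{new}}(c^\ast)}(j,k)$, closing the induction. The argument for $\widetilde{\mathcal{O}}$ is identical, using the analogous central-move formula for the origami map.

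The bound $|\epsilon(j,k)| \le 2$ then follows from equation~\eqref{eq:error}. Setting $x = (j+k)/(n+1)$ and substituting $r = H_n^f(j+k)$, the inequality $|(n+1)\bar f(x) - H_n^f((n+1)x) + 1| \le 2$ rearranges (after a harmless parity-preserving shift of $r$ by $\pm 1$ absorbed into $\epsilon$) to $|r - (n+1)\bar f((j+k)/(n+1))| \le 2$.

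The main technical difficulty is the boundary initialization at each shuffle's decoration step, where newly-added boundary faces acquire initial t-embedding and origami values that must be shown to agree with those prescribed by the Aztec diamond's edge boundary conditions $(b_{SE}, b_{NE}, b_{NW}, b_{SW})$ from Prop.~\ref{prop: AD_t_embedding}. This compatibility check will use the tangential polygon property of perfect t-embeddings (Defn.~\ref{defn: perfect_t_embedding}) together with the degree-4 boundary of the reduced generalized tower graphs (Prop.~\ref{prop:boundary4}). Once the boundary values are matched, the inductive step propagates cleanly through the interior, and the order-independence of shuffling (Rmk.~\ref{rmk:tembshuf}) guarantees that the resulting identity does not depend on the particular shuffling sequence chosen.
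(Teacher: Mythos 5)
Your proposal is correct and takes essentially the same route as the paper: the paper likewise identifies $\widetilde\cT_n(j,k)$ with $\cT_{H_n^f(j+k)}(j,k)$ by noting that the embedding of a face changes only when a shuffle occurs at that face and that the local updates coincide with the Aztec diamond recursion of Prop.~\ref{prop: AD_t_embedding}, your shuffle-by-shuffle induction being just an explicit rendering of the paper's appeal to Rmk.~\ref{rmk:tembshuf}, and the bound on $\epsilon(j,k)$ is likewise read off \eqref{eq:error}. Two cosmetic quibbles: with your parity bookkeeping $r=H_n^f(j+k)$ exactly, so no ``shift of $r$ by $\pm 1$'' is available (it would break the parity), and \eqref{eq:error} literally yields $|\epsilon(j,k)|\le 3$ rather than $2$ --- a looseness the paper's own proof shares and which is immaterial for the later asymptotics, just as the boundary-initialization check you defer is left implicit in the paper as well.
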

\begin{proof}
    Note that when constructing the t-embedding (or origami map) via the shuffling algorithm, the image of a face only changes when there is local transformation of $(G^f_n)'$ at that face. In particular, when a column $c$ has been shuffled until it reaches its desired height $H_n^f(c)$, the embedding of the faces in that column no longer change. In light of Rmk. \ref{rmk:tembshuf}, the embedding of faces in column $c$ of the  $(G^f_n)'$ will be the same as those of a reduced Aztec diamond whose the height at column $c$ is equal to $H_n^f(c)$. In fact, we have two choices here since, for $r+j+k$ odd, both the Aztec diamond of rank $r$ and rank $r-1$ have the same height at column $j+k$.
        
    Let $\rr$ be the choice of for which $r(n;j,k)+j+k$ is odd. We have
    \[
    \widetilde \cT_n(j,k) = \cT_{r(n;j,k)}(j,k).
    \]
    From Eqn. \eqref{eq:error}, we have
    \[
     0 \le (n+1) \bar f\left(\frac{j+k}{n+1}\right) - H_n^f(j+k) \le 1.
    \] 
    Letting $\epsilon(j,k) = r(n;j,k) - (n+1) \bar f\left(\frac{j+k}{n+1}\right)$ and noting that $r(n;j,k)$ differs from $H_n^f(j+k)$ by at most one, we
    have
    \[
    \rr=(n+1) \bar f\left(\frac{j+k}{n+1}\right)+\epsilon(j,k)
    \]
    with $|\epsilon(j,k)|\le2$, as desired.
\end{proof}
Note that the points in the embedding are always from faces in the interior columns of the graph, so we may drop the bar on the $f$ without issue. We will use only $f$ instead of $\bar f$ whenever referring to $\rr=(n+1) \bar f\left(\frac{j+k}{n+1}\right)+\epsilon(j,k)$ from now on.

An example of the perfect t-embedding for the Aztec diamond and tower graph of rank 3 are given in Figure \ref{fig:perfecttembEx}. The perfect t-embeddings of the reduced generalize tower graphs of rank 15 approximating $f(x) =\frac{1}{3}x^2+\frac{2}{3}$ and $f(x)=\frac{1}{3\pi}\sin(2\pi x)+1$ are drawn in Figure \ref{fig:perfecttembEx2}.

\begin{figure}
    \centering
    \[
    \includegraphics[width=0.2\textwidth]{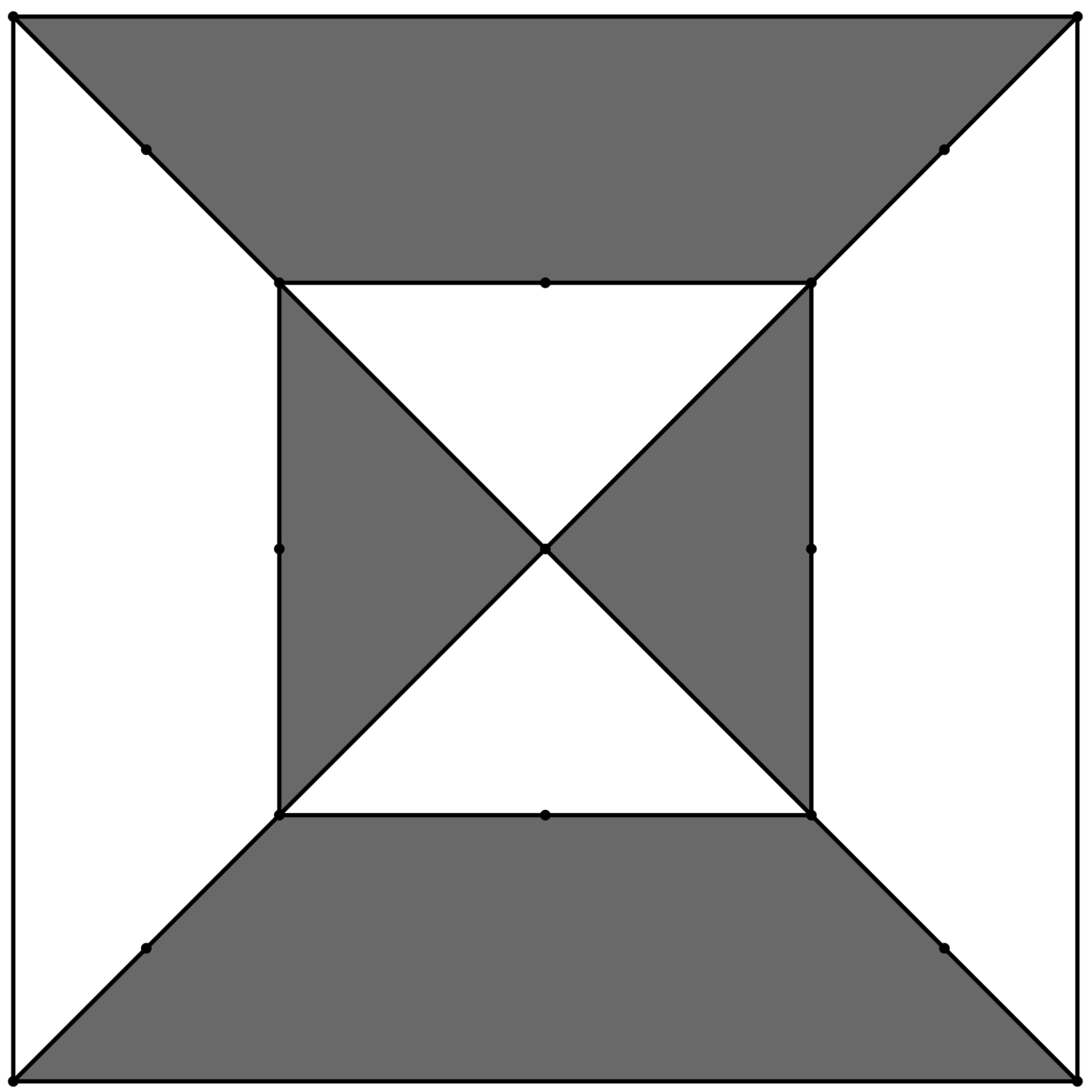} 
   \qquad \qquad
    \includegraphics[width=0.2\textwidth]{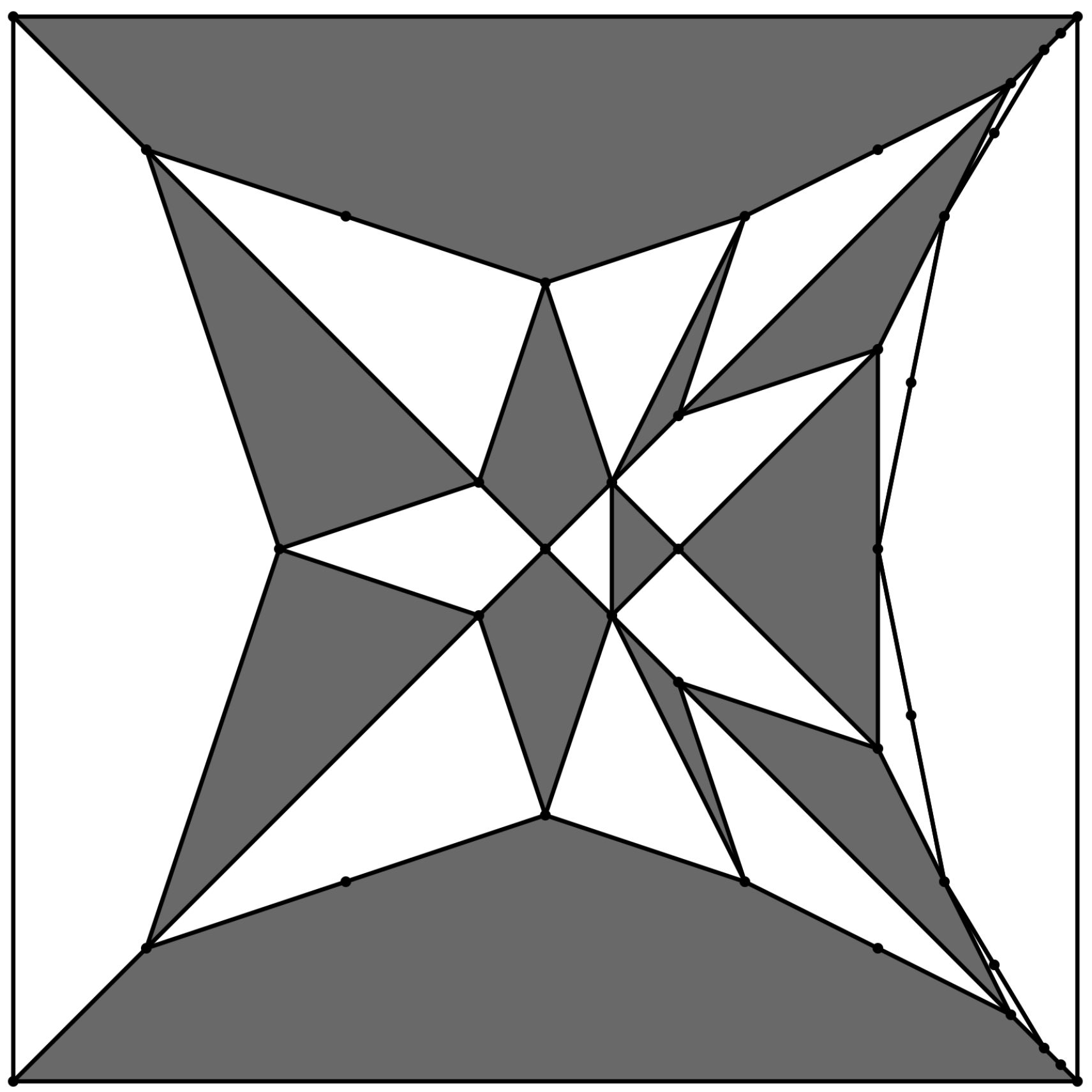} 
    \]
    \caption{Left: The perfect t-embedding of the reduced Aztec diamond of rank 3. Right:  The perfect t-embedding of the reduced tower graph of rank 3.}
    \label{fig:perfecttembEx}
\end{figure}

\begin{figure}
    \centering
    \[
    \includegraphics[width=0.2\textwidth]{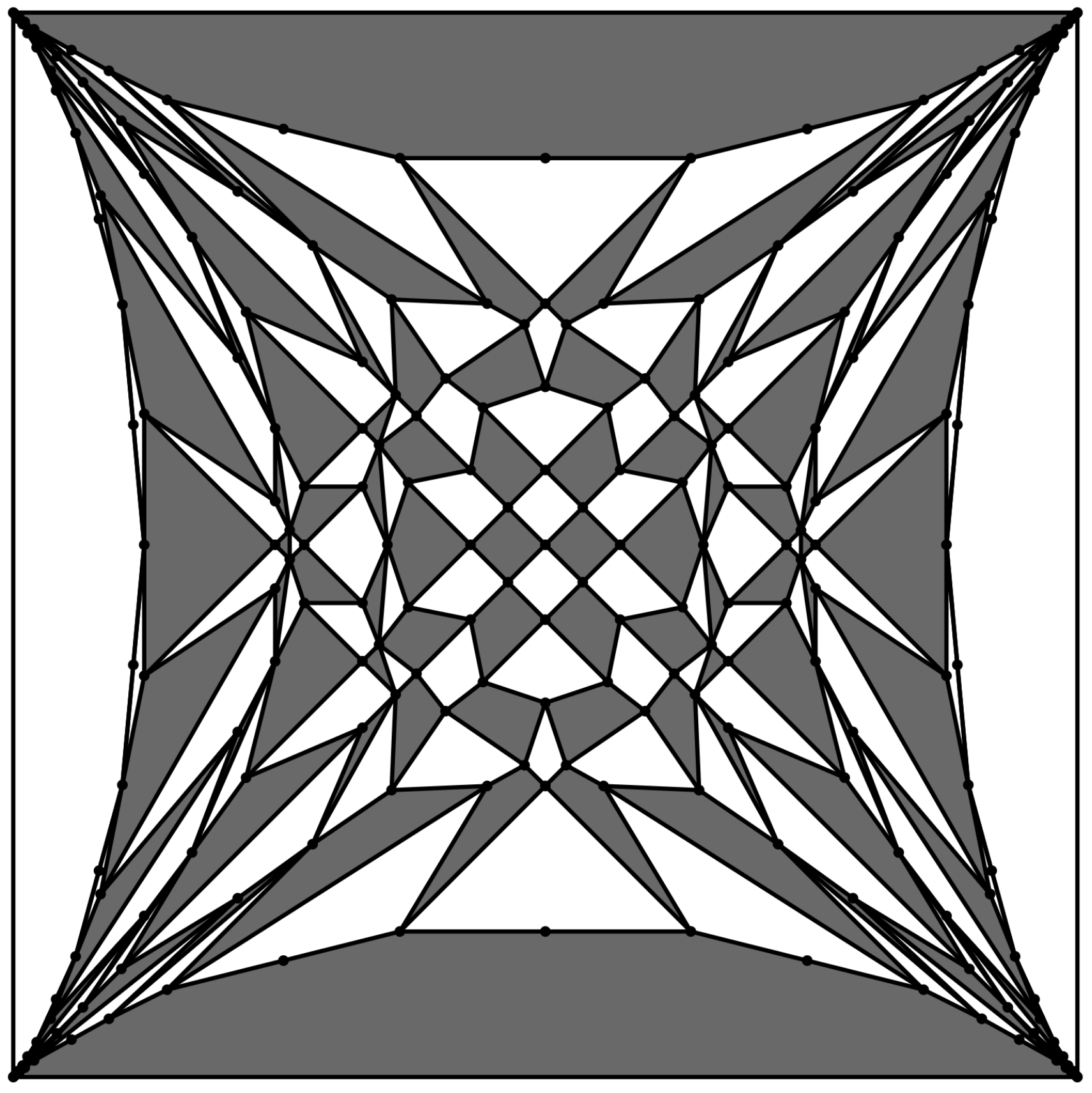} 
   \qquad \qquad
    \includegraphics[width=0.2\textwidth]{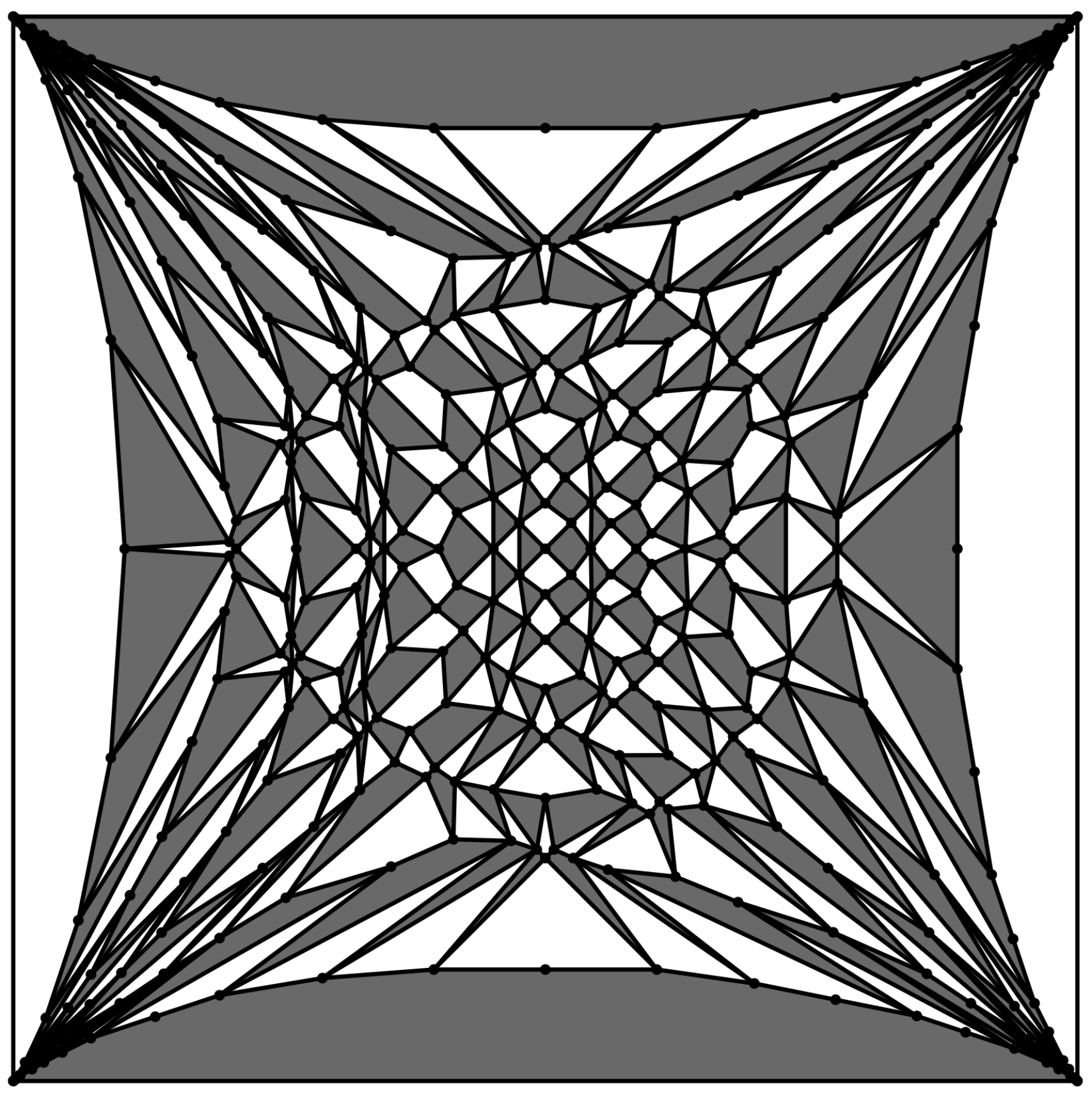} 
    \]
    \caption{Left: The perfect t-embedding of the reduced generalized tower graph of rank 15 approximating $f(x)=\frac{1}{3}x^2+\frac{2}{3}$. Right:  The perfect t-embedding of the reduced generalized tower graph of rank 15 approximating $f(x)=\frac{1}{3\pi}\sin(2\pi x) +1$.}
    \label{fig:perfecttembEx2}
\end{figure}

Recall that in \cite{chelkak2021fluctuations}, it is shown that the pair of t-embedding and origami map $(\cT_n, \cO'_n)$ of the Aztec diamond converges to the maximal surface $\{(z(x, y), \vartheta(x, y)):(x, y) \in \lozenge\}$ in the Minkowski space $\mathbb{R}^{2,1}$. Prop. \ref{prop: convergence_max_surface}, together with Lem. \ref{lem: gen_tower_AD_coord}, show that $(\mathcal{\widetilde T}_n, \mathcal{\widetilde O}'_n)$ converges uniformly to the same surface. 

\begin{cor}
    \label{cor: gen_tower_limit_surface}
    We have that
    \[
    \left(\mathcal{T}_n(\lfloor n x \rfloor, \lfloor n y \rfloor), \mathcal{O}_n^{\prime}(\lfloor n x \rfloor, \lfloor n y \rfloor)\right) \to \left(z\left(\frac{x}{f(x+y)},\frac{y}{f(x+y)}\right),\theta\left(\frac{x}{f(x+y)},\frac{y}{f(x+y)}\right)\right)
    \]
    converges uniformly on compact subset of the limiting generalized tower graph
    \[
    \begin{aligned}
        &-1 < x+y < x_* \\
        &-f(x+y) < x-y < f(x+y)
    \end{aligned}
    \]
    where $f(x_*)=x_*$. Moreover, as $(x,y)$ range over the rescaled generalized tower graph, the points $\left(\frac{x}{f(x+y)},\frac{y}{f(x+y)}\right)$ range over the limiting Aztec diamond $|x|+|y|<1$.
\end{cor}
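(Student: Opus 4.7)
The plan is to combine Lem. \ref{lem: gen_tower_AD_coord}, which identifies the tower t-embedding with an Aztec diamond t-embedding at an appropriate effective rank, with the asymptotic expansion of Prop. \ref{prop: convergence_max_surface}. Setting $j=\lfloor nx\rfloor$ and $k=\lfloor ny\rfloor$, Lem. \ref{lem: gen_tower_AD_coord} yields
\[
\widetilde{\mathcal{T}}_n(j,k) = \mathcal{T}_{r}(j,k), \qquad \widetilde{\mathcal{O}}'_n(j,k) = \mathcal{O}'_{r}(j,k),
\]
where $r=r(n;j,k) = (n+1)f\!\left(\tfrac{j+k}{n+1}\right)+\epsilon(j,k)$ with $|\epsilon(j,k)|\le 2$. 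Since $f$ is $1$-Lipschitz on $[-1,x_*]$ and $\tfrac{j+k}{n+1}=x+y+O(1/n)$, on any compact subset $K$ of the limiting tower domain the ratio $r/n$ converges to $f(x+y)$ uniformly, and hence $(j/r,\,k/r)\to (x/f(x+y),\,y/f(x+y))$ uniformly on $K$.

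The main step is then to apply Prop. \ref{prop: convergence_max_surface} at rank $r$, which states
\[
\mathcal{T}_{r}(j,k) = z(j/r,k/r)+o(1), \qquad \mathcal{O}'_{r}(j,k) = \vartheta(j/r,k/r)+o(1),
\]
uniformly provided $(j/r,k/r)$ lies in a compact subset of the rescaled Aztec domain $\mathcal{A}=\{|u|+|v|<1\}$. Combining with the uniform convergence of the ratios above and the continuity of $z$ and $\vartheta$ on $\mathcal{A}$ then yields the claimed limit, uniformly on $K$.

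The technical heart of the argument, and what I expect to be the main obstacle, is verifying that the change of variables $\Phi\colon(x,y)\mapsto(x/f(x+y),\,y/f(x+y))$ carries compact subsets of the open tower domain into compact subsets of $\mathcal{A}$, together with the complementary surjectivity statement. The tower constraint $|x-y|<f(x+y)$ translates directly to $|u-v|<1$, while the remaining Aztec constraint $|u+v|<1$ is equivalent to $|x+y|<f(x+y)$. Both bounds hold strictly on the open domain under the standing assumptions: $f(s)>s$ on $[-1,x_*)$ follows from $|f'|\le 1$ and the minimality of $x_*$ as a fixed point of $f$, while $f(s)>-s$ in the interior follows from $|f'|\le 1$ and $f(-1)=1$ together with a mild non-degeneracy observation (the borderline case $f(s)=-s$ on an initial segment, permitted by the Lipschitz hypothesis alone, is the subtlety one must address). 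Surjectivity then reduces to solving the fixed-point equation $s=(u+v)f(s)$ for $s\in(-1,x_*)$, which is handled by noting that the continuous function $s\mapsto s/f(s)$ takes the values $\pm 1$ at the endpoints and covers the interval $(-1,1)$.
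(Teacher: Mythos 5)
Your proposal follows the paper's proof essentially step for step: identify $\widetilde{\mathcal{T}}_n(j,k)$ with the Aztec diamond embedding at the effective rank $r(n;j,k)$ via Lem.~\ref{lem: gen_tower_AD_coord}, apply the uniform expansion of Prop.~\ref{prop: convergence_max_surface} at that rank, and check that $(x,y)\mapsto\bigl(x/f(x+y),\,y/f(x+y)\bigr)$ carries the open tower domain into the open Aztec diamond. The only difference is that you spell out the uniform convergence $r/n\to f(x+y)$, sketch surjectivity, and flag the borderline possibility $f(s)=-s$ on an initial segment, a point the paper disposes of with the bare assertion $-1<(x+y)/f(x+y)<1$; your extra care there is reasonable (it amounts to an implicit non-degeneracy of $f$) but does not change the argument.
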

\begin{proof}
    From Lem. \ref{lem: gen_tower_AD_coord}, we may write
    \[
    \left(\widetilde{\mathcal{T}}_n(j, k), \widetilde{\mathcal{O}}_n^{\prime}(j, k)\right) = \left(\mathcal{T}_{r(n;j,k)}(j,k),\mathcal{O}_{r(n;j,k)}(j,k)\right).
    \]
    Using Prop. \ref{prop: convergence_max_surface} this becomes
    \[
    \begin{aligned}
     \left(\widetilde{\mathcal{T}}_n(j, k), \widetilde{\mathcal{O}}_n^{\prime}(j, k)\right) &\;= \left(z\left(\frac{j}{r(n;j,k)},\frac{k}{r(n;j,k)}\right),\vartheta\left(\frac{j}{r(n;j,k)},\frac{k}{r(n;j,k)}\right)\right) + o(1) \\
     &\;\to \left(z\left(\frac{x}{f(x+y)},\frac{y}{f(x+y)}\right),\vartheta\left(\frac{x}{f(x+y)},\frac{y}{f(x+y)}\right)\right)
     \end{aligned}
    \]
    where $\left(\frac{j}{n+1},\frac{k}{n+1}\right)\to (x,y)$ and the convergence is uniform.

    Note that as $-1 < x+y < x_*$, we have
    \[
    -1 < \frac{x+y}{f(x+y)}  < 1  
    \]
    since $f(-1)=-1$ and $f(x_*)=x_*$. We also see that as $-f(x+y) < x-y < f(x+y)$ and it immediately follows that
    \[
    -1 < \frac{x-y}{f(x+y)}  < 1.
    \]
\end{proof}

By connecting the t-embeddings of the generalized tower graphs to those of the Aztec diamond, we may leverage the known exact formulas for the Aztec diamond to study the limiting behavior of the generalized tower graphs. Our main result is the following theorem whose proof of the rigidity assumption we delay to Section \ref{sect: rigidity}.

\begin{thm}
    \label{thm: main_thm_generalized_tower}
    Let $\widetilde{ \mathcal{T}}_n$ be the sequence of perfect t-embeddings of the reduced uniformly weighted generalized tower graph approximating $f$, as defined in Section \ref{sect: gen_shuffling}, with corresponding origami maps $\widetilde{\mathcal{O}}_n^{\prime}$ given in Lem. \ref{lem: gen_tower_AD_coord} . The assumptions \ref{hyp: domain}, \ref{hyp:convergence}, and \ref{hyp:technical} of Thm. \ref{thm:Main_thm_CLR21} hold for the sequence $\widetilde{\mathcal{T}}_n$, with the choice $\displaystyle\delta_n=\frac{\log n}{n}$.
\end{thm}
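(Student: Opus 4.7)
The plan is to verify the three hypotheses of Thm. \ref{thm:Main_thm_CLR21} separately, exploiting Lem. \ref{lem: gen_tower_AD_coord} at every step so that all asymptotic claims reduce to statements about the perfect t-embedding of the reduced Aztec diamond, for which Prop. \ref{prop: convergence_max_surface} already provides explicit formulas. The global structure is thus a transfer-of-structure argument: the generalized tower embedding is literally glued together from pieces of Aztec diamond embeddings of varying ranks $\rr$ specified by the up-down path $H_n^f$.

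For hypothesis \ref{hyp: domain}, I would let $\Omega$ be the image of the rescaled limiting generalized tower region $\{(x,y): -1<x+y<x_*,\ |x-y|<f(x+y)\}$ under the map $(x,y)\mapsto z(x/f(x+y), y/f(x+y))$. By Cor. \ref{cor: gen_tower_limit_surface} this is exactly the uniform limit of $\widetilde{\mathcal{T}}_n$ applied to the scaled integer points, and since the generalized tower region is simply connected and the parametrization $z$ is a homeomorphism onto its image on the open Aztec diamond, $\Omega$ is simply connected; Hausdorff convergence of $\Omega_{\widetilde{\mathcal{T}}_n}$ to $\Omega$ is immediate from the uniform convergence and the fact that boundary faces degenerate under the limit. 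For hypothesis \ref{hyp:convergence}, Cor. \ref{cor: gen_tower_limit_surface} already gives $\widetilde{\mathcal{O}}'_n\to\vartheta\circ\Phi$ uniformly on compacta, where $\Phi(x,y)=(x/f(x+y), y/f(x+y))$. Because the graph of $\vartheta$ over $z$ on the open Aztec diamond is the maximal surface $S_\lozenge\subset\mathbb{R}^{2,1}$, and the generalized tower surface is obtained by restricting to $\Phi$-image points of the tower region, it is an open subset of $S_\lozenge$; maximality (zero mean curvature, space-like) is a local property and is therefore inherited.

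For hypothesis \ref{hyp:technical}, I would invoke the result cited from \cite{tower_AD_perfect} that the rigidity assumption \ref{assumption: Rigidity} implies both LIP and EXP-FAT at the scale $\delta_n=\log n/n$, and prove rigidity directly. Fix a compact $\mathcal{K}\subset\Omega$. For an edge of the reduced generalized tower graph between adjacent dual vertices $v=(j,k)$ and $v'=(j',k')$ with $\widetilde{\mathcal{T}}_n(v),\widetilde{\mathcal{T}}_n(v')\in\mathcal{K}$, Lem. \ref{lem: gen_tower_AD_coord} identifies $\widetilde{\mathcal{T}}_n(v)=\mathcal{T}_{r(n;j,k)}(j,k)$ and similarly for $v'$, with ranks $r(n;j,k),r(n;j',k')$ both of order $(n+1)f((j+k)/(n+1))+O(1)$. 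The edges of the reduced graph are realized either as actual edges of the Aztec diamond embedding $\mathcal{T}_r$ (when $v,v'$ lie in a column that has already reached its final height) or as short segments produced by the local spider/contract/gauge updates dictated by the shuffling algorithm; in either case the edge length equals an edge-probability integral for an Aztec diamond of the appropriate rank, for which the asymptotics used in \cite{tower_AD_perfect} (and in Prop. \ref{prop: convergence_max_surface}) give a leading term of order $\mu_n=1/n$ with a non-zero coefficient depending continuously on $((j+k)/(n+1),(j-k)/(n+1))$, together with an angular factor bounded away from $0$ and $\pi$. Compactness of $\mathcal{K}$ and the fact that its preimage under $\Phi\circ (\cdot/n)$ lies in a compact subset of the open Aztec diamond then yield uniform constants $C_\mathcal{K},\varepsilon_\mathcal{K}$.

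The main obstacle is the rigidity step. Extracting two-sided bounds on individual edge lengths requires more than the surface convergence of Cor. \ref{cor: gen_tower_limit_surface}: one needs the next order in the asymptotic expansion of Prop. \ref{prop: convergence_max_surface} (i.e., the $o(1)$ corrections must be shown to dominate the Riemann-sum discretization error at the $1/n$ scale), together with uniform control of the asymptotics as $(j/n,k/n)$ approaches the arctic circle $|x|+|y|=1$, since the image $\Phi(\mathcal{K})$ may come close to that boundary even when $\mathcal{K}$ is far from $\partial\Omega$. I expect this to be handled exactly as in Section 6 of \cite{tower_AD_perfect} by steepest-descent analysis of the Aztec double-integral formula for edge probabilities, with the parameter $r$ replaced by $\rr$ throughout; the smoothness of $f$ on $[-1,x_*]$ and the strict inequality $|f'|<1$ in the interior (ensuring $\Phi(\mathcal{K})$ stays in the liquid region) provide the uniformity required to conclude.
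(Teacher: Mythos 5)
Your overall plan is the same as the paper's: check hypothesis \ref{hyp: domain} directly, obtain hypothesis \ref{hyp:convergence} from Cor.~\ref{cor: gen_tower_limit_surface}, and obtain hypothesis \ref{hyp:technical} by verifying the rigidity assumption (which implies LIP and EXP-FAT at scale $\delta_n=\log n/n$), transferring everything to Aztec-diamond contour integrals through Lem.~\ref{lem: gen_tower_AD_coord} and doing steepest descent as in \cite{tower_AD_perfect}. However, two of your steps have real gaps. For hypothesis \ref{hyp: domain}, the paper's point is that $\Omega_{\widetilde{\mathcal T}_n}$ is \emph{exactly} a triangulation of the square $\lozenge$ for every $n$ (the perfect t-embedding of a degree-4 boundary graph has boundary values $\pm1,\pm\mathrm i$), so Hausdorff convergence is immediate; your route---deducing Hausdorff convergence of the covered regions from Cor.~\ref{cor: gen_tower_limit_surface} plus the assertion that ``boundary faces degenerate''---does not follow from that corollary, which is only uniform on compact subsets of the open region and says nothing about the near-boundary (frozen) faces. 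The conclusion is right, but the argument as written does not establish it.

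More substantively, in the rigidity step the prescription ``run Section 6 of \cite{tower_AD_perfect} with $r$ replaced by $r(n;j,k)$'' misses the two genuinely new features that Section \ref{sect: rigidity} must handle. First, the generalized tower graphs contain edges between two adjacent hexagonal faces ($\iota+\iota'=0$), which do not exist in the Aztec diamond; these give a different integrand modification, different $G_{\iota,\iota'},H_{\iota,\iota'}$, and dual vertices of degree six, so the angle bound needs a separate argument (the $\iota+\iota'=0$ case of Lemma \ref{lemma: angle_bound}). Second, when an edge joins columns of different heights, its two endpoints correspond a priori to Aztec embeddings of \emph{different ranks}; the paper reconciles them via the parity identity $\mathcal T_r(j+\iota,k+\iota')=\mathcal T_{r+1}(j+\iota,k+\iota')$ and the action recursion, which is precisely what makes the $(z-w)$ factor cancel and the double integral factor into a product of single integrals---your ``short segments produced by local updates'' gestures at this but supplies no mechanism. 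The error exponent $\epsilon(j,k)$ also enters the integrand and must be checked to contribute only unimodular factors in the edge and angle bounds. Finally, your worry that $\Phi(\mathcal K)$ may approach the arctic circle is backwards (compact subsets of $\Omega$ pull back to compact subsets of the liquid region), and the paper assumes only $|f'|\le 1$, not the strict inequality you invoke.
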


\begin{proof}
    Assumption \ref{hyp: domain} holds immediately since for every $n$, the domain covered by $\widetilde{\cT}_n$ is just a triangulation of $\lozenge:=\{z=x+\mathrm i y; |x|+|y|<1\}.$ Assumption \ref{hyp:convergence} is the content of  Cor. \ref{cor: gen_tower_limit_surface}. Assumption \ref{hyp:technical} is the content of Section \ref{sect: rigidity} where we show that the rigidity assumption holds for the generalized tower graphs, which in turn implies the LIP and EXP-FAT assumptions. 
\end{proof}

As a result, we have that the gradients of
$n$-point correlation functions of the height functions on $(G^f_n)^*$ converges to that of the Gaussian free field as $n \to \infty$ by Thm. \ref{thm:Main_thm_CLR21}.

\section{Rigidity}\label{sect: rigidity}

In this section, we complete the verification of the rigidity assumption for our family of generalized tower graphs. Our method of proof follows the same form as the case of Aztec diamond in \cite{tower_AD_perfect}, utilizing the local edge probability expression in \cite{CEP, MR3325273} and steepest descent/saddle point asymptotic analysis, see \cite{deBruijn81} for exposition of the saddle-point/steepest descent method. The main technical difference is that we must consider edges between adjacent hexagonal faces and the error term.

To begin, we summarize the relationship between the value of $\cT_{n}(j,k)$ of the reduced Aztec diamond $AD'_{n+1}$ and the probability of the event where a specific face $(j,k)$ in $AD_n$ contains a dimer edge. Specifically, denote $p_E(j, k, n), p_N(j, k, n), p_W(j, k, n), p_S(j, k, n)$ to be the probabilities that the face $(j,k)$ contains a covered edge on the east, north, west and south edge, respectively, see Fig. \ref{fig:edge_prob}. Note that even though $p_E, p_N, p_W, p_S$ are functions on the faces of $AD_n$ and $\cT_n(j,k)$ is the function on the face of $AD'_{n+1}$, all inner faces of $AD'_{n+1}$ are in the same correspondence with $AD_n$. Thus, we can write $\cT_n(j,k)$ as function of $p_E(j,k,n), p_N(j,k,n), p_W(j,k,n), p_S(j,k,n)$, 

\begin{figure}
    \centering
    \includegraphics[width=0.3\linewidth]{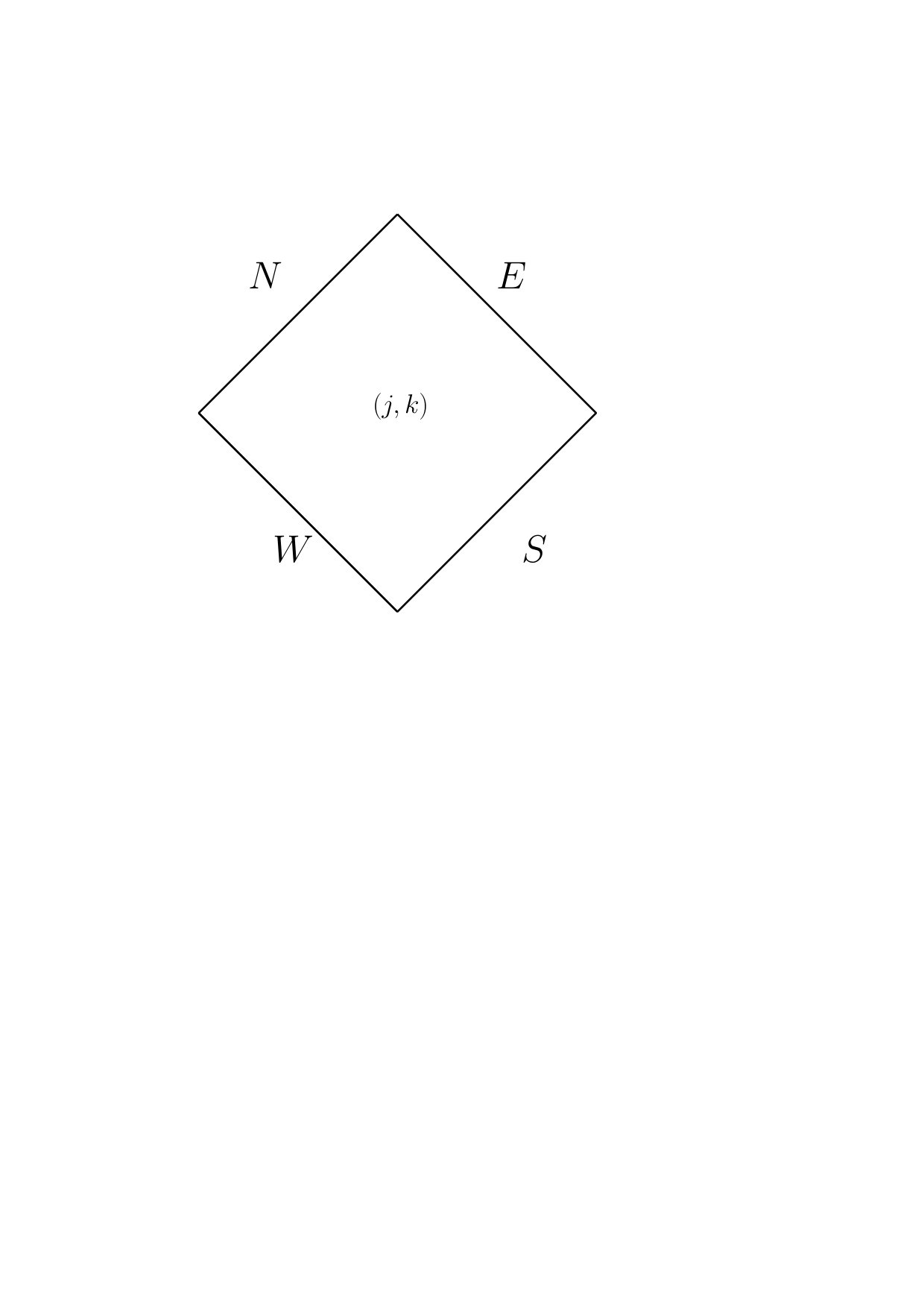}
    \caption{The convention of edges around a square face indexed $(j,k)$ }
    \label{fig:edge_prob}
\end{figure}

\begin{cor}[\cite{chelkak2021fluctuations, tower_AD_perfect}]
    \label{cor: embedding_ori_prob}
Let $p_E, p_S, p_N$ and $p_W$ be edge probabilities as defined above, for $j+k+n$ odd we have
\begin{equation}
    \mathcal{T}_n(j, k)=p_E(j, k, n)+\mathrm{i} p_N(j, k, n)-p_W(j, k, n)-\mathrm{i} p_S(j, k, n) 
\end{equation}
and
\begin{equation}
    \mathcal{O}_n(j, k)=p_E(j, k, n)+\mathrm{i} p_N(j, k, n)+p_W(j, k, n)+\mathrm{i} p_S(j, k, n). 
\end{equation}
\end{cor}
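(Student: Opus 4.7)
The plan is to apply the uniqueness half of Prop.~\ref{prop: AD_t_embedding}, which characterizes $\cT_n(j,k)$ and $\cO_n(j,k)$ as the unique solutions of the discrete wave equation in the cone with boundary data $(b_0,b_{SE},b_{NE},b_{NW},b_{SW})$ equal to $(0,1,i,-1,-i)$ and $(0,1,i,1,i)$ respectively. It therefore suffices to show that the two combinations
\begin{align*}
F^{\cT}(j,k,n) &:= p_E(j,k,n) + i\,p_N(j,k,n) - p_W(j,k,n) - i\,p_S(j,k,n), \\
F^{\cO}(j,k,n) &:= p_E(j,k,n) + i\,p_N(j,k,n) + p_W(j,k,n) + i\,p_S(j,k,n)
\end{align*}
satisfy the same recursion with the same boundary data; uniqueness then forces $F^{\cT} = \cT_n$ and $F^{\cO} = \cO_n$.

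The first step is to derive a recursion for each $p_\sigma$, $\sigma \in \{E,N,W,S\}$, under a single shuffling step $AD_n \to AD_{n+1}$. Using the description of the shuffling algorithm in \cite{MR1990768}, each dimer translates by one unit in its orientation direction, while in each $2\times 2$ defective block a pair is created or annihilated; with uniform weights, a newly created pair is horizontal or vertical with probability $1/2$ each. This yields linear recursions of the form $p_\sigma(j',k',n+1) = \tfrac{1}{2}p_\sigma(j,k,n) + (\textrm{symmetric creation/annihilation})$, where $(j',k')$ is the translate of $(j,k)$ in direction $\sigma$. Forming the complex combinations with weights $(1,i,-1,-i)$ (resp.\ $(1,i,1,i)$), the symmetric correction terms telescope and cancel, leaving exactly the bulk discrete wave equation of Prop.~\ref{prop: AD_t_embedding}.

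The second step is to verify the boundary data. The vanishing condition $F(j,k,n)=0$ for $|j|+|k|\ge n$ is immediate, since $(j,k)\notin AD_n$ forces all four edge probabilities to vanish. At the tip, $AD_1$ consists of a single face with two equally likely dimer covers, so by symmetry $p_E = p_N = p_W = p_S = 1/2$, giving $F^\cT(0,0,1) = 0$; for $F^\cO$ the additive freedom $\cO(w_0) = \cT(w_0)$ in Def.~\ref{def:origami_map} enforces the correct normalization. For each of the four edge faces $(-n,0), (0,n), (n,0), (0,-n)$ of $AD_{n+1}$, the face lies in a frozen region with a deterministic dimer orientation; substituting these forced values of $p_\sigma$ into the edge recursion yields precisely the prescribed $b_{SE}, b_{NE}, b_{NW}, b_{SW}$, with the sign difference between the $\cT$ and $\cO$ cases at the NW/SW edges coming from the sign flip in the definitions of $F^\cT$ versus $F^\cO$.

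The main obstacle will be the first step: writing down the edge-probability recursions cleanly, keeping track of the $\pi/2$ rotation of coordinates used throughout this paper, and checking that the creation/annihilation contributions really do cancel when the $p_\sigma$ are combined with weights $(1,i,-1,-i)$ and $(1,i,1,i)$. Once the recursion is in hand, the boundary-data verification reduces to a short enumeration of the four frozen-corner cases.
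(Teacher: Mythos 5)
Your overall strategy --- characterize both sides as solutions of the discrete wave equation in the cone with matching boundary data and invoke the uniqueness stated before Prop.~\ref{prop: AD_t_embedding} --- is indeed the route behind this statement (the paper itself gives no proof, deferring to \cite{chelkak2021fluctuations,tower_AD_perfect}), so the plan is sound in outline. However, the two steps you defer contain genuine gaps as sketched. First, the bulk recursion you write, $p_\sigma(j',k',n+1)=\tfrac12 p_\sigma(j,k,n)+(\text{creation/annihilation})$, is not the shuffling relation: surviving dominoes slide with probability one (the factor $\tfrac12$ enters only through the orientation of a newly created pair), and, more importantly, any transport-plus-creation relation is first order in $n$, whereas the identity you must verify, $F(j,k,n+1)+F(j,k,n-1)=\tfrac12\big(F(j+1,k,n)+F(j-1,k,n)+F(j,k+1,n)+F(j,k-1,n)\big)$, is second order. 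To reach it you have to combine the creation at rank $n+1$ with the destruction at rank $n-1$ at the \emph{same} face, i.e.\ use the exact local relation for edge probabilities under the spider move at the face $(j,k)$ --- this is what produces the $n\pm1$ structure, exactly as the central move does for $\mathcal{T}_n$. The cancellation under the weights $(1,\mathrm{i},-1,-\mathrm{i})$ does hold at the shuffled face (a created pair covers either the N,S pair or the E,W pair of edges of that same face with probability $\tfrac12$ each), but your sketch as written does not yet yield the wave equation.

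The boundary verification is also not the ``short enumeration'' you describe. The corner faces $(\pm n,0),(0,\pm n)$ of $AD_{n+1}$ are not frozen: no face of a finite Aztec diamond carries a deterministic configuration (already in $AD_1$ all four edge probabilities equal $\tfrac12$), so there are no ``forced values'' to substitute; the edge conditions such as $F(-n,0,n+1)=\tfrac12\big(F(-n+1,0,n)+b_{NW}\big)$ must instead come from the boundary-column shuffling step, where the constants $b$ replace the missing neighbor contributions. Finally, the tip value of the origami combination is not adjustable: $p_E+\mathrm{i}p_N+p_W+\mathrm{i}p_S$ at $(0,0,1)$ equals $1+\mathrm{i}$, and once the probabilities are fixed there is no additive freedom left on that side of the identity, so appealing to the normalization $\cO(w_0)=\cT(w_0)$ does not close this step. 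You must check this forced value against the actual normalization of $\mathcal{O}_n$ appearing in the corollary; note that with the data $(b_0,b_{SE},b_{NE},b_{NW},b_{SW})=(0,1,\mathrm{i},1,\mathrm{i})$ as transcribed in Prop.~\ref{prop: AD_t_embedding} the tip would be $0$, so the conventions have to be reconciled (e.g.\ by an explicit recentering) before uniqueness can legitimately be applied to $F^{\cO}$. Until these points are repaired, the argument is incomplete.
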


We use the same convention as in \cite{tower_AD_perfect} for the Aztec diamond with Kasteleyn weights given by 
\begin{equation}
K_{\left(j_1+\frac{1}{2}, k_1+\frac{1}{2}\right),\left(j_2+\frac{1}{2}, k_2-\frac{1}{2}\right)}= \begin{cases}(-1)^{k_1+n+1}, & \left(j_2, k_2\right)=\left(j_1, k_1\right) \\ (-1)^{k_1+n+1} \mathrm{i}, & \left(j_2, k_2\right)=\left(j_1-1, k_1+1\right) \\ (-1)^{k_1+n}, & \left(j_2, k_2\right)=\left(j_1, k_1+2\right) \\ (-1)^{k_1+n}\mathrm{i}, & \left(j_2, k_2\right)=\left(j_1+1, k_1+1\right) \\ 0, & \text { otherwise. }\end{cases}
\end{equation}
With this one can write the edge probabilities for the Aztec diamond as 
\begin{equation}\begin{aligned}
    p_E(j,k,n)&=K_{(j+\frac{1}{2},k+\frac{1}{2}),(j+\frac{1}{2},k-\frac{1}{2})}K^{-1}_{(j+\frac{1}{2},k-\frac{1}{2}),(j+\frac{1}{2},k+\frac{1}{2})}=(-1)^{k+n+1}K^{-1}_{(j+\frac{1}{2},k-\frac{1}{2}),(j+\frac{1}{2},k+\frac{1}{2})}\\
    p_N(j,k,n)&=K_{(j+\frac{1}{2},k+\frac{1}{2}),(j-\frac{1}{2},k+\frac{1}{2})}K^{-1}_{(j-\frac{1}{2},k+\frac{1}{2}),(j+\frac{1}{2},k+\frac{1}{2})}=(-1)^{k+n+1} \mathrm{i}K^{-1}_{(j-\frac{1}{2},k+\frac{1}{2}),(j+\frac{1}{2},k+\frac{1}{2})}\\
    p_W(j,k,n)&=K_{(j-\frac{1}{2},k-\frac{1}{2}),(j-\frac{1}{2},k+\frac{1}{2})}K^{-1}_{(j-\frac{1}{2},k+\frac{1}{2}),(j-\frac{1}{2},k-\frac{1}{2})}=(-1)^{k+n-1}K^{-1}_{(j-\frac{1}{2},k+\frac{1}{2}),(j-\frac{1}{2},k-\frac{1}{2})}\\
    p_S(j,k,n)&=K_{(j-\frac{1}{2},k-\frac{1}{2}),(j+\frac{1}{2},k-\frac{1}{2})}K^{-1}_{(j+\frac{1}{2},k-\frac{1}{2}),(j-\frac{1}{2},k-\frac{1}{2})}=(-1)^{k+n-1}\mathrm{i}K^{-1}_{(j+\frac{1}{2},k-\frac{1}{2}),(j-\frac{1}{2},k-\frac{1}{2})}
\end{aligned}  
\end{equation}
for those faces $(j,k,n)$ such that $j+k+n$ and $k+n$ are odd. Combining the above expressions with the form of the inverse Kasteleyn matrix given by Chhita-Johansson-Young in  \cite{MR3325273},
$$
\begin{aligned}
& K_{\left(j_1+\frac{1}{2}, k_1-\frac{1}{2}\right),\left(j_2+\frac{1}{2}, k_2+\frac{1}{2}\right)}^{-1} \\
& \qquad= \begin{cases}f_1\left(\left(j_1, k_1\right),\left(j_2, k_2\right)\right), & j_1+k_1<j_2+k_2+2 \\
f_1\left(\left(j_1, k_1\right),\left(j_2, k_2\right)\right)-f_2\left(\left(j_1, k_1\right),\left(j_2, k_2\right)\right), & \text { otherwise }\end{cases}
\end{aligned}
$$
where
$$
\begin{aligned}
f_1\left(\left(j_1, k_1\right),\left(j_2, k_2\right)\right)= & \frac{(-1)^{\frac{1}{2}\left(k_1+k_2+2 n\right)}}{(2 \pi \mathrm{i})^2} \int_{\mathcal{E}_2} \int_{\mathcal{E}_1} \frac{w^{\frac{1}{2}\left(j_2+k_2+n+1\right)}}{z^{\frac{1}{2}\left(j_1+k_1+n+1\right)}} \\
& \times \frac{(1+z)^{\frac{1}{2}\left(k_1-j_1+n-1\right)}(z-1)^{\frac{1}{2}\left(j_1-k_1+n+1\right)}}{(1+w)^{\frac{1}{2}\left(k_2-j_2+n+1\right)}(w-1)^{\frac{1}{2}\left(j_2-k_2+n+1\right)}} \frac{\mathrm{d} z \mathrm{~d} w}{w-z}, \\
f_2\left(\left(j_1, k_1\right),\left(j_2, k_2\right)\right)= & \frac{(-1)^{\frac{1}{2}\left(k_1+k_2+2 n\right)}}{2 \pi \mathrm{i}} \int_{\mathcal{E}_1} \frac{(1+z)^{\frac{1}{2}\left(j_2+k_2-j_1-k_1\right)} z^{\frac{1}{2}\left(k_2-j_2-k_1+j_1\right)}}{(z+2)^{\frac{1}{2}\left(k_2-j_2-k_1+j_1+2\right)}} \mathrm{d} z
\end{aligned}
$$
and $\mathcal{E}_1$ is a small circle around 0 and $\mathcal{E}_2$ is a small circle around 1, both with radius strictly less than $1 / 2$ and oriented in positive direction. With this notation, we have that the edge probabilities are given by
\begin{equation}
\resizebox{0.9\textwidth}{!}{$
\begin{aligned}
    p_E(j,k,n)&=(-1)^{k+n+1}K^{-1}_{(j+\frac{1}{2},k-\frac{1}{2}),(j+\frac{1}{2},k+\frac{1}{2})}=(-1)^{k+n+1}f_1\left((j, k),(j, k)\right),\\
    p_N(j,k,n)&=(-1)^{k+n+1}\mathrm iK^{-1}_{(j-\frac{1}{2},k+\frac{1}{2}),(j+\frac{1}{2},k+\frac{1}{2})}=(-1)^{k+n+1}\mathrm if_1\left((j-1,k+1), (j,k)\right),\\
    p_W(j,k,n)&=(-1)^{k+n-1}K^{-1}_{(j-\frac{1}{2},k+\frac{1}{2}),(j-\frac{1}{2},k-\frac{1}{2})}\\
    &=(-1)^{k+n-1}\left(f_1((j-1,k+1),(j-1,k-1))-f_2((j-1,k+1),(j-1,k-1))\right),\\
    p_S(j,k,n)&=(-1)^{k+n-1}iK^{-1}_{(j+\frac{1}{2},k-\frac{1}{2}),(j-\frac{1}{2},k-\frac{1}{2})}\\
    &=(-1)^{k+n-1}\mathrm i \left(f_1((j,k),(j-1,k-1))-f_2((j,k),(j-1,k-1))\right).
\end{aligned} 
$}
\end{equation}
In the case of Aztec diamond, the following contour integral formula for $\mathcal{T}_n$ is shown in \cite{tower_AD_perfect}. 
\begin{lemma}[Lemma 6.3, \cite{tower_AD_perfect} ]
\label{lem: t_embedding_coord_AD}
    For $(x, y) \in \mathcal A$, define the action function as 
    \[
    S(z,x,y):=\frac{1}{2}(1+x+y) \log z-\frac{1}{2}(1-x+y) \log (z+1)-\frac{1}{2}(1+x-y) \log (z-1)
    \]
    where
$$
G_{\mathcal{T}}(z, w):=\frac{(1-\mathrm{i})(z-\mathrm{i})(w-\mathrm{i})}{(z-1)(z+1) w}.
$$
Let $(j, k) \in\left(AD_{n+1}^{\prime}\right)^*$ with $j+k+n$ odd, and set $(x, y)=\frac{1}{n+1}(j, k)$. Then the perfect t-embedding
$$
\mathcal{T}_n(j, k)=\frac{1}{(2 \pi \mathrm{i})^2} \int_{\mathcal{E}_2} \int_{\mathcal{E}_1} \exp{\left\{(n+1)(S(w)-S(z))\right\}} G_{\mathcal{T}}(z, w) \frac{\mathrm d z \mathrm d w}{z-w}-1
$$
where $\mathcal{E}_1$ and $\mathcal{E}_2$ are positively oriented circles around 0 and 1, respectively, with radius strictly less than $1 / 2$, and we denote $S(z):=S(z ; x, y)$.
\end{lemma}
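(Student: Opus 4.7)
The plan is to derive the formula from Cor.~\ref{cor: embedding_ori_prob}, which expresses the t-embedding as $\mathcal T_n(j,k) = p_E(j,k,n) + i p_N(j,k,n) - p_W(j,k,n) - i p_S(j,k,n)$, by substituting the Chhita--Johansson--Young integral representations of the Kasteleyn inverse. Each of the four edge probabilities then becomes a linear combination of $f_1$-type double contour integrals indexed by $(j_1,k_1) \in \{(j,k),(j-1,k+1)\}$ and $(j_2,k_2) \in \{(j,k),(j-1,k-1)\}$, with $p_W$ and $p_S$ additionally contributing $f_2$-type single integrals. I would show that the $f_2$ terms combine to yield the constant $-1$ and the four $f_1$ terms combine into the claimed double integral.

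The $f_2$ part is a direct residue computation. The integrand of $f_2((j-1,k+1),(j-1,k-1))$ reduces to $1/(z(z+1))$, with residue $1$ at $z=0$, while the integrand of $f_2((j,k),(j-1,k-1))$ reduces to $1/((1+z)(z+2))$, which is holomorphic inside $\mathcal E_1$ (radius $<1/2$) and hence vanishes. Combined with the overall prefactor $(-1)^{k+n+1}$ in the expansion of $\mathcal T_n$, the surviving $f_2$ contribution is exactly $-1$.

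For the $f_1$ part, using $\exp((n+1)S(z;x,y)) = z^{(n+1+j+k)/2}(z+1)^{-(n+1-j+k)/2}(z-1)^{-(n+1+j-k)/2}$ at $(x,y) = (j,k)/(n+1)$, I would factor each $f_1$ integrand as $E(z,w)\,R_X(z,w)/(w-z)$, where $E(z,w) := \exp((n+1)(S(w)-S(z)))$; comparing exponents of $z, z\pm 1, w, w\pm 1$ gives $R_E = \tfrac{1}{z+1}$, $R_N = \tfrac{1}{z-1}$, $R_W = \tfrac{1}{w(z-1)}$, $R_S = \tfrac{1}{w(z+1)}$. After careful sign tracking---including the $i$-factors in $p_N$ and $p_S$ coming from the Kasteleyn weight $K$ and the $\pm i$ arising from the $(-1)^{(k_1+k_2+2n)/2}$ prefactors of $f_1$ when $k_1+k_2$ is odd---the signed combination of the four $f_1$ integrands becomes $-R_E + iR_N + R_W + iR_S$. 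Clearing the common denominator $w(z-1)(z+1)$ yields the numerator
\[
-w(z-1) + iw(z+1) + (z+1) + i(z-1) \;=\; -(1-i)(z-i)(w-i),
\]
so that $-R_E + iR_N + R_W + iR_S = -G_\mathcal T(z,w)$. Using $1/(w-z) = -1/(z-w)$ then produces exactly the double integral in the statement. The principal obstacle is the sign and parity bookkeeping: correctly interpreting the $(-1)^{(k_1+k_2+2n)/2}$ prefactors from \cite{MR3325273} when $k_1+k_2$ is odd, and combining them with the $i$'s in $p_N,p_S$ so that the coefficients in front of $R_E, R_N, R_W, R_S$ collapse to $-1, i, 1, i$; once this alignment is secured, the algebraic identity that produces the factor $(z-i)(w-i)$ characterizing $G_\mathcal T$ is an elementary expansion.
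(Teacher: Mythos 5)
Your proposal is correct and takes essentially the same route as the source of this lemma (and as this paper's own analogous computation in Lem.~\ref{lem: generalized_double_contour_int} and the corollary following it): write $\mathcal{T}_n(j,k)=p_E+\mathrm{i}p_N-p_W-\mathrm{i}p_S$ via Cor.~\ref{cor: embedding_ori_prob}, insert the Chhita--Johansson--Young integrals, observe that the $f_2$ terms contribute exactly the constant $-1$, and recombine the four $f_1$ integrands. I checked the key points: your residue claims for the two $f_2$ terms, the factors $R_E=\tfrac{1}{z+1}$, $R_N=\tfrac{1}{z-1}$, $R_W=\tfrac{1}{w(z-1)}$, $R_S=\tfrac{1}{w(z+1)}$, the collapse of the prefactors (with the branch $(-1)^{1/2}=\mathrm{i}$, $(-1)^{-1/2}=-\mathrm{i}$) to the coefficients $-1,\mathrm{i},1,\mathrm{i}$, and the identity $-w(z-1)+\mathrm{i}w(z+1)+(z+1)+\mathrm{i}(z-1)=-(1-\mathrm{i})(z-\mathrm{i})(w-\mathrm{i})$, all of which hold and yield the stated formula.
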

Applying Lem. \ref{lem: t_embedding_coord_AD} together with Lem. \ref{lem: gen_tower_AD_coord}, the perfect t-embeddings of the generalized tower graphs also admit a double contour integral formula via the edge probability of the Aztec diamond. 

\begin{lemma}
    \label{lem: generalized_double_contour_int}
\begin{equation}
\resizebox{0.9\textwidth}{!}{$
\begin{aligned}
& {p_E}(j, k, \rr)=\frac{-1}{(2 \pi \mathrm{i})^2}\int_{\mathcal{E}_2} \int_{\mathcal{E}_1} \mathrm{e}^{(n+1)(\tS(w)-\tS(z))} \frac{1}{1+z} \left( \frac{w(1+z)(z-1)}{z(1+w)(w-1)}\right)^{\frac{\epsijk+1}{2}} \frac{\mathrm d z \mathrm d w}{z-w}\\
&{p_N}(j, k, \rr) = \frac{-1}{(2\pi\mathrm{i})^2} \int_{\mathcal E_2}\int_{\mathcal{E}_1} \mathrm{e}^{(n+1)(\tS(w)-\tS(z))} \frac{1}{z-1} \left( \frac{w(1+z)(z-1)}{z(1+w)(w-1)}\right)^{\frac{\epsijk+1}{2}} \frac{\mathrm d z \mathrm d w}{z-w}\\
& {p_W}(j, k, \rr) =  \frac{1} {(2\pi\mathrm{i})^2}\int_{\mathcal E_2}\int_{\mathcal{E}_1} \mathrm{e}^{(n+1)(\tS(w)-\tS(z))} \frac{1}{w(z-1)} \left( \frac{w(1+z)(z-1)}{z(1+w)(w-1)}\right)^{\frac{\epsijk+1}{2}} \frac{\mathrm d z \mathrm d w}{z-w}+1 \\
& {p_S}(j, k, \rr) = \frac{1} {(2\pi\mathrm{i})^2}\int_{\mathcal E_2}\int_{\mathcal{E}_1} \mathrm{e}^{(n+1)(\tS(w)-\tS(z))} \frac{1}{w(1+z)} \left( \frac{w(1+z)(z-1)}{z(1+w)(w-1)}\right)^{\frac{\epsijk+1}{2}} \frac{\mathrm d z \mathrm d w}{z-w}
\end{aligned}
$}
\end{equation}
with the modified action
\begin{equation}
\resizebox{0.9\textwidth}{!}{$
    \begin{aligned}
        &\quad \tS(z):=\tS(z; x,y) \\
        &= \frac{1}{2}\left(f(x+y)+x+y\right)\log(z) - \frac{1}{2}\left(f(x+y)-x+y\right)\log(z+1) - \frac{1}{2}\left(f(x+y)+x-y\right)\log(z-1) 
    \end{aligned}
$}
\end{equation}
where $(x,y) = \frac{1}{n+1}(j,k)$. 
\end{lemma}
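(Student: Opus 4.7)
The plan is to reduce directly to the Aztec diamond by way of Lem.~\ref{lem: gen_tower_AD_coord}: since $\widetilde{\mathcal{T}}_n(j,k)=\mathcal{T}_\rr(j,k)$ and the same relation holds for the origami map, the four quantities $p_E(j,k,\rr)$, $p_N(j,k,\rr)$, $p_W(j,k,\rr)$, $p_S(j,k,\rr)$ on the left-hand side are literally edge probabilities at face $(j,k)$ in the Aztec diamond of rank $\rr$. These were already expanded in the excerpt in terms of $f_1$ and $f_2$ via the Chhita--Johansson--Young inverse Kasteleyn formula, so the task boils down to substituting those explicit double (resp.\ single) contour integrals into the probability formulas and reorganizing the integrand using the identity $\rr=(n+1)f(\tfrac{j+k}{n+1})+\epsijk$.

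I would start with the $f_1$ piece. Setting $(x,y)=\tfrac{1}{n+1}(j,k)$ and $f:=f(x+y)$ and substituting $n\mapsto\rr$ in the exponents, each exponent of the form $\tfrac{1}{2}(j+k+\rr+1)$, $\tfrac{1}{2}(k-j+\rr-1)$, $\tfrac{1}{2}(j-k+\rr+1)$ (and its $w$-counterpart) splits cleanly as $\tfrac{n+1}{2}$ times the corresponding coefficient of $\tS$ plus a correction of the form $\tfrac{\epsijk\pm 1}{2}$. Collecting the $(n+1)$-scaled pieces produces the main exponential $\exp\bigl((n+1)(\tS(w)-\tS(z))\bigr)$, while the remaining corrections assemble into the universal factor $\bigl(\tfrac{w(1+z)(z-1)}{z(1+w)(w-1)}\bigr)^{(\epsijk+1)/2}$ together with an edge-specific rational prefactor. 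For $p_E$, the prefactor $\tfrac{1}{1+z}$ arises because the exponents $(k-j+\rr-1)/2$ and $(j-k+\rr+1)/2$ differ by $1$, so the $(1+z)$ correction is $\tfrac{\epsijk-1}{2}$ rather than $\tfrac{\epsijk+1}{2}$. Combining with the Kasteleyn phase $(-1)^{k+\rr+1}$ multiplying $f_1$ then yields the stated formula for $p_E$, and the same manipulation with the shifted arguments $(j-1,k+1)$, $(j-1,k-1)$ dictated by each edge reproduces the prefactors $\tfrac{1}{z-1}$, $\tfrac{1}{w(z-1)}$, $\tfrac{1}{w(1+z)}$ for $p_N$, $p_W$, $p_S$.

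For $p_W$ and $p_S$ there is an additional contribution coming from $f_2$, which I would evaluate directly by residues. With the arguments relevant to $p_W$, the $f_2$ integrand reduces to a rational function with a simple pole at $z=0$ inside $\mathcal{E}_1$, whose residue, combined with the Kasteleyn phase $(-1)^{k+\rr-1}$, produces the $+1$ appearing on the right-hand side of the $p_W$ formula. For $p_S$, the analogous $f_2$ integrand either has no pole inside $\mathcal{E}_1$ or its residue cancels against the Kasteleyn phase, leaving no additive constant, consistent with the statement.

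The main obstacle is the bookkeeping of signs and parities. One has to track the Kasteleyn phase $(-1)^{(k_1+k_2+2\rr)/2}$ across the four different pairs $((j_1,k_1),(j_2,k_2))$ relevant to the four edges, the sign flip from $\tfrac{1}{w-z}$ to $\tfrac{1}{z-w}$, and, for each of the six logarithmic terms in $\tS(w)-\tS(z)$, whether the leftover correction is $\tfrac{\epsijk+1}{2}$ or $\tfrac{\epsijk-1}{2}$. Ensuring that these shifts line up so that a single clean factor $\bigl(\tfrac{w(1+z)(z-1)}{z(1+w)(w-1)}\bigr)^{(\epsijk+1)/2}$ emerges uniformly across all four probabilities, with only the stated edge-dependent rational prefactors and the constant $+1$ for $p_W$ remaining, constitutes the bulk of the verification, although it introduces no analytic ingredients beyond the CJY formula and elementary algebra.
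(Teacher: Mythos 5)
Your proposal follows essentially the same route as the paper: substitute the Chhita--Johansson--Young $f_1$, $f_2$ formulas into the edge-probability expressions at rank $\rr$, split each exponent via $\rr=(n+1)f(\tfrac{j+k}{n+1})+\epsijk$ into the $(n+1)\tS$ part plus $\tfrac{\epsijk\pm1}{2}$ corrections, and collect the uniform error factor with edge-specific prefactors (the paper writes out only the $p_E$ case and declares the others similar). Your explicit residue evaluation of the $f_2$ terms, giving the $+1$ for $p_W$ and zero for $p_S$, is exactly the computation left implicit in the paper's ``other cases proceed similarly.''
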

\begin{proof}
    For the case $ p_E(j,k,\rr)$, we have
\[
    \begin{aligned}
   &\qquad{p_E}(j, k, \rr)=(-1)^{k+\rr+1} f_1((j, k),(j, k)) \\
    &=\frac{(-1)^{2k+2\rr+1}}{(2 \pi \mathrm{i})^2} `\int_{\mathcal{E}_2} \int_{\mathcal{E}_1} \frac{w^{\frac{1}{2}(j+k+\rr+1)}}{z^{\frac{1}{2}(j+k+\rr+1)}} \frac{(1+z)^{\frac{1}{2}(k-j+\rr-1)}(z-1)^{\frac{1}{2}(j-k+\rr+1)}}{(1+w)^{\frac{1}{2}(k-j+\rr+1)}(w-1)^{\frac{1}{2}(j-k+\rr+1)}} \frac{\mathrm d z \mathrm d w}{w-z} \\
    &=\frac{-1}{(2 \pi \mathrm{i})^2}\int_{\mathcal{E}_2} \int_{\mathcal{E}_1} \frac{w^{\frac{1}{2}(j+k+\fjk)}}{z^{\frac{1}{2}(j+k+\fjk)}} \frac{(1+z)^{\frac{1}{2}(k-j+\fjk)}(z-1)^{\frac{1}{2}(j-k+\fjk)}}{(1+w)^{\frac{1}{2}(k-j+\fjk)}(w-1)^{\frac{1}{2}(j-k+\fjk)}}\\
    &\qquad \qquad \qquad \times\frac{1}{1+z} \left( \frac{w(1+z)(z-1)}{z(1+w)(w-1)}\right)^{\frac{\epsijk+1}{2}}  \frac{\mathrm d z\mathrm d w}{w-z} \\
    &=\frac{-1}{(2 \pi \mathrm{i})^2}\int_{\mathcal{E}_2} \int_{\mathcal{E}_1} \mathrm{e}^{(n+1)(\tS(w)-\tS(z))} \frac{1}{1+z} \left( \frac{w(1+z)(z-1)}{z(1+w)(w-1)}\right)^{\frac{\epsijk+1}{2}} \frac{\mathrm d z \mathrm d w}{z-w}
    \end{aligned}
\]
and other cases proceed similarly.
\end{proof}

As a corollary to Lem. \ref{lem: generalized_double_contour_int}, we have the following expression for $\widetilde{\cT}_n(j,k)$.

\begin{cor}
\begin{equation}
\resizebox{0.9\linewidth}{!}{$
    \label{eq: contour T-embedding}
    \begin{aligned}
        \mathcal{\widetilde T}_n(j, k)&=\frac{1}{(2 \pi \mathrm{i})^2} \int_{\mathcal{E}_2} \int_{\mathcal{E}_1} \mathrm e^{(n+1)(\tS(w;\frac{j}{n+1},\frac{k}{n+1})-\tS(z;\frac{j}{n+1},\frac{k}{n+1}))} G_{\mathcal{T}}(z, w) \left( \frac{w(1+z)(z-1)}{z(1+w)(w-1)}\right)^{\frac{\epsijk+1}{2}}\frac{\mathrm d z \mathrm d w}{z-w}-1
    \end{aligned}
$}
\end{equation}
with the contour $\mathcal E_1$ and $\mathcal E_2$ are the same as those in the case of the Aztec diamond in Lem. \ref{lem: t_embedding_coord_AD}.
\end{cor}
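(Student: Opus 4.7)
The plan is to prove the corollary by combining Lemma \ref{lem: gen_tower_AD_coord} with Lemma \ref{lem: t_embedding_coord_AD}, absorbing the shift between the two action functions into the extra $\Phi$-factor. This is a cleaner route than going through the four individual edge-probability integrals of Lemma \ref{lem: generalized_double_contour_int}.

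First I would apply Lemma \ref{lem: gen_tower_AD_coord} to reduce to the Aztec diamond setting: $\widetilde{\cT}_n(j,k) = \cT_\rr(j,k)$ where $\rr = (n+1)\bar f\!\left(\tfrac{j+k}{n+1}\right) + \epsilon(j,k)$ is chosen so that $\rr+j+k$ is odd. Lemma \ref{lem: t_embedding_coord_AD}, applied at rank $\rr$ and at parameters $(x',y') = \tfrac{1}{\rr+1}(j,k)$, then gives
\[
\widetilde{\cT}_n(j,k) = \frac{1}{(2\pi\mathrm i)^2}\int_{\mathcal E_2}\!\int_{\mathcal E_1} e^{(\rr+1)(S(w;x',y')-S(z;x',y'))} G_\cT(z,w)\,\frac{dz\,dw}{z-w} - 1.
\]
The hypotheses of Lemma \ref{lem: t_embedding_coord_AD} are automatic here: the parity condition is built into the choice of $\rr$, and the contours $\mathcal E_1,\mathcal E_2$ are the same.

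The heart of the argument is then an elementary identity between the two action functions. Writing $(x,y)=\tfrac{1}{n+1}(j,k)$ and using $(n+1)f(x+y) = \rr - \epsilon(j,k)$, a direct term-by-term comparison of the coefficients of $\log z$, $\log(z+1)$, $\log(z-1)$ gives
\[
(n+1)\tS(z;x,y) - (\rr+1)S(z;x',y') = \tfrac{\epsilon(j,k)+1}{2}\log\!\frac{(z+1)(z-1)}{z},
\]
and the analogous identity for $w$. Taking the difference in $w$ and $z$ and exponentiating yields
\[
e^{(n+1)(\tS(w)-\tS(z))} = e^{(\rr+1)(S(w)-S(z))}\cdot\left(\frac{z(w+1)(w-1)}{w(z+1)(z-1)}\right)^{(\epsilon(j,k)+1)/2}.
\]
The second factor is precisely the reciprocal of the $\Phi$-factor in the corollary's right-hand side, so after multiplying, it cancels with $\left(\frac{w(1+z)(z-1)}{z(1+w)(w-1)}\right)^{(\epsilon+1)/2}$ and one recovers the Aztec diamond integrand above.

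The proof is therefore essentially a bookkeeping exercise: the only real computation is the short algebraic identity between $(n+1)\tS$ and $(\rr+1)S$. There is no analytic obstacle, and no need to re-derive any residue or saddle-point estimate; the entire content is that the $\epsilon$-correction in $\rr$ is exactly what is needed to convert between the two normalizations of the action. The main place to be careful is in verifying the coefficient identity above, and in observing that the branch choices of the logarithms are consistent since $(\epsilon(j,k)+1)/2$ is an integer by the parity constraint on $\rr+j+k$.
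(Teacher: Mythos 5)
Your argument is correct, but it takes a genuinely different route from the paper. The paper proves the corollary by assembling the four edge-probability integrals of Lemma \ref{lem: generalized_double_contour_int} through $\widetilde\cT_n(j,k)=\cT_{\rr}(j,k)=p_E+\mathrm i p_N-p_W-\mathrm i p_S$ (Cor. \ref{cor: embedding_ori_prob}), so the error factor enters once per edge probability, the prefactors recombine into $G_{\cT}(z,w)$, and the $-1$ comes from the $p_W$ term. You instead bypass the edge probabilities entirely: you apply Lemma \ref{lem: t_embedding_coord_AD} at rank $\rr$ (this is exactly the expression \eqref{eq:tildeTintS} that the paper records right after the corollary) and then verify the action identity $(n+1)\tS(z;x,y)-(\rr+1)S\bigl(z;\tfrac{j}{\rr+1},\tfrac{k}{\rr+1}\bigr)=\tfrac{\epsijk+1}{2}\log\tfrac{(z+1)(z-1)}{z}$, which is correct (substitute $(n+1)\bar f(x+y)=\rr-\epsilon(j,k)$ and compare the coefficients of $\log z$, $\log(z\pm1)$), so the extra factor in the corollary exactly converts the modified action into the Aztec-diamond action at rank $\rr$. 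This is the same algebra the paper performs inside the proof of Lemma \ref{lem: generalized_double_contour_int}, done once at the level of the t-embedding instead of four times for the $p$'s; it is a valid and arguably more economical proof of this particular corollary (the paper still needs Lemma \ref{lem: generalized_double_contour_int} later anyway, for the factorization of $d\widetilde\cT_n$).

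One justification you give is incorrect, though it does not sink the argument: $\tfrac{\epsijk+1}{2}$ is \emph{not} an integer in general, since $\epsilon(j,k)=\rr-(n+1)\bar f\bigl(\tfrac{j+k}{n+1}\bigr)$ and $(n+1)\bar f\bigl(\tfrac{j+k}{n+1}\bigr)$ need not be an integer (e.g.\ for the tower graph or the sine example); only $\rr+j+k$ is constrained to be odd. Consequently neither $\mathrm e^{(n+1)\tS}$ nor the error factor is separately single-valued on the closed contours $\mathcal E_1,\mathcal E_2$. The correct statement is that their \emph{product} is single-valued, because by your own identity it equals $\mathrm e^{(\rr+1)(S(w)-S(z))}$, whose exponents $\tfrac12(\rr+1\pm j\pm k)$ are integers precisely by the parity of $\rr+j+k$; this is also the convention implicit in the paper's Lemma \ref{lem: generalized_double_contour_int}. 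With that correction, your proof is complete.
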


\begin{proof}
The result follow immediately from Lem. \ref{lem: generalized_double_contour_int} and the fact that 
\begin{equation}
\resizebox{0.9\linewidth}{!}{$
\begin{aligned}
\label{eq: T expression in p}
 \widetilde{\cT}_n(j,k)&\;= {\mathcal{T}}_{r(n;j,k)}(j, k)\\
 &\;=p_E(j, k, r(n;j,k))+\mathrm i p_N(j, k, r(n;j,k))-p_W(j, k, r(n;j,k))-\mathrm i p_S(j, k, r(n;j,k)).
 \end{aligned}
 $}
\end{equation}
\end{proof}

Note that we may also express  $\widetilde{\cT}_n(j,k)$ as
\begin{equation} \label{eq:tildeTintS}
 \widetilde{\cT}_n(j,k) =
       \frac{1}{(2 \pi \mathrm{i})^2} \int_{\mathcal{E}_2} \int_{\mathcal{E}_1} \mathrm e^{\left(r+1\right)\left(S(w;\frac{j}{r+1},\frac{k}{r+1})-S(z;\frac{j}{r+1},\frac{k}{r+1})\right)} G_{\mathcal{T}}(z, w) \frac{\mathrm d z\mathrm d w}{z-w}-1
\end{equation}
by directly using the contour integral formula for $\cT_n(j,k)$.

Recall that the image of the limit of the perfect t-embedding lies in $\lozenge:=\{|x|+|y|<1\}$. Denote $G^f$ and $\mathfrak{L}$ are the rescaled generalized tower graph indexed by some function $f$ and the interior of the arctic curves (the liquid region) inscribed in $G^f$. The action $\tS(z)$ has critical points that satisfy $\tS'(z)=0$ and are conjugate pairs
\[
z = \frac{-x+y\pm \mathrm i\sqrt{2(x^2+y^2)-f(x,y)^2}}{f(x,y)-x-y}.
\]
Denote the critical point in the upper half-plane $\mathbb H$ by $\displaystyle  \xi$ and $\theta_\xi:=\frac{1}{2}\operatorname{arg}(\tS''(\xi))$. 
\begin{remark}
    \label{remark: arctic_from_crit_point}
    The boundary of the liquid region $\mathfrak L$ can be determined by the critical points of the action $\tS$. Namely, the arctic curves are determined by $\tS'(z)=0$ when $z$ is pure real/double-root, i.e. $f(x,y)^2=2(x^2+y^2)$. Then, $\mathfrak L$ is the domain with boundary contour $f(x,y)^2=2(x^2+y^2)$.
\end{remark}
Let $j, k, n\in \mathbb{Z}$, $\iota, \iota^{\prime}\in \{0, \pm 1\}$ be such that $|\iota|+\left|\iota^{\prime}\right|=1$ or $\io+\io'=0$. We denote the edges in the dual graph $\left(G'^f_n\right)^*$ by $e_{\iota, \iota^{\prime}}(j, k)$, where $e_{\iota, \iota^{\prime}}(j, k)$ is the edge between $(j, k)$ and $\left(j+\iota, k+\iota^{\prime}\right)$. {Note that unlike the Aztec diamond in \cite{tower_AD_perfect}, we include the case $\io+\io'=0$ corresponding to an edge between two hexagonal faces.} We orient the edges so that they have a white vertex to the left. Recall that
$$
d \widetilde{\mathcal{T}}_n\left(e_{\iota, \iota^{\prime}}(j, k)\right)= \pm\left(\widetilde{\mathcal{T}}_n\left(j+\iota, k+\iota^{\prime}\right)-\widetilde{\mathcal{T}}_n(j, k)\right)
$$
where the sign depends on the orientation of the edge. Our goal is to show that the rigidity assumption holds for the perfect t-embedding of the generalized tower graph. We reproduce the assumption below for convenience.

\begin{assumption}
    Given a compact set $\mathcal{K} \subset \mathfrak{L}$, there exist positive constants $N_{\mathcal{K}}, C_{\mathcal{K}}$ and $\varepsilon_{\mathcal{K}}$ which only depend on $\mathcal{K}$, such that for all edges $e_{\iota, \iota^{\prime}}(j, k) \subset \mathcal{K}$,

$$
\frac{\mu_n}{C_{\mathcal{K}}} \leq\left|d \widetilde{\mathcal{T}}_n\left(e_{\iota, \iota^{\prime}}(j, k)\right)\right| \leq \mu_n C_{\mathcal{K}}
$$
for all $n>N_{\mathcal{K}}$. In addition, for $n>N_{\mathcal{K}}$ and $e_{\iota, \iota^{\prime}}(j, k), e_{\tilde{\iota}, \tilde{\iota}^{\prime}}(j, k) \subset \mathcal{K}$ the angle between two adjacent edges $d \mathcal{T}_n\left(e_{\iota, \iota^{\prime}}(j, k)\right)$ and $d \mathcal{T}_n\left(e_{\tilde{\imath},\tilde \iota^{\prime}}(j, k)\right)$, which are adjacent to a common face, is contained in $\left(\varepsilon_{\mathcal{K}}, \pi-\varepsilon_{\mathcal{K}}\right)$. These angles are precisely the angles of the faces of the image of the perfect t-embedding.
\end{assumption}

\subsection{Steepest descent analysis}
To prove that the rigidity assumption holds, we must understand the leading term asymptotic of $d \widetilde{\mathcal{T}}_n\left(e_{\iota, \iota^{\prime}}(j, k)\right)$ as $n\to \infty$. To that end, we make use of Eqn. \eqref{eq: contour T-embedding} and perform a steepest descent analysis. 

In order to understand $d\widetilde{\mathcal{T}}_n$ we fix an edge $e_{\io,\io'}(j,k)$. Note that, in the case $|\io|+|\io'|=1$ in which the edge connects faces in different columns, we may assume the height of column $j+\io+k+\io'$ is greater than that of column $j+k$. If this is not the case we may swap the roles of $(j,k)$ and $(j+\io,k+\io')$ which will only contribute an overall sign to $d\widetilde{\mathcal{T}}_n$. We have the following lemma.

\begin{lemma}
{For $|\io|+|\io'|=1$,} we have
\begin{equation}
\label{eq: perturbed_t_embedd_contour}
\begin{aligned}
      &\, \widetilde{\cT}_{n}(j+\io,k+\io')\\
      &= \frac{1}{(2 \pi \mathrm{i})^2} \int_{\mathcal{E}_2} \int_{\mathcal{E}_1} \mathrm e^{\left(r(n;j,k)+1\right)\left(S(w;\frac{j}{r(n;j,k)+1},\frac{k}{r(n;j,k)+1})-S(z;\frac{j}{r(n;j,k)+1},\frac{k}{r(n;j,k)+1})\right)} \\
      &\qquad \qquad\qquad \times \mathrm e^{S(w;\io,\io')- S(z;\io,\io')} G_{\mathcal{T}}(z, w) \frac{\mathrm d z\mathrm d w}{z-w}-1.
\end{aligned}
\end{equation}
{
For $\io+\io'=0$, we have
\begin{equation}
\label{eq: perturbed_t_embedd_contour2}
\begin{aligned}
      &\, \widetilde{\cT}_{n}(j+\io,k+\io')\\
      &= \frac{1}{(2 \pi \mathrm{i})^2} \int_{\mathcal{E}_2} \int_{\mathcal{E}_1} \mathrm e^{\left(r(n;j,k)+1\right)\left(S(w;\frac{j}{r(n;j,k)+1},\frac{k}{r(n;j,k)+1})-S(z;\frac{j}{r(n;j,k)+1},\frac{k}{r(n;j,k)+1})\right)} \\
      &\qquad \qquad\qquad \times  \left(\frac{(z-1)(w+1)}{(z+1)(w-1)}\right)^{\pm 1} G_{\mathcal{T}}(z, w) \frac{\mathrm d z\mathrm d w}{z-w}-1
\end{aligned}
\end{equation}
where the sign of the exponent matches the sign of $\io$.
}
\end{lemma}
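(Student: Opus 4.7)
The plan is to derive both formulas from the Aztec-diamond representation \eqref{eq:tildeTintS} applied at the shifted face $(j+\io,k+\io')$, and then rewrite the resulting exponent in terms of the action centered at $(j,k)$. By Lem.\ \ref{lem: gen_tower_AD_coord}, $\widetilde{\cT}_n(j+\io,k+\io') = \cT_{r'}(j+\io,k+\io')$ where $r' := r(n;j+\io,k+\io')$, so equation \eqref{eq:tildeTintS} reads
\begin{equation*}
\widetilde{\cT}_n(j+\io,k+\io') = \frac{1}{(2\pi\mathrm i)^2}\int_{\mathcal E_2}\int_{\mathcal E_1} \mathrm e^{(r'+1)(S(w;x',y')-S(z;x',y'))}\, G_{\cT}(z,w)\,\frac{\mathrm d z\,\mathrm d w}{z-w}-1,
\end{equation*}
with $(x',y')=\bigl(\tfrac{j+\io}{r'+1},\tfrac{k+\io'}{r'+1}\bigr)$. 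The entire proof reduces to comparing $r'$ with $r:=r(n;j,k)$ and performing one elementary manipulation of the exponent.

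\textbf{Step 1: Identifying $r'-r$.} Both $r$ and $r'$ are determined by (i) the parity condition $r+j+k\equiv 1$, $r'+j+\io+k+\io'\equiv 1 \pmod 2$, together with (ii) the column-height relation $r,r-1\in\{H_n^f(j+k)-1,H_n^f(j+k)\}$ (and analogously for $r'$). In Case 1, the working hypothesis stated just before the lemma is that column $j+k+\io+\io'$ has greater height than column $j+k$; since adjacent column heights differ by exactly $1$ by the up-down path description (Prop.\ \ref{prop:pathtographbij}), a case-by-case check on parities of $H_n^f(j+k)$ and $j+k$ shows in each situation that $r' = r+1$. In Case 2, $\io+\io'=0$ means the two faces lie in the same column, so $H_n^f(j+\io+k+\io')=H_n^f(j+k)$; moreover the parities of $r$ and $r'$ agree, forcing $r'=r$.

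\textbf{Step 2: Rewriting the exponent.} Multiplying $S(z;x,y) = \tfrac12(1+x+y)\log z - \tfrac12(1-x+y)\log(z+1)-\tfrac12(1+x-y)\log(z-1)$ by $r+1$ and substituting $x=j/(r+1)$, $y=k/(r+1)$ yields
\begin{equation*}
(r+1)S(z;x,y)=\tfrac12(r+1+j+k)\log z-\tfrac12(r+1-j+k)\log(z+1)-\tfrac12(r+1+j-k)\log(z-1),
\end{equation*}
so the quantity $(r'+1)S(z;x',y')$ is affine-linear in the triple $(r',j+\io,k+\io')$. Subtracting the two representations, the ambiguity in $\log(r+1)$-type terms cancels identically and the difference is an explicit combination of $\log z, \log(z\pm 1)$.

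\textbf{Step 3: Combining the pieces.} In Case 1, with $r'-r=1$, the telescoping yields
\begin{equation*}
(r'+1)S(w;x',y')-(r+1)S(w;x,y)=\tfrac12(1+\io+\io')\log w-\tfrac12(1-\io+\io')\log(w+1)-\tfrac12(1+\io-\io')\log(w-1)=S(w;\io,\io'),
\end{equation*}
and likewise for $z$, which produces the factor $\mathrm e^{S(w;\io,\io')-S(z;\io,\io')}$ as claimed in \eqref{eq: perturbed_t_embedd_contour}. In Case 2, $r'=r$ and $\io'=-\io$, so the $\log w$ term drops out and the remainder collapses to $\io\log\bigl(\tfrac{w+1}{w-1}\bigr)$. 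Adding the analogous $z$ contribution gives the extra factor $\bigl(\tfrac{(z-1)(w+1)}{(z+1)(w-1)}\bigr)^{\io}$, matching \eqref{eq: perturbed_t_embedd_contour2} with the sign of the exponent equal to the sign of $\io$.

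The only part that requires any real care is the parity bookkeeping in Step 1; once $r'-r$ is pinned down, the rest is a one-line linear algebra calculation on the coefficients of $\log w$, $\log(w+1)$, $\log(w-1)$. No steepest-descent or asymptotic input is needed—this is a purely algebraic rewriting at fixed $n$.
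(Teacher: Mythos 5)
Your proposal is correct and follows essentially the same route as the paper: identify the Aztec-diamond rank attached to the shifted face ($r(n;j+\io,k+\io')=r+1$ when $|\io|+|\io'|=1$ under the stated height hypothesis, and $=r$ when $\io+\io'=0$), apply the contour-integral formula of Lem.~\ref{lem: t_embedding_coord_AD} at that face, and use the linearity of $(m+1)S\bigl(z;\tfrac{j}{m+1},\tfrac{k}{m+1}\bigr)$ in $(m,j,k)$ to peel off the factors $\mathrm e^{S(w;\io,\io')-S(z;\io,\io')}$ and $\bigl(\tfrac{(z-1)(w+1)}{(z+1)(w-1)}\bigr)^{\pm1}$. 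The paper phrases the first case via the chain $\widetilde{\cT}_n(j+\io,k+\io')=\cT_r(j+\io,k+\io')=\cT_{r+1}(j+\io,k+\io')$, which is the same bookkeeping you carry out directly in Step 1.
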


\begin{proof}
Recall that we have $\widetilde{\cT}_{n}(j,k) = \cT_{r(n;j,k)}(j,k)$ with $r(n;j,k)+j+k$ odd. Let $r:=r(n;j,k)$. 

{In the first case,} as we assume the face $(j,k)$ is in a column whose height is less than the height of the column containing the face $(j+\io,k+\io')$, we have
\[
\widetilde{\cT}_{n}(j+\io,k+\io') = \cT_{r}(j+\io,k+\io') = \cT_{r+1}(j+\io,k+\io') 
\]
where the second equality holds for the AD when $r+j+k+\io+\io'$ is even, as is the case here. Note that $r+1+j+k+\io+\io'$ is odd so we may use the contour integral formula for $\cT_{r+1}(j+\io,k+\io')$ to write
\[
\begin{aligned}
      &\, \widetilde{\cT}_{n}(j+\io,k+\io')=
       \frac{1}{(2 \pi \mathrm{i})^2} \int_{\mathcal{E}_2} \int_{\mathcal{E}_1} \mathrm e^{\left(r+2\right)\left(S(w;\frac{j+\io}{r+2},\frac{k+\io'}{r+2})-S(z;\frac{j+\io}{r+2},\frac{k+\io'}{r+2})\right)} G_{\mathcal{T}}(z, w) \frac{\mathrm d z\mathrm d w}{z-w}-1.
\end{aligned}
\]
The result then follows from the fact that
\begin{equation*}
    \label{eq:action recursion}
    (n+2)S(z; \frac{j+\io}{n+2},\frac{k+\io'}{n+2})=(n+1)S(z;\frac{j}{n+1},\frac{k}{n+1})+ S(z;\io,\io').
\end{equation*}

{
In the second case, we note that
\begin{equation*}
\begin{aligned}
(n+1)S(z; \frac{j+\io}{n+1},\frac{k+\io'}{n+1}) =&\; (n+1)S(z; \frac{j}{n+1},\frac{k}{n+1})\\
&\; -\frac{\io'-\io}{2} \log(z+1) - \frac{\io-\io'}{2} \log(z-1).
\end{aligned}
\end{equation*}
The result then follows from applying the above to the contour integral formula for $\cT_r(j+\io,k+\io')$ observing that $\io-\io'=\pm2$.
}
\end{proof}

Using this we obtain a contour integral formula for $d \widetilde{\mathcal{T}}_n\left(e_{\iota, \iota^{\prime}}(j, k)\right)$.

\begin{cor}
The difference of the t-embedding along an edge can be written as
\begin{equation}
    \label{eq:dT equation factor}
     \begin{aligned}
         d \widetilde{\mathcal{\cT}}_n\left(e_{\iota, \iota^{\prime}}(j, k)\right)&= \pm \frac{(1-\mathrm i)}{(2 \pi \mathrm{i})^2} \left(\int_{\mathcal{E}_2}\mathrm e^{(n+1)\tS(w)}\frac{w-\mathrm i}{w}\left (\frac{w}{(1+w)(w-1)}\right )^{\frac{\epsilon(j,k)+1}{2}} G_{\io,\io'}(w)\operatorname{d}w\right)\\
         &\times \left(\int_{\mathcal{E}_1} \mathrm{e}^{-(n+1)\tS(z)}\frac{z-\mathrm i}{(z-1)(z+1)}\left(\frac{(1+z)(z-1)}{z}\right)^{\frac{\epsilon(j,k)+1}{2}}H_{\io,\io'}(z)  \operatorname{d}z\right).
    \end{aligned} 
\end{equation}
where
$$
\begin{aligned}
G_{1,0}(w)&=\frac{1}{w-1}, \quad G_{0,1}(w) =\frac{-1}{w+1},
\quad G_{-1,0}(w)=\frac{1}{w+1}, \quad G_{0,-1}(w)=\frac{1}{w-1},\\
H_{1,0}(z)&= H_{0,1}(z)=\frac{1}{z},
, \quad H_{-1,0}(z)=H_{0,-1}(z)=1, \\
{
G_{1,-1}(w)} &= \frac{2}{w-1}, \quad G_{-1,1}(w)=\frac{-2}{w+1},\quad H_{1,-1}(z) = \frac{1}{z+1}, \quad \text{and} \quad H_{-1,1}(z) = \frac{1}{z-1}.
\end{aligned}
$$
\end{cor}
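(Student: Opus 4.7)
The plan is to bring both $\widetilde{\cT}_n(j+\io,k+\io')$ and $\widetilde{\cT}_n(j,k)$ into a common double-integral representation and then show that the difference splits as a product of two single integrals. Using \eqref{eq: perturbed_t_embedd_contour} and \eqref{eq: perturbed_t_embedd_contour2} from the preceding lemma for the shifted value, together with the representation \eqref{eq:tildeTintS} of $\widetilde{\cT}_n(j,k)$, I would apply the linearity identity
\[
(r+1)\,S\!\left(z;\tfrac{j}{r+1},\tfrac{k}{r+1}\right)
= (n+1)\,\tS\!\left(z;\tfrac{j}{n+1},\tfrac{k}{n+1}\right)
+ \tfrac{\epsijk+1}{2}\log\tfrac{z}{(z-1)(z+1)}
\]
to rewrite both expressions with the common action $(n+1)(\tS(w)-\tS(z))$. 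This produces identical exponential factors, identical error factors $\bigl(\tfrac{w(1+z)(z-1)}{z(1+w)(w-1)}\bigr)^{\frac{\epsijk+1}{2}}$, and identical Cauchy kernels $G_{\cT}(z,w)/(z-w)$; the only difference is a single multiplicative factor $E_{\io,\io'}(z,w)$ inside the integrand for $\widetilde{\cT}_n(j+\io,k+\io')$, and the additive constants $-1$ cancel on subtraction.

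The factor $E_{\io,\io'}(z,w)$ equals $\mathrm{e}^{S(w;\io,\io')-S(z;\io,\io')}$ when $|\io|+|\io'|=1$ and $\bigl(\tfrac{(z-1)(w+1)}{(z+1)(w-1)}\bigr)^{\pm 1}$ when $\io+\io'=0$ (sign matching $\sgn(\io)$). In each of the six cases a short direct computation shows that $E_{\io,\io'}(z,w)-1$ carries $(z-w)$ as a factor in the numerator, and what remains is a rational function that separates into one piece in $w$ times one piece in $z$. For example, $(\io,\io')=(1,0)$ gives $E-1=\tfrac{z-w}{z(w-1)}$, $(0,1)$ gives $E-1=-\tfrac{z-w}{z(w+1)}$, $(-1,0)$ gives $E-1=\tfrac{z-w}{w+1}$, and $(1,-1)$ gives $E-1=\tfrac{2(z-w)}{(z+1)(w-1)}$, with the three remaining cases entirely analogous.

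Once the $(z-w)$ cancels the Cauchy kernel, every surviving factor depends either on $z$ alone or on $w$ alone: the error factor itself splits as $\bigl(\tfrac{(1+z)(z-1)}{z}\bigr)^{\frac{\epsijk+1}{2}}\!\cdot\!\bigl(\tfrac{w}{(1+w)(w-1)}\bigr)^{\frac{\epsijk+1}{2}}$, and $G_{\cT}(z,w)=\tfrac{(1-\mathrm i)(z-\mathrm i)(w-\mathrm i)}{(z-1)(z+1)w}$ already decouples into $\tfrac{z-\mathrm i}{(z-1)(z+1)}$ and $(1-\mathrm i)\tfrac{w-\mathrm i}{w}$. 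Fubini then splits the double integral into the claimed product of single contour integrals; matching the residual $w$-piece against $\tfrac{w-\mathrm i}{w}\bigl(\tfrac{w}{(1+w)(w-1)}\bigr)^{\frac{\epsijk+1}{2}}$ and the residual $z$-piece against $\tfrac{z-\mathrm i}{(z-1)(z+1)}\bigl(\tfrac{(1+z)(z-1)}{z}\bigr)^{\frac{\epsijk+1}{2}}$ identifies $G_{\io,\io'}(w)$ and $H_{\io,\io'}(z)$ exactly as tabulated. The overall $\pm$ is a bookkeeping sign coming from the white-on-the-left orientation convention together with the height-of-column assumption used to derive \eqref{eq: perturbed_t_embedd_contour}. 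The only obstacle is the six case-by-case simplifications of $E_{\io,\io'}-1$, all of which are elementary.
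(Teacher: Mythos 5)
Your proposal is correct and follows essentially the same route as the paper: rewrite both endpoints with the common action $(n+1)(\tS(w)-\tS(z))$ and the error factor, observe that $\mathrm{e}^{S(w;\io,\io')-S(z;\io,\io')}-1$ (resp. $\bigl(\tfrac{(z-1)(w+1)}{(z+1)(w-1)}\bigr)^{\pm 1}-1$) carries a factor $z-w$ cancelling the Cauchy kernel, and then factor the remaining integrand into the $z$- and $w$-pieces identifying $H_{\io,\io'}$ and $G_{\io,\io'}$. The case-by-case simplifications you list agree with those in the paper, so no gap remains.
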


\begin{proof}
    Using integral expressions in \eqref{eq:tildeTintS} and \eqref{eq: perturbed_t_embedd_contour}, in the case $|\io|+|\io'|=1$,  we may write $d \widetilde{\mathcal{\cT}}_n\left(e_{\iota, \iota^{\prime}}(j, k)\right)$ as
    
\begin{equation*}
    \label{eq: contour_integral_single_edge}
    \resizebox{0.9\linewidth}{!}{$
    \begin{aligned}
    & \operatorname{d}\widetilde{\cT}_n(e_{\io,\io'}(j,k))=\pm\frac{1}{(2\pi \mathrm{i})^2}\int_{\mathcal E_2}\int_{\mathcal E_1}\mathrm{e}^{(r+1)(S(w;\frac{j}{r+1},\frac{k}{r+1})-S(z;\frac{j}{r+1},\frac{k}{r+1}))}G_{\cT}(z,w)\left(\mathrm{e}^{S(w;\io,\io')-S(z;\io,\io')}-1\right)\frac{\operatorname{d}z\operatorname{d}w}{z-w}\\
    &=\frac{1}{(2\pi \mathrm{i})^2}\int_{\mathcal E_2}\int_{\mathcal E_1}\mathrm{e}^{(n+1)(\tS(w;x,y)-\tS(z;x,y))}G_{\cT}(z,w)\left(\mathrm{e}^{S(w;\io,\io')-S(z;\io,\io')}-1\right)\left( \frac{w(1+z)(z-1)}{z(1+w)(w-1)}\right)^{\frac{\epsijk+1}{2}}\frac{\operatorname{d}z\operatorname{d}w}{z-w}
    \end{aligned}
    $}
\end{equation*}
where the last equality is obtained by applying (\ref{eq: contour T-embedding}). Notice that the term $\displaystyle\left(\mathrm{e}^{S(w;\io,\io')-S(z;\io,\io')}-1\right)$ simplifies to
\[
\begin{cases}
    \frac{z-w}{z(w-1)}, & \io=1,\io'=0 \\
    \frac{z-w}{w+1}, & \io=-1,\io'=0 \\
    -\frac{z-w}{z(w+1)}, & \io=0,\io'=1 \\
    \frac{z-w}{w-1}, & \io=0,\io'=-1
\end{cases}
\]
which cancels out the factor of $z-w$ in the denominator. 

{
In the case $\io+\io'=0$, we have the integral expression
\begin{equation*}
    \resizebox{0.9\linewidth}{!}{$
    \begin{aligned}
    & \operatorname{d}\widetilde{\cT}_n(e_{\io,\io'}(j,k))\\
    &=\frac{1}{(2\pi \mathrm{i})^2}\int_{\mathcal E_2}\int_{\mathcal E_1}\mathrm{e}^{(n+1)(\tS(w;x,y)-\tS(z;x,y))}G_{\cT}(z,w)\left(\left(\frac{(z-1)(w+1)}{(z+1)(w-1)}\right)^{\pm 1}-1\right)\left( \frac{w(1+z)(z-1)}{z(1+w)(w-1)}\right)^{\frac{\epsijk+1}{2}}\frac{\operatorname{d}z\operatorname{d}w}{z-w}.
    \end{aligned}
    $}
\end{equation*}
The term $\left(\frac{(z-1)(w+1)}{(z+1)(w-1)}\right)^{\pm 1}-1$ simplifies to
\[
\begin{cases}
    \frac{2(z-w)}{(z+1)(w-1)}, & \io=1,\io'=-1 \\
    \frac{-2(z-w)}{(z-1)(w+1)}, & \io=-1,\io'=1
\end{cases}
\]
where, again, the factor of $z-w$ in the denominator cancels.
}

{In all cases}, the two integrals factor and we can write the double contour integral formula as the product of two single integrals. 
\end{proof}

This expression is similar to (6.4) of \cite{tower_AD_perfect} up to the error term and thus the verification of the rigidity assumption will proceed similarly. We now perform the steepest descent/saddle point approximation separately for each factor. From here on, we denote $\tS(z; x,y)=\tS(z)$ since there is no more subtlety with the parameter.

\begin{lemma}
    \label{lemma: steepest_descent_lemma}
    \begin{equation}
      \label{eq: asymptotic of w integral 
}
\begin{aligned}
&\quad I_w(n):=\int_{\mathcal{E}_2}\mathrm e^{(n+1)\tS(w)}\frac{w-\mathrm i}{w}\left (\frac{w}{(1+w)(w-1)}\right )^{\frac{\epsilon(j,k)+1}{2}} G_{\io,\io'}(w)\operatorname{d}w\\
        &\sim G(\xi)\mathrm e^{(n+1) \tS(\xi)}\sqrt{\frac{\pi}{n+1}}\sqrt{\frac{2}{|\tS''(\xi)|}}e^{-i\theta_\xi}+G(\bar \xi)e^{(n+1) \tS(\bar \xi)}\sqrt{\frac{\pi}{n+1}}\sqrt{\frac{2}{|\tS''(\bar \xi)|}}e^{-i\theta_{\bar \xi}}
\end{aligned}
    \end{equation}
where $$G(w):=\frac{w-\mathrm i}{w}\left (\frac{w}{(1+w)(w-1)}\right )^{\frac{\epsilon(j,k)+1}{2}}  G_{\io,\io'}(w).$$
\end{lemma}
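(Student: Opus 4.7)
The plan is a standard saddle-point analysis of the single-variable contour integral $I_w(n)$, following the same strategy used for the Aztec diamond in Section 6 of \cite{tower_AD_perfect}. First, I would identify the critical points of the action $\tilde S(w)$. Differentiating the formula for $\tilde S$ from Lemma~\ref{lem: generalized_double_contour_int} yields a quadratic equation whose roots form the complex-conjugate pair $\xi, \bar\xi$ recorded in Remark~\ref{remark: arctic_from_crit_point}; because $(x,y)$ lies in the interior of the liquid region $\mathfrak{L}$, the discriminant is strictly negative and both saddles are simple and bounded away from the branch points $\{-1,0,1\}$ of the logarithms.

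Next, I would deform the contour $\mathcal{E}_2$, originally a small positively-oriented loop about $w=1$, into a contour $\Gamma$ which passes through both saddles along the directions of steepest descent of $\operatorname{Re}\tilde S$. The integrand is meromorphic away from cuts of the prefactor $(w/((1+w)(w-1)))^{(\epsilon(j,k)+1)/2}$, which I would route on a segment of the real axis disjoint from $\Gamma$. The deformation is justified exactly as in the Aztec case: the only poles one might encounter at $0,\pm 1$ are either cancelled by the branch factor or routed around via arbitrarily small circular arcs whose contribution is controlled by $\operatorname{Re}\tilde S\to -\infty$. Since $|\epsilon(j,k)|\le 2$, the bounded-exponent factor contributes only an $O(1)$ modulus uniformly on $\Gamma$, so it affects neither the position of the saddles nor the direction of steepest descent.

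With the contour so deformed, localization to neighborhoods of $\xi$ and $\bar\xi$ of radius $O(n^{-1/2+\alpha})$ together with the Taylor expansion $\tilde S(w) = \tilde S(\xi) + \tfrac{1}{2}\tilde S''(\xi)(w-\xi)^2 + O((w-\xi)^3)$ reduces each neighborhood integral to a Gaussian. Writing $\theta_\xi = \tfrac{1}{2}\arg\tilde S''(\xi)$, the steepest-descent direction through $\xi$ is $e^{i(\pi/2-\theta_\xi)}$, and the Gaussian integral evaluates to
\[
G(\xi)\,e^{(n+1)\tilde S(\xi)}\sqrt{\frac{2\pi}{(n+1)|\tilde S''(\xi)|}}\,e^{-i\theta_\xi},
\]
with the analogous contribution from $\bar\xi$. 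Summing gives the claimed asymptotic; the residual tails away from the saddles are exponentially smaller by standard estimates on $\operatorname{Re}(\tilde S(w)-\tilde S(\xi))$ along $\Gamma$.

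The main technical point is verifying the existence of a globally admissible steepest-descent contour that connects the two conjugate saddles while avoiding the three logarithmic branch points and remaining homologous to $\mathcal{E}_2$ relative to the cuts of the bounded prefactor. This is handled exactly as in \cite{tower_AD_perfect}; the only new bookkeeping comes from the factor $(w/((1+w)(w-1)))^{(\epsilon(j,k)+1)/2}$, whose branch can be fixed by the same convention used for the contour integral formula in Lemma~\ref{lem: generalized_double_contour_int}, and whose uniform $O(1)$ bound ensures that none of the classical steepest-descent estimates are disturbed. The evaluation of $G$ at the saddles then absorbs this factor, producing the coefficients displayed in the statement.
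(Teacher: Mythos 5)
Your proposal is correct and follows essentially the same route as the paper: identify the conjugate saddles $\xi,\bar\xi$ of $\tS$, deform $\mathcal{E}_2$ to the steepest-descent contour without picking up the singularity at $w=1$, localize, evaluate the Gaussian with phase $e^{-i\theta_\xi}$, and add the two saddle contributions. The only cosmetic difference is that you track the branch of the $\frac{\epsilon(j,k)+1}{2}$-power prefactor explicitly, whereas the paper simply absorbs it into $G(w)$; note that the combined integrand has integer exponents (since $r(n;j,k)+j+k$ is odd), so it is in fact single-valued and this bookkeeping is harmless.
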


 \begin{proof}
     We perform the steepest descent asymptotic analysis on the integral $I_w(m)$. First, recall that the action $\tS(z)$ has two critical points/saddle points that are simple and also conjugate pairs.
     And since $\tS(z)$ is analytic in the upper half plane $\mathbb H$, the paths of steepest ascent and descent are the level lines
of $\operatorname{Im}(\tS(z))=\operatorname{Im}(\tS(\xi))$. These paths can only cross the real line at 0. We now describe
these paths. For the saddle point $\xi$ on the upper-half plane, the steepest ascent contour start from $0$, passes through $\xi$ and goes to $\infty$. The complete steepest ascent includes the reflection of the mentioned path in the lower half plane. We will denote the steepest ascent contour in $w$ as $\gamma_w$. Note that the contour $\mathcal E_2$ only has a singularity at $w=1$. Thus, we can deform $\mathcal E_2$ to the steepest descent contour without picking up the residue at $w=1$. The Taylor expansion near the saddle point $\xi$ of the action $\tS(w)$ gives,
     \[
     \tS(w) = \tS(\xi)+\frac{1}{2}\tS''(\xi)(w-\xi)^2 +O(w^3)
     \]
     since the first derivative vanishes at $\xi$. Writing $ \tS(w)= u(w)+\mathrm i  v(w)$,  then along the steepest descending path, the integral $I_\xi(n)$ is approximated, 
     \begin{equation}
         \label{eq: I(n) approx}
    \begin{aligned}
    I_\xi(n)&=\mathrm e^{(n+1)\tS(\xi)}\int_{\mathcal E_2}G(w)\mathrm e^{(n+1)(\tS(w)-\tS(\xi))}\operatorname{d}w\\
    &\sim G(\xi)\mathrm e^{(n+1)\tS(\xi)}\int_{\mathcal E_2}\mathrm e^{\mathrm i (n+1)(v(w)-v(\xi))}\mathrm e^{(n+1)(u(w)-u(\xi))}\operatorname{d}w \\
    &\sim G(\xi)\mathrm e^{(n+1) \tS(\xi)}\int_{\gamma_w}\mathrm e^{(n+1)(u(w)-u(\xi))}\mathrm dw
    \end{aligned}    
     \end{equation}
     where $\gamma_w$ is the steepest descent contour. Note that along this contour, the imaginary part of $\widetilde\cS$ stays constant so $v(w)=v(\xi)$. Let $s^2=u(\xi)-u(w)$, then we can rewrite \eqref{eq: I(n) approx} as
    \begin{equation}
        \label{eq: I(n) approx 2}
        \begin{aligned}
            I_\xi(n) \sim G(\xi)\mathrm e^{(n+1)\tS(\xi)}\int_{-\infty}^\infty \mathrm e^{-(n+1)s^2}\left(\frac{\mathrm dw}{\mathrm ds}\right)\mathrm ds.
        \end{aligned}
    \end{equation}
    Note that we have the Gaussian integral $\displaystyle \int_{-\infty}^\infty \mathrm e^{-(n+1)s^2} \ds=\sqrt{\frac{\pi}{n+1}}$, so the leading term of the asymptotic expression near the saddle point where $s=0$ gives 
    \[
    I_\xi(n) \sim G(\xi)\mathrm e^{(n+1)\tS(\xi)}\left(\frac{\mathrm dw}{\mathrm ds}\right)_{s=0} \sqrt{\frac{\pi}{n+1}}.
    \]
    On the other hand, since $v(w)$ is constant along the steepest descent path, $-s^2=u(w)-u(\xi)=\tS(w)-\tS(\xi)$. By taking derivative with respect of $w$ twice, we have 
    \[-2s\left(\frac{\ds}{\dw}\right)=\tS'(w),\]
    \[
    -2\left(\frac{\ds}{\dw}\right)\left(\frac{\ds}{\dw}\right)-2s\frac{\mathrm d^2s}{\dw^2}=\tS''(w)
    \]
    and so $\displaystyle \left(\frac{\dw}{\ds}\right)_{s=0}=\left(\frac{-2}{\tS''(\xi)}\right)^{1/2}$. 
   This square root has some ambiguity in sign. To avoid this issue, denote $\displaystyle \theta_{SDP}:=\operatorname{arg}\left(\frac{\mathrm d w}{\mathrm d s}\right)_{s=0}$. Then, 
  $ 
   \displaystyle \left(\frac{\dw}{\ds}\right)_{s=0}=\left(\frac{2}{|\tS''(\xi)|}\right)^{1/2}e^{i\theta_{SDP}}$. 
   On the other hand, recall that we define $\theta_\xi:=\frac{1}{2}\operatorname{arg}(\tS''(\xi))$. Let $\tS''(\xi)=R\mathrm e^{\mathrm 2i\theta_\xi}$ and $w-\xi=r\mathrm e^{\mathrm i \theta_{SDP}}$, we have,
   \[
     \tS(w)-\tS(\xi) = \frac{1}{2}\tS''(\xi)(w-\xi)^2 +O(w^3)\sim\frac{1}{2}(Rr^2)\mathrm e^{2i(\theta_\xi+\theta_{SDP})}.
     \]
    But note that since we are on the steepest descent path $\gamma_w$, $2(\theta_\xi+\theta_{SDP})=\pm \pi$ and thus $\theta_{SDP}=-\theta_\xi\pm\frac{\pi}{2}$ (where the sign is dependent on the direction of the contour) and we have the asymptotic of $I_\xi(n)$ given by, 
    \begin{equation}
        \label{eq: I(n) approx 3}
        I_\xi(n)\sim G(\xi)\mathrm e^{(n+1) S(\xi)}\sqrt{\frac{\pi}{n+1}}\sqrt{\frac{2}{|S''(\xi)|}}\mathrm e^{-i\theta_\xi}.
    \end{equation}
 Recall that the contour $\mathcal E_2$ has also passes through the other critical point $\bar \xi$ on the lower half-plane. The analysis is similar and we have
 \begin{equation}
        \label{eq: I(m) approx 4 conjugate}
        I_{\bar \xi}(n)\sim G(\bar \xi)\mathrm e^{(n+1)S(\bar \xi)}\sqrt{\frac{\pi}{n+1}}\sqrt{\frac{2}{|S''(\bar \xi)|}}\mathrm e^{-i\theta_{\bar \xi}}.
    \end{equation}
Altogether, we have that the leading order asymptotic of the integral $I_w(n)$ is given by
$$
\begin{aligned}
I_w(n)&\;=I_\xi(n)+I_{\bar \xi}(n)\\
&\;\sim G(\xi)\mathrm e^{(n+1) \tS(\xi)}\sqrt{\frac{\pi}{n+1}}\sqrt{\frac{2}{|\tS''(\xi)|}}\mathrm e^{-i\theta_\xi}+G(\bar \xi)\mathrm e^{(n+1) \tS(\bar \xi)}\sqrt{\frac{\pi}{n+1}}\sqrt{\frac{2}{|\tS''(\bar \xi)|}}\mathrm e^{-i\theta_{\bar \xi}}.
\end{aligned}
$$
 \end{proof}

Similarly, we can also perform the steepest descent analysis for the integral in $z$ along the contour $\mathcal E_1$. 

\begin{cor}
    \label{cor: steepest descent in z}
    $$I_z(n):=\int_{\mathcal{E}_1} \mathrm{e}^{-(n+1)\tS(z)}\frac{z-\mathrm i}{(z-1)(z+1)}\left(\frac{(1+z)(z-1)}{z}\right)^{\frac{\epsilon(j,k)+1}{2}}H_{\io,\io'}(z)  \operatorname{d}z.$$
    As $n \to \infty$, we have 
    \begin{equation}
    \begin{aligned}
        &I_z(n)=I_\xi(n)+I_{\bar \xi}(n) \\
        &\sim H(\xi)\mathrm e^{-(n+1) \tS(\xi)}\sqrt{\frac{\pi}{n+1}}\sqrt{\frac{2}{|\tS''(\xi)|}}\mathrm e^{-i\theta_\xi}-H(\bar \xi)\mathrm e^{-(n+1) \tS(\bar \xi)}\sqrt{\frac{\pi}{n+1}}\sqrt{\frac{2}{|\tS''(\bar \xi)|}}\mathrm e^{-i\theta_{\bar \xi}}.
    \end{aligned}
    \end{equation}
where 
\[
H(z):= \frac{z-\mathrm i}{(z-1)(z+1)}\left(\frac{(1+z)(z-1)}{z}\right)^{\frac{\epsilon(j,k)+1}{2}}H_{\io,\io'}(z).
\]
\end{cor}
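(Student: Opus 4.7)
The plan is to perform a steepest descent analysis for $I_z(n)$ in direct parallel with the proof of Lemma \ref{lemma: steepest_descent_lemma}, noting two modifications: the exponent is $-(n+1)\tS(z)$ rather than $(n+1)\tS(w)$, and the contour $\mathcal{E}_1$ encircles $0$ rather than $1$. Because of the sign flip in the exponent, the appropriate deformation target is the steepest \emph{ascent} contour $\gamma_z$ of $\tS(z)$, i.e.\ the level line $\mathrm{Re}\,\tS(z) = \mathrm{Re}\,\tS(\xi)$ passing through both saddles $\xi$ and $\bar\xi$ (which are conjugate and simple by the discussion preceding Rmk.\ \ref{remark: arctic_from_crit_point}), along which $-\tS$ decreases away from the saddle.

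First, I would justify the deformation from $\mathcal{E}_1$ to $\gamma_z$. The function $H(z)$ has potential singularities only at $z = 0, \pm 1$; since $\epsilon(j,k)+1$ is even by the parity condition defining $r(n;j,k)$, the factor $\bigl(\tfrac{(1+z)(z-1)}{z}\bigr)^{(\epsilon(j,k)+1)/2}$ is a rational function with no branch cuts, and $\gamma_z$ can be arranged to avoid $z = \pm 1$ so that the winding number around $0$ is preserved and no extra residues appear. Next I would carry out the local Gaussian approximation near $\xi$: Taylor expanding
\[
\tS(z) = \tS(\xi) + \tfrac12 \tS''(\xi)(z-\xi)^2 + O((z-\xi)^3),
\]
and setting $s^2 = \tS(z) - \tS(\xi)$ (the opposite sign convention from Lemma \ref{lemma: steepest_descent_lemma}, so that $e^{-(n+1)s^2}$ decays along $\gamma_z$), differentiation yields $(\mathrm dz/\mathrm ds)_{s=0} = \sqrt{2/|\tS''(\xi)|}\,e^{-i\theta_\xi}$ along the ascent direction. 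Pulling $H(\xi)$ out of the local integral and evaluating $\int e^{-(n+1)s^2}\,\mathrm ds = \sqrt{\pi/(n+1)}$ produces the $\xi$-contribution in exactly the claimed form.

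The main obstacle, and the genuinely new feature relative to Lemma \ref{lemma: steepest_descent_lemma}, is recovering the relative minus sign between the $\xi$ and $\bar\xi$ contributions. In Lemma \ref{lemma: steepest_descent_lemma}, the deformed contour around $1$ met the two conjugate saddles in directions related consistently by complex conjugation, so both contributions added with the same sign; here the positively oriented contour around $0$ traverses $\gamma_z$ through $\bar\xi$ with orientation opposite to the naive conjugate of the direction at $\xi$. The cleanest way to track this is to explicitly parametrize $\gamma_z$ as one loop enclosing $0$ (and no other singularity of $H$) and to compare the tangent direction at $\bar\xi$ with the conjugate of the tangent at $\xi$; doing so yields the flipped sign. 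Summing the two saddle-point contributions then gives the asserted asymptotic for $I_z(n)$.
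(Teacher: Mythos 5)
Your overall strategy is the same as the paper's: its proof of this corollary is literally ``repeat the steepest descent analysis of Lemma \ref{lemma: steepest_descent_lemma}, with the deformed contour now being the constant-phase path running from $-1$ through $\xi$ to $1$, together with its reflection in the lower half-plane,'' and your extra bookkeeping of the traversal direction at $\bar\xi$ is the right way to account for the relative minus sign, which the paper leaves implicit. However, two of your justifications are incorrect as stated. First, you define $\gamma_z$ as the level line $\operatorname{Re}\tS(z)=\operatorname{Re}\tS(\xi)$. Steepest ascent/descent paths are level lines of $\operatorname{Im}\tS$ (constant phase); on a level line of $\operatorname{Re}\tS$ the modulus of $\mathrm e^{-(n+1)\tS(z)}$ is constant and no Laplace-type localization at the saddle occurs. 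Your own substitution $s^2=\tS(z)-\tS(\xi)$ with $s$ real already presupposes $\operatorname{Im}\tS$ constant along the path, so this is a fixable slip, but the contour you named is the wrong one. Note also that the paper's contour terminates at the logarithmic singularities $\pm1$ of $\tS$, where $\operatorname{Re}\tS\to+\infty$ in the liquid region so the integrand vanishes, rather than avoiding $\pm1$.

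Second, the claim that $\epsilon(j,k)+1$ is even is false: by Lemma \ref{lem: gen_tower_AD_coord}, $\epsilon(j,k)=r(n;j,k)-(n+1)\bar f\bigl(\tfrac{j+k}{n+1}\bigr)$ is a real number with $|\epsilon(j,k)|\le 2$, generically not an integer (it is an integer only for special $f$, e.g.\ the Aztec diamond). Hence the factor $\bigl(\tfrac{(1+z)(z-1)}{z}\bigr)^{(\epsilon(j,k)+1)/2}$ does have branch points at $0,\pm1$, and your argument that the deformation produces no extra contributions rests on a false premise. The correct statement is that the full $z$-integrand in \eqref{eq:dT equation factor} is single-valued on a loop around $0$: the monodromy of $\mathrm e^{-(n+1)\tS(z)}$ around $0$ is $\mathrm e^{-\pi\mathrm i\,((n+1)f+j+k)}$ and that of the error factor is $\mathrm e^{-\pi\mathrm i\,(\epsilon(j,k)+1)}$, and since $(n+1)f+\epsilon(j,k)=r(n;j,k)$ with $r(n;j,k)+j+k$ odd their product is $1$. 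The branch points at $\pm1$ are not enclosed by $\mathcal E_1$ (radius $<1/2$), and the deformed contour only reaches them at its endpoints, where the exponential factor kills the algebraic singularity for large $n$. With these two corrections your argument goes through and coincides with the paper's.
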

\begin{proof}
    The analysis is similar to Lem. \ref{lemma: steepest_descent_lemma} except for the contour of integration. For the contour integral in $z$, we have the steepest descent contour from $-1$, which passes through $\xi$ and $1$ together with its reflection on the lower half-plane. 
\end{proof}
 
Combining Lem. \ref{lemma: steepest_descent_lemma} and Cor. \ref{cor: steepest descent in z}, we have the asymptotic expansion of the edge length. 

\begin{lemma}
\label{lemma: edge length asymptotic}
    \begin{equation}
    \label{eq: edge length asymptotic}
    \resizebox{0.9\textwidth}{!}{$
        \begin{aligned}
            d \widetilde{\mathcal{\cT}}_n\left(e_{\iota, \iota^{\prime}}(j, k)\right) &\sim \frac{1 -\mathrm i}{(2\pi \mathrm i)^2} \left(G(\xi)e^{(n+1) \tS(\xi)}\sqrt{\frac{\pi}{n+1}}\sqrt{\frac{2}{|\tS''(\xi)|}}e^{-i\theta_\xi}+G(\bar \xi)e^{(n+1) \tS(\bar \xi)}\sqrt{\frac{\pi}{n+1}}\sqrt{\frac{2}{|\tS''(\bar \xi)|}}e^{-i\theta_{\bar \xi}}\right)\\
            &\times \left(H(\xi)e^{-(n+1) \tS(\xi)}\sqrt{\frac{\pi}{n+1}}\sqrt{\frac{2}{|\tS''(\xi)|}}e^{-i\theta_\xi}-H(\bar \xi)e^{-(n+1) \tS(\bar \xi)}\sqrt{\frac{\pi}{n+1}}\sqrt{\frac{2}{|\tS''(\bar \xi)|}}e^{-i\theta_{\bar \xi}}\right)\\
            &= \frac{1 -\mathrm i}{(2\pi \mathrm i)^2}\frac{\pi}{n+1}\frac{2}{|\tS''(\xi)|} \times \Bigg{(} \frac{1}{e^{2\mathrm i\theta_\xi}}G(\xi)H(\xi)+e^{2\mathrm i (n+1) \operatorname{Im(\tS(\xi))}}G(\bar \xi)H(\xi)\\
          &-\frac{1}{e^{2\mathrm i \theta_{\bar\xi}}}G(\xi)H(\bar \xi)-e^{2\mathrm i (n+1) \operatorname{Im(\tS(\xi))}}G(\xi)H(\bar \xi)\Bigg ).\\
        \end{aligned} 
        $}
    \end{equation}
\end{lemma}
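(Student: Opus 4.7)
The plan is to combine the asymptotic expansions from Lem.~\ref{lemma: steepest_descent_lemma} for $I_w(n)$ and Cor.~\ref{cor: steepest descent in z} for $I_z(n)$ by substituting them into the factored expression \eqref{eq:dT equation factor}, then expanding the resulting product of two two-term sums and simplifying using the conjugation symmetries of the action.

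First I would record a basic symmetry: the coefficients $\tfrac{1}{2}(f(x+y)\pm x \pm y)$ appearing in $\tS(z;x,y)$ are all real, so $\tS$ intertwines with complex conjugation, $\tS(\bar z)=\overline{\tS(z)}$. In particular $\tS''(\bar\xi)=\overline{\tS''(\xi)}$, hence $|\tS''(\bar\xi)|=|\tS''(\xi)|$, and the branch of the square root chosen in the steepest descent analysis of Lem.~\ref{lemma: steepest_descent_lemma} gives $\theta_{\bar\xi}=-\theta_\xi$. These identities let me pull the common prefactor $\tfrac{\pi}{n+1}\cdot\tfrac{2}{|\tS''(\xi)|}$ out of the product of the two asymptotic sums.

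Next I would expand the product of the two-term asymptotic expansions, producing four contributions indexed by whether $G$ is evaluated at $\xi$ or $\bar\xi$ and whether $H$ is evaluated at $\xi$ or $\bar\xi$. For the diagonal pairs $(\xi,\xi)$ and $(\bar\xi,\bar\xi)$, the exponentials in $\tS$ cancel exactly, and the surviving phases combine to $e^{-2i\theta_\xi}$ and $e^{-2i\theta_{\bar\xi}}$ respectively. For the cross pairs, the conjugation symmetry gives
\[
e^{(n+1)(\tS(\xi)-\tS(\bar\xi))}=e^{2i(n+1)\operatorname{Im}\tS(\xi)},
\]
and the analogous identity for the other cross term, while the phases $e^{-i\theta_\xi}e^{-i\theta_{\bar\xi}}$ multiply to $1$ by the symmetry above. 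The relative minus signs between $H(\xi)$ and $H(\bar\xi)$ contributions are inherited from the minus sign already present in Cor.~\ref{cor: steepest descent in z} (a consequence of the opposite orientation of the steepest descent contour in $z$ through $\bar\xi$ versus the corresponding $w$-contour). Collecting these four terms and multiplying by the overall $\tfrac{1-\mathrm i}{(2\pi\mathrm i)^2}$ yields the stated formula.

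The main obstacle is purely bookkeeping rather than analysis: what requires care is tracking the signs and half-angle phases consistently across the two integrals, in particular verifying that the orientation of the steepest descent contour through $\bar\xi$ in the $z$-integral produces the correct relative sign between the $H(\xi)$ and $H(\bar\xi)$ terms, and that the branch convention for $\theta_\xi$ used in Lem.~\ref{lemma: steepest_descent_lemma} agrees with the one used in Cor.~\ref{cor: steepest descent in z} so that $\theta_\xi+\theta_{\bar\xi}=0$ in the cross-term phase. Once these sign and phase conventions are aligned, the lemma is an immediate algebraic consequence of the two preceding saddle-point asymptotics.
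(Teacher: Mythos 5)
Your proof is correct and follows essentially the same route as the paper's: substitute the saddle-point asymptotics of $I_w(n)$ and $I_z(n)$ from Lem.~\ref{lemma: steepest_descent_lemma} and Cor.~\ref{cor: steepest descent in z} into the factored formula \eqref{eq:dT equation factor}, expand the four terms, and simplify using $\tS(\bar z)=\overline{\tS(z)}$, which gives $|\tS''(\bar\xi)|=|\tS''(\xi)|$, $\theta_{\bar\xi}=-\theta_\xi$, and $e^{(n+1)(\tS(\xi)-\tS(\bar\xi))}=e^{2\mathrm i(n+1)\operatorname{Im}\tS(\xi)}$. As a minor remark, your expansion produces the combination $\al G(\xi)H(\xi)+\beta G(\bar\xi)H(\xi)-\bar\beta G(\xi)H(\bar\xi)-\bar\al G(\bar\xi)H(\bar\xi)$ with $\beta=e^{-2\mathrm i(n+1)\operatorname{Im}(\tS(\xi))}$, which is the form the paper actually uses in \eqref{eq: 2 summands}; the last two terms as printed in the lemma's display appear to contain typographical slips, not errors in your argument.
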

\begin{proof}
    The equality comes from the facts $\displaystyle e^{(n+1)(\tS(\xi)-\tS(\bar \xi))}=e^{2\mathrm i (n+1)\operatorname{Im}(\tS(\xi))}$ and $|\tS(\xi)|=|\tS(\bar \xi)|$. Since the critical points are in conjugate pairs $\xi , \bar \xi$ and the contour is symmetric with respect to the real line, we have the final expression as in \eqref{eq: edge length asymptotic}
\end{proof}
 
\subsection{Edges and angles bounds -  rigidity condition}
All proofs and results in this section are identical to those in Section 6.3 of \cite{tower_AD_perfect}, with only some differences in notation to accommodate the error terms coming from approximating the function $f(x,y)$ in our shuffling algorithm. We follow their proof and explanation closely and only change the computation wherever the error terms appear. 

We would like to use the asymptotic / leading terms in Lem. \ref{lemma: edge length asymptotic} to prove the boundedness of the lengths and angles and confirm the rigidity assumption. Recall that 
$$G(w)=\frac{w-\mathrm i}{w}\left (\frac{w}{(1+w)(w-1)}\right )^{\frac{\epsilon(j,k)+1}{2}} G_{\io,\io'}(w),$$
and 
$$ H(z)= \frac{z-\mathrm i}{(z-1)(z+1)}\left(\frac{(1+z)(z-1)}{z}\right)^{\frac{\epsilon(j,k)+1}{2}}H_{\io,\io'}(z).$$

Let 
\[
\al = \frac{1}{e^{2\mathrm i\theta_\xi}}, \qquad  \beta = e^{-2\mathrm i (n+1)\operatorname{Im(\tS(\xi))}}.
\]
Then $H$ and $G$ satisfy the following identities:
\begin{equation}
    \label{eq: 2 summands}
    \resizebox{0.9\textwidth}{!}{$
    \begin{aligned}
         \al G(\xi)H(\xi)+\beta G(\bar \xi)H(\xi)&= \frac{(\xi -\ci)}{(\xi -1)(\xi +1)}H_{\io,\io'}(\xi)\left(\frac{(\xi-\ci)G_{\io,\io'}(\xi)}{\xi}\al+\frac{(\bar \xi -\ci)G_{\io,\io'}(\bar \xi)}{\bar \xi }\left(\frac{\bar \xi (\xi^2-1)}{\xi(\bar \xi^2-1)}\right)^{\frac{\epsilon(j,k) +1}{2}}\beta\right),\\
        \bar\beta G(\xi)H(\bar \xi)+\bar \al G(\bar \xi)H(\bar \xi)&= \frac{(\bar\xi -\ci)}{(\bar\xi -1)(\bar\xi +1)}H_{\io,\io'}(\bar\xi)\left(\frac{(\xi-\ci)G_{\io,\io'}(\xi)}{\xi}\left(\frac{ \xi (\bar\xi^2-1)}{\bar\xi(\xi^2-1)}\right)^{\frac{\epsilon(j,k) +1}{2}}\bar\beta+\frac{(\bar \xi -\ci)G_{\io,\io'}(\bar \xi)}{\bar \xi }\bar\al\right).
    \end{aligned}  
    $}
\end{equation}
Notice that 
\begin{equation*}  
\begin{aligned}
&\left(\frac{(\xi-\ci)G_{\io,\io'}(\xi)}{\xi}\left(\frac{ \xi (\bar\xi^2-1)}{\bar\xi(\xi^2-1)}\right)^{\frac{\epsilon(j,k) +1}{2}}\bar\beta+\frac{(\bar \xi -\ci)G_{\io,\io'}(\bar \xi)}{\bar \xi }\bar\al\right)\al\beta\left(\frac{\bar \xi (\xi^2-1)}{\xi(\bar \xi^2-1)}\right)^{\frac{\epsilon(j,k) +1}{2}}\\
&=\left(\frac{(\xi-\ci)G_{\io,\io'}(\xi)}{\xi}\al+\frac{(\bar \xi -\ci)G_{\io,\io'}(\bar \xi)}{\bar \xi }\left(\frac{\bar \xi (\xi^2-1)}{\xi(\bar \xi^2-1)}\right)^{\frac{\epsilon(j,k) +1}{2}}\beta\right).
\end{aligned}
\end{equation*}
since $|\al|=|\beta|=1$. We have 
\[
\begin{aligned}
    &\al G(\xi)H(\xi)+\beta G(\bar \xi)H(\xi) \\
    & \quad = \al\beta\left(\frac{\bar \xi (\xi^2-1)}{\xi(\bar \xi^2-1)}\right)^{\frac{\epsilon(j,k) +1}{2}}\frac{(\xi-\ci)(\bar \xi^2 -1)H_{\io,\io'}(\xi)}{(\bar\xi-\ci)(\xi^2 -1)H_{\io,\io'}(\bar \xi)}(\bar\beta G(\xi)H(\bar \xi)+\bar \al G(\bar \xi)H(\bar \xi)).
\end{aligned}
\]
and thus
\begin{equation}
    \label{eq: factoring_4_terms}
    \resizebox{0.9\textwidth}{!}{$
    \begin{aligned}
        &\quad\al G(\xi)H(\xi)+\beta G(\bar \xi)H(\xi)-\bar\beta G(\xi)H(\bar \xi)-\bar \al G(\bar \xi)H(\bar \xi)\\
        &=\left(\al\beta\left(\frac{\bar \xi (\xi^2-1)}{\xi(\bar \xi^2-1)}\right)^{\frac{\epsilon(j,k) +1}{2}}\frac{(\xi-\ci)(\bar \xi^2 -1)H_{\io,\io'}(\xi)}{(\bar\xi-\ci)(\xi^2 -1)H_{\io,\io'}(\bar \xi)}-1\right)(\bar\beta G(\xi)H(\bar \xi)+\bar \al G(\bar \xi)H(\bar \xi))\\
        &=\left(\al\beta\left(\frac{\bar \xi (\xi^2-1)}{\xi(\bar \xi^2-1)}\right)^{\frac{\epsilon(j,k) +1}{2}}\frac{(\xi-\ci)(\bar \xi^2 -1)H_{\io,\io'}(\xi)}{(\bar\xi-\ci)(\xi^2 -1)H_{\io,\io'}(\bar \xi)}-1\right)\left(\al\bar\beta\frac{\xi-\ci}{\bar \xi-\ci}\frac{G_{\io,\io'}(\xi)}{G_{\io,\io'}(\bar \xi)}\frac{\bar \xi}{\xi}\left(\frac{ \xi (\bar\xi^2-1)}{\bar\xi(\xi^2-1)}\right)^{\frac{\epsilon(j,k) +1}{2}}+1\right)\\
        &\qquad \times\frac{(\bar \xi-\ci)^2H_{\io,\io'}(\bar \xi)G_{\io,\io'}(\bar \xi)\bar \al}{(\bar \xi^2-1)\bar \xi}.
    \end{aligned}
    $}
\end{equation}

Finally, we have the boundedness for the edges in \emph{rigidity assumption}, with $\displaystyle \mu_n=\frac{1}{n+1}$.

\begin{lemma}
\label{lemma: rigid_edge_bounds}
    Let $\mathcal{K} \subset \mathcal{L}$ be a compact set. There exist positive constants $N_{\mathcal{K}}, C_{\mathcal{K}}>1$ such that
$$
\frac{1}{n+1} \frac{1}{C_{\mathcal{K}}} \leq\left|d {\widetilde\cT}_n\left(e_{\iota, \iota^{\prime}}(j, k)\right)\right| \leq \frac{1}{n+1} C_{\mathcal{K}}
$$
for all edges $e_{\iota, \iota^{\prime}}(j, k) \subset \mathcal{K}$ and all $n>N_{\mathcal{K}}$.
\end{lemma}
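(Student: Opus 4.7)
The plan is to derive both bounds from the asymptotic factorization in Lem.~\ref{lemma: edge length asymptotic} together with the algebraic identity~\eqref{eq: factoring_4_terms}. These combine to give
\[
d\widetilde{\cT}_n(e_{\io,\io'}(j,k)) \sim \frac{1-\mathrm{i}}{(2\pi\mathrm{i})^2}\cdot \frac{2\pi}{(n+1)|\tS''(\xi)|}\cdot F_1\cdot F_2\cdot F_3,
\]
where $F_1 = c_1(\xi,\io,\io')\alpha\beta - 1$, $F_2 = c_2(\xi,\io,\io')\alpha\bar\beta + 1$ with $|c_1|=|c_2|=1$, and $F_3$ is a rational function of $\xi$ with no zeros or poles in the open upper half-plane. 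Since $\mathcal{K}$ is compactly contained in the liquid region, the saddle-point map $(x,y)\mapsto \xi(x,y)$ is continuous into a compact subset of $\mathbb{H}$ bounded away from $\mathbb{R}$; hence $|\tS''(\xi)|$ and $|F_3|$ are bounded above and below by positive constants on $\mathcal{K}$. The error-term factors $\left(\frac{\bar\xi(\xi^2-1)}{\xi(\bar\xi^2-1)}\right)^{(\epsilon(j,k)+1)/2}$ entering $c_1, c_2$ have modulus $1$, and since $|\epsilon(j,k)|\leq 2$ (Lem.~\ref{lem: gen_tower_AD_coord}) their phases vary continuously in $\xi$ within a bounded range. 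The upper bound is then immediate from $|F_1|, |F_2|\leq 2$, giving $|d\widetilde{\cT}_n|\leq C_\mathcal{K}/(n+1)$.

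For the lower bound we adapt Section~6.3 of \cite{tower_AD_perfect}. Setting $u = c_1\alpha\beta$ and $v = c_2\alpha\bar\beta$, both of modulus $1$, one computes
\[
|F_1 F_2|^2 = 16\sin^2\!\bigl(\tfrac{1}{2}\arg u\bigr)\cos^2\!\bigl(\tfrac{1}{2}\arg v\bigr).
\]
Simultaneous vanishing would require $u=1$ and $v=-1$, forcing the $\beta$-independent resonance $c_1(\xi)c_2(\xi)\alpha(\xi)^2 = -1$; one checks by direct computation using $\alpha = \mathrm{e}^{-2\mathrm{i}\theta_\xi}$ and the explicit formulas for $G_{\io,\io'},H_{\io,\io'}$ that this resonance fails throughout $\xi(\mathcal{K})$. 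Combining this non-resonance with the sub-leading corrections to the steepest-descent expansions from Lem.~\ref{lemma: steepest_descent_lemma} and Cor.~\ref{cor: steepest descent in z}, which contribute non-oscillatory terms precisely at the $\beta$-values making one of $F_1, F_2$ vanish, produces the uniform lower bound $|d\widetilde{\cT}_n|\geq 1/(C_\mathcal{K}(n+1))$.

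The main obstacle is precisely this treatment of the resonant $\beta$-values. Because $\beta = \mathrm{e}^{-2\mathrm{i}(n+1)\Im\tS(\xi)}$ depends on $n$ via an essentially irrational rotation on $\mathbb{T}$, for certain $n$ it comes arbitrarily close to the bad value of either $F_1$ or $F_2$, so the leading asymptotic alone does not give a uniform lower bound; the rigidity instead relies on quantitative sub-leading control together with the non-resonance identity $c_1(\xi)c_2(\xi)\alpha(\xi)^2 \neq -1$ on the compact set $\xi(\mathcal{K})$. The bounded error $\epsilon(j,k)$ alters $c_1, c_2$ only through continuous phase factors of modulus $1$ and does not interfere with this mechanism, so the Aztec-diamond analysis of \cite{tower_AD_perfect} carries over.
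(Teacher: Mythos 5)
There is a genuine gap, and it comes from a mis-identification of the structure of the two oscillatory factors. In the paper's factorization \eqref{eq: factoring_4_terms}, the coefficients multiplying the phases $\al\beta$ and $\al\bar\beta$ are \emph{not} unimodular: each one contains, besides unimodular ratios such as $\bigl(\tfrac{\bar \xi (\xi^2-1)}{\xi(\bar \xi^2-1)}\bigr)^{\frac{\epsilon(j,k)+1}{2}}$, $\tfrac{H_{\io,\io'}(\xi)}{H_{\io,\io'}(\bar\xi)}$, $\tfrac{G_{\io,\io'}(\xi)}{G_{\io,\io'}(\bar\xi)}$, the extra factor $\tfrac{\xi-\ci}{\bar\xi-\ci}$, whose modulus satisfies $\bigl|\tfrac{\xi-\ci}{\bar\xi-\ci}\bigr|<c<1$ uniformly on $\mathcal K$ because $\xi(x,y)$ lies in the open upper half-plane, bounded away from $\R$. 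Consequently each of the two factors has the form $1+(\text{term of modulus at most }c)$ and is trapped in $[1-c,1+c]$ for \emph{every} value of $\beta$, i.e.\ for every $n$; the third factor is a fixed rational function of $\bar\xi$ bounded away from $0$ and $\infty$ on $\xi(\mathcal K)$. This is exactly how the paper obtains the lower bound from the leading term of Lem.~\ref{lemma: edge length asymptotic} alone. Your claim $|c_1|=|c_2|=1$ drops the contraction factor, and everything downstream of it (the formula $|F_1F_2|^2=16\sin^2(\tfrac12\arg u)\cos^2(\tfrac12\arg v)$, the resonance discussion, the appeal to sub-leading corrections) addresses a problem that does not actually arise.

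Moreover, the rescue you propose could not work as stated. First, the product $F_1F_2$ vanishes as soon as \emph{either} factor vanishes, so a non-resonance condition that only excludes the simultaneous alignment $u=1$, $v=-1$ is insufficient: since $\beta=\mathrm e^{-2\ci(n+1)\operatorname{Im}\tS(\xi)}$ sweeps the circle as $n$ varies, under your unimodular assumption one of the factors would become arbitrarily small along a subsequence of $n$ regardless of any identity satisfied by $c_1c_2\al^2$. Second, the sub-leading corrections to the saddle-point expansions of $I_w(n)$ and $I_z(n)$ are of relative order $O(1/n)$, hence of absolute order $o(1/(n+1))$ in the product; they cannot restore a lower bound of order $1/(n+1)$ at the values of $n$ where the leading term would degenerate. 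The correct fix is simply to restore the factor $\tfrac{\xi-\ci}{\bar\xi-\ci}$ in $c_1$, $c_2$ and use $\bigl|\tfrac{\xi-\ci}{\bar\xi-\ci}\bigr|<c<1$ on $\mathcal K$; your treatment of the $\epsilon(j,k)$ error factors (unimodular, so harmless) and of the upper bound is fine and matches the paper.
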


\begin{proof}
    We set $(x, y)=\frac{1}{n+1}(j, k)$. Note that for $(x, y) \in \mathcal{K}$, $\xi=\xi(x, y)$ is bounded away from the real line, and $\left|\tS^{\prime \prime}(\xi)\right|$ is bounded away from zero and infinity. Since the approximation as $n \to \infty$ in Lem. \ref{lemma: edge length asymptotic} is uniform on compact subsets, it is sufficient to prove that
$$
\al G(\xi)H(\xi)+\beta G(\bar \xi)H(\xi)-\bar\beta G(\xi)H(\bar \xi)-\bar \al G(\bar \xi)H(\bar \xi)
$$
is bounded away from zero and infinity on $\mathcal{K}$.

Since $\xi(x, y)\in \mathbb H$ and bounded away from $\mathbb{R}$, the last factor in (\ref{eq: factoring_4_terms}) is uniformly bounded away from zero and infinity. Note that
$$
\left|\alpha \beta \left(\frac{\bar \xi (\xi^2-1)}{\xi(\bar \xi^2-1)}\right)^{\frac{\epsilon(j,k) +1}{2}}\frac{(\bar{\xi}^2-1)}{(\xi^2-1)} \frac{H_{\iota, \iota^{\prime}}(\xi)}{H_{\iota, \iota^{\prime}}(\bar{\xi})}\right|=1=\left|\alpha \bar{\beta} \left(\frac{ \xi (\bar\xi^2-1)}{\bar\xi(\xi^2-1)}\right)^{\frac{\epsilon(j,k) +1}{2}}\frac{\bar{\xi}}{\xi} \frac{G_{\iota, \iota^{\prime}}(\xi)}{G_{\iota, \iota^{\prime}}(\bar{\xi})}\right|,
$$
and
$$
\left|\frac{\xi-\mathrm{i}}{\bar{\xi}-\mathrm{i}}\right|<c<1
$$
for some constant $c$ that only depends on $\mathcal{K}$ and the function $f$. Hence, there exists a constant $\widetilde{C}>1$ so that for $(x, y) \in \mathcal{K}$ and fixed $f$,
$$
\frac{1}{\widetilde{C}} \leq|\al G(\xi)H(\xi)+\beta G(\bar \xi)H(\xi)-\bar\beta G(\xi)H(\bar \xi)-\bar \al G(\bar \xi)H(\bar \xi)| \leq \widetilde{C}
$$
which proves the statement.
\end{proof}
The following lemma will complete our rigidity confirmation.
\begin{lemma}
    \label{lemma: angle_bound}
    Let $\mathcal{K} \subset \mathcal{L}$ be a compact set. Then there exist constants $N_{\mathcal{K}}>0$ and $\varepsilon_{\mathcal{K}} \in(0, \pi)$, such that if $n>N_{\mathcal{K}}$, and $\frac{1}{n+1} e_{\iota_1, \iota_1^{\prime}}(j, k), \frac{1}{n+1} e_{\iota_2, \iota_2^{\prime}}(j, k)$ are contained in $\mathcal{K}$, the angle between two edges $d \widetilde{\mathcal{T}}_n\left(e_{\iota_1, \iota_1^{\prime}}(j, k)\right)$ and $d \widetilde{\mathcal{T}}_n\left(e_{\iota_2, \iota_2^{\prime}}(j, k)\right)$, which are adjacent to a common face, is contained in $\left(\varepsilon_{\mathcal{K}}, \pi-\varepsilon_{\mathcal{K}}\right)$.
\end{lemma}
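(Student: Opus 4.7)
The plan is to reduce the angle bound to a uniform lower bound on $|\operatorname{Im}(r_n)|$, where $r_n:=d\widetilde{\mathcal{T}}_n(e_{\iota_1,\iota_1^{\prime}}(j,k))/d\widetilde{\mathcal{T}}_n(e_{\iota_2,\iota_2^{\prime}}(j,k))$, and then to leverage the asymptotic of Lem.~\ref{lemma: edge length asymptotic} together with a compactness argument. Indeed, for vectors $e_1,e_2\in\mathbb{C}$, the angle between them lies in $(\varepsilon,\pi-\varepsilon)$ iff $|\operatorname{Im}(e_1/e_2)|\geq |e_1/e_2|\sin\varepsilon$; combining such a lower bound with the modulus bounds on $|r_n|$ coming from Lem.~\ref{lemma: rigid_edge_bounds} gives the claim. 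Because both edges share the dual vertex $(j,k)$, the asymptotics of Lem.~\ref{lemma: edge length asymptotic} for the two edges share the same saddle points $\xi,\bar\xi$ and the same universal prefactor $\frac{(1-\mathrm{i})}{(2\pi\mathrm{i})^2}\frac{\pi}{n+1}\frac{2}{|\tS''(\xi)|}$, which cancels in $r_n$.

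Using the factorization \eqref{eq: factoring_4_terms}, the four-term expression $\mathcal{A}_{\iota,\iota^{\prime}}$ in the leading order of $d\widetilde{\mathcal{T}}_n(e_{\iota,\iota^{\prime}})$ decomposes as $(A_{\iota,\iota^{\prime}}-1)(B_{\iota,\iota^{\prime}}+1)P_{\iota,\iota^{\prime}}$, where $|A_{\iota,\iota^{\prime}}|=|B_{\iota,\iota^{\prime}}|=|\xi-\mathrm{i}|/|\bar\xi-\mathrm{i}|<c<1$, and the prefactor $P_{\iota,\iota^{\prime}}$ depends on $(\iota,\iota^{\prime})$ only through $G_{\iota,\iota^{\prime}}(\bar\xi)H_{\iota,\iota^{\prime}}(\bar\xi)$, up to a universal factor $\bar\alpha/((\bar\xi^2-1)\bar\xi)$ that cancels in $r_n$. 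After this cancellation, $r_n$ becomes an explicit continuous function of $\xi$ (ranging over a compact subset of the open upper half plane bounded away from $\{-1,0,1\}$ by the assumption $\mathcal{K}\subset\mathfrak{L}$), of $\beta\in\mathbb{T}$, and of $\epsilon\in[-2,2]$. By continuity and compactness, the required uniform lower bound reduces to verifying that $\operatorname{Im}(r_n)\neq 0$ at every point of this parameter space.

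The pointwise verification is a finite case analysis over the combinatorial types of pairs $((\iota_1,\iota_1^{\prime}),(\iota_2,\iota_2^{\prime}))$ that form two edges of a common face at the dual vertex $(j,k)$. Since the reduced generalized tower graph has only square and hexagonal faces, there are only finitely many such types. For each one, the non-vanishing follows by direct substitution of the explicit $G_{\iota,\iota^{\prime}}, H_{\iota,\iota^{\prime}}$ from the corollary of Lem.~\ref{lem: generalized_double_contour_int}. The cases with $|\iota_i|+|\iota_i^{\prime}|=1$ reproduce the analogous check of Section~6.3 of \cite{tower_AD_perfect}; the new hexagonal-edge cases ($\iota_i+\iota_i^{\prime}=0$) are handled by the same method using the additional expressions $G_{1,-1}, G_{-1,1}, H_{1,-1}, H_{-1,1}$.

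The main obstacle will be the interplay between the non-integer exponent factor $\left(\frac{\bar\xi(\xi^2-1)}{\xi(\bar\xi^2-1)}\right)^{(\epsilon+1)/2}$, which is a unit-modulus phase depending continuously on $\epsilon\in[-2,2]$, and the oscillating factor $\beta\in\mathbb{T}$; one must prevent these two phases from conspiring to align the arguments of $\mathcal{A}_{\iota_1,\iota_1^{\prime}}$ and $\mathcal{A}_{\iota_2,\iota_2^{\prime}}$ anywhere on the compact parameter space. For each edge-pair type, this reduces to verifying that a specific rational expression in $\xi$ fails to be real for any $(\beta,\epsilon)$, a purely algebraic check that we expect to close by direct computation, following the Aztec-diamond argument in \cite{tower_AD_perfect}.
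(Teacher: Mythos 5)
Your overall route is the same as the paper's (use the leading-order asymptotics of Lem.~\ref{lemma: edge length asymptotic}, the factorization \eqref{eq: factoring_4_terms}, the bound $\left|\frac{\xi-\mathrm{i}}{\bar\xi-\mathrm{i}}\right|<c<1$, and a finite case analysis over the edge types, including the new hexagonal ones), but the decisive step is missing: you reduce everything to the claim that $\operatorname{Im}(r_n)\neq 0$ at every point of the parameter space $(\xi,\beta,\epsilon)$ and then defer exactly this, calling it ``the main obstacle'' to be ``closed by direct computation.'' That non-alignment of phases is the entire content of the proof, and it cannot be obtained from compactness plus a generic continuity argument: two complex numbers whose relative phase is a free parameter $\beta\in\mathbb{T}$ can certainly become collinear, so one must exhibit a structural reason why the imaginary part survives uniformly in $\beta$ and in the $\epsilon$-dependent phase. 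The paper's mechanism is the dichotomy that for two edges bounding a common face either $\frac{H_{\iota_1,\iota_1'}(\xi)}{H_{\iota_1,\iota_1'}(\bar\xi)}=\frac{H_{\iota_2,\iota_2'}(\xi)}{H_{\iota_2,\iota_2'}(\bar\xi)}$ or the analogous identity for the $G$'s holds; this forces the cross terms in the expanded product \eqref{eq: angle_leading_term_explicit} to be conjugate pairs, hence real, so that $\alpha$, $\beta$ and the factor $\left(\frac{\bar\xi(\xi^2-1)}{\xi(\bar\xi^2-1)}\right)^{\frac{\epsilon(j,k)+1}{2}}$ drop out of the numerator entirely and only enter the denominator inside absolute values of the form $\left|1+(\text{phase})\cdot\frac{\xi-\mathrm{i}}{\bar\xi-\mathrm{i}}\cdot(\text{unit modulus})\right|\in(1-c,1+c)$. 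The imaginary part then equals, up to these bounded denominators, $\left(1-\left|\frac{\xi-\mathrm{i}}{\bar\xi-\mathrm{i}}\right|^2\right)\cdot\frac{2\operatorname{Im}(\xi)}{|\xi-1||\xi+1|}$, which is uniformly positive on $\mathcal{K}$. Your proposal never identifies this cancellation, so the ``purely algebraic check'' you postpone is not a routine verification but the heart of the lemma.

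Two smaller points. First, your plan to bound \emph{every} face-angle pair at a vertex directly is more than the paper does and more than you need: for the degree-4 case the paper only treats $j+k+n$ odd and, for the degree-6 (hexagonal) case, only pairs of edges bounding faces of one color, recovering the remaining bounds from the perfect t-embedding angle condition (the angles of one color at a vertex sum to $\pi$); if you insist on checking all pairs directly you incur extra mixed cases for which the dichotomy above must be re-verified. Second, your modulus claims ($|A_{\iota,\iota'}|=|B_{\iota,\iota'}|=|\xi-\mathrm{i}|/|\bar\xi-\mathrm{i}|$, the $\epsilon$-factor being unimodular, and the cancellation of the universal prefactor in the ratio) are correct and consistent with \eqref{eq: factoring_4_terms}, so once the conjugate-pair cancellation is inserted your argument closes along the same lines as the paper.
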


\begin{proof}
In the case $|\io|+|\io'|=1$, it is enough to prove the statement for $j+k+n$ odd since a face of $\widetilde{\mathcal{T}}_n\left(\left(G'^f_{n+1}\right)^* \cap \mathcal{K}\right)$ is a convex quadrilateral with edge lengths bounded from above and below, by Lem. \ref{lemma: rigid_edge_bounds}.  If we know the statement is true for $j+k+n$ odd, then it means that two opposite angles in the quadrilateral are bounded from $\pi$ and 0. It follows that the other two angles in the quadrilateral are bounded away from 0. Recall the angle condition in the definition of perfect t-embeddings, Def. \ref{def:t_embedding}, which states that the sum of all angles around a vertex corresponding to the corner of a black (white) face is equal to $\pi$. In the case where the valency of a vertex is four, if all angles is bounded away from $0$, we conclude from the angle condition that the all angles is bounded away from $\pi$.

By the above discussion, we assume that $j+k+n$ is odd. We also assume that $(\iota_1, \iota_1^{\prime})$ follows by ($\iota_2, \iota_2^{\prime}$) when viewed as point on the unit circle oriented in positive direction. As we are first considering the case when $|\io|+|\io'|=1$ the vertex is adjacent to only 4 other edges.

    We fixed some new notation for the leading-order term $e^{\ci \theta}$ as $n \to \infty$, where $\theta$ is the angle between two edges $d \widetilde{\mathcal{T}}_n\left(e_{\iota_1, \iota_1^{\prime}}(j, k)\right)$ and $d \widetilde{\mathcal{T}}_n\left(e_{\iota_2, \iota_2^{\prime}}(j, k)\right)$. Set
    \begin{equation}
        \begin{aligned}
             A_i&:=G_{\io_i, \io'_i}(\xi)H_{\io_i, \io'_i}(\xi)\frac{(\xi -\ci)^2}{(\xi -1)(\xi +1)\xi}\\
            B_i&:=G_{\io_i, \io'_i}(\bar \xi)H_{\io_i, \io'_i}(\xi)\frac{(\xi -\ci)(\bar \xi -\ci)}{(\xi -1)(\xi +1)\bar \xi}\left(\frac{\bar \xi (\xi^2-1)}{\xi(\bar \xi^2-1)}\right)^{\frac{\epsilon(j,k) +1}{2}}\\
            \widetilde A_i&:=G_{\io_i, \io'_i}(\bar \xi)H_{\io_i, \io'_i}(\bar \xi)\frac{(\bar \xi -\ci)^2}{(\bar \xi -1)(\bar \xi +1)\bar \xi}\\
            \widetilde B_i&:=G_{\io_i, \io'_i}(\xi)H_{\io_i, \io'_i}(\bar \xi)\frac{(\bar \xi -\ci)(\xi -\ci)}{(\bar \xi -1)(\bar \xi +1) \xi}\left(\frac{ \xi (\bar\xi^2-1)}{\bar\xi(\xi^2-1)}\right)^{\frac{\epsilon(j,k) +1}{2}}
        \end{aligned}
    \end{equation}
    Then, similar to the proof of Lemma 6.6 in \cite{tower_AD_perfect}, we want to consider the imaginary part of the quantity 
    \begin{equation}\label{eq: angle_leading_term}
        \begin{aligned}
            \frac{\overline{\left(A_1 \alpha+B_1 \beta-\widetilde{A}_1 \bar{\alpha}-\widetilde{B}_1 \bar{\beta}\right)}\left(A_2 \alpha+B_2 \beta-\widetilde{A}_2 \bar{\alpha}-\widetilde{B}_2 \bar{\beta}\right)}{\left|A_1 \alpha+B_1 \beta-\widetilde{A}_1 \bar{\alpha}-\widetilde{B}_1 \bar{\beta}\right|\left|A_2 \alpha+B_2 \beta-\widetilde{A}_2 \bar{\alpha}-\widetilde{B}_2 \bar{\beta}\right|}
        \end{aligned}
    \end{equation}
   since this is the leading term of $e^{\ci\theta}$. For consecutive points, we see from the Cor. \ref{eq:dT equation factor} that either,
$$
\frac{H_{\iota_1, \iota_1^{\prime}}(\xi)}{H_{\iota_1, \iota_1^{\prime}}(\bar{\xi})}=\frac{H_{\iota_2, \iota_2^{\prime}}(\xi)}{H_{\iota_2, \iota_2^{\prime}}(\bar{\xi})} \quad \text { or } \quad \frac{G_{\iota_1, \iota_1^{\prime}}(\xi)}{G_{\iota_1, \iota_1^{\prime}}(\bar{\xi})}=\frac{G_{\iota_2, \iota_2^{\prime}}(\xi)}{G_{\iota_2, \iota_2^{\prime}}(\bar{\xi})}.
$$
In the case $\displaystyle \frac{H_{\iota_1, \iota_1^{\prime}}(\xi)}{H_{\iota_1, \iota_1^{\prime}}(\bar{\xi})}=\frac{H_{\iota_2, \iota_2^{\prime}}(\xi)}{H_{\iota_2, \iota_2^{\prime}}(\bar{\xi})}$, we have that 
    $$\begin{aligned}
&H_{\iota_1, \iota_1^{\prime}}(z)=H_{\iota_2, \iota_2^{\prime}}(z)=\frac{1}{z}, \quad G_{\iota_1, \iota_1^{\prime}}(w)=\frac{1}{w-1} \quad \text { and } \quad G_{\iota_2, \iota_2^{\prime}}(w)=-\frac{1}{w+1}
\end{aligned}$$
or
$$\begin{aligned}
&H_{\iota_1, \iota_1^{\prime}}(z)=H_{\iota_2, \iota_2^{\prime}}(z)=1, \quad G_{\iota_1, \iota_1^{\prime}}(w)=\frac{1}{w+1} \quad \text { and } \quad G_{\iota_2, \iota_2^{\prime}}(w)=\frac{1}{w-1}.
\end{aligned}$$
From (\ref{eq: factoring_4_terms}), let us compute the product in the numerator of (\ref{eq: angle_leading_term}) given by
\begin{equation}
\label{eq: angle_leading_term_explicit}
\resizebox{0.9\textwidth}{!}{$
    \begin{aligned}
        \overline{\left(\al\beta\left(\frac{\bar \xi (\xi^2-1)}{\xi(\bar \xi^2-1)}\right)^{\frac{\epsilon(j,k) +1}{2}}\frac{(\xi-\ci)(\bar \xi^2 -1)H_{\io_1,\io'_1}(\xi)}{(\bar\xi-\ci)(\xi^2 -1)H_{\io_1,\io_1'}(\bar \xi)}-1\right)\left(\al\bar\beta\frac{\xi-\ci}{\bar \xi-\ci}\frac{G_{\io_1,\io'_1}(\xi)}{G_{\io_1,\io'_1}(\bar \xi)}\frac{\bar \xi}{\xi}\left(\frac{ \xi (\bar\xi^2-1)}{\bar\xi(\xi^2-1)}\right)^{\frac{\epsilon(j,k) +1}{2}}+1\right)
  \frac{(\bar \xi-\ci)^2H_{\io_1,\io'_1}(\bar \xi)G_{\io_1,\io'_1}(\bar \xi)\bar \al}{(\bar \xi^2-1)\bar \xi}}\\
  \times \left(\al\beta\left(\frac{\bar \xi (\xi^2-1)}{\xi(\bar \xi^2-1)}\right)^{\frac{\epsilon(j,k) +1}{2}}\frac{(\xi-\ci)(\bar \xi^2 -1)H_{\io_2,\io'_2}(\xi)}{(\bar\xi-\ci)(\xi^2 -1)H_{\io_2,\io'_2}(\bar \xi)}-1\right)\left(\al\bar\beta\frac{\xi-\ci}{\bar \xi-\ci}\frac{G_{\io_2,\io'_2}(\xi)}{G_{\io_2,\io'_2}(\bar \xi)}\frac{\bar \xi}{\xi}\left(\frac{ \xi (\bar\xi^2-1)}{\bar\xi(\xi^2-1)}\right)^{\frac{\epsilon(j,k) +1}{2}}+1\right)
  \frac{(\bar \xi-\ci)^2H_{\io_2,\io'_2}(\bar \xi)G_{\io_2,\io'_2}(\bar \xi)\bar \al}{(\bar \xi^2-1)\bar \xi}.
\end{aligned}
$}
\end{equation}
We have, since $|\al|=1=|\beta|$, $H_{\io_1,\io'_1}=H_{\io_2,\io'_2}$, 
\begin{equation}
\label{eq: last_factor}
\resizebox{0.9\textwidth}{!}{$
\begin{aligned}
    &\overline{\frac{(\bar \xi-\ci)^2H_{\io_1,\io'_1}(\bar \xi)G_{\io_1,\io'_1}(\bar \xi)\bar \al}{(\bar \xi^2-1)\bar \xi}}\times\frac{(\bar \xi-\ci)^2H_{\io_2,\io'_2}(\bar \xi)G_{\io_2,\io'_2}(\bar \xi)\bar \al}{(\bar \xi^2-1)\bar \xi}=\frac{|\bar\xi-\ci|^4|H_{\io_1,\io'_1}(\xi)|^2}{|\bar \xi -1|^2|\bar \xi +1|^2|\xi|^2}G_{\io_1,\io'_1}(\xi)G_{\io_2,\io'_2}(\bar \xi).
\end{aligned}
$}
\end{equation}
On the other hand, the product of the two first terms reads
\begin{equation}
\label{eq: first_factor}
\resizebox{0.9\textwidth}{!}{$
    \begin{aligned}
         &\overline{\left(\al\beta\left(\frac{\bar \xi (\xi^2-1)}{\xi(\bar \xi^2-1)}\right)^{\frac{\epsilon(j,k) +1}{2}}\frac{(\xi-\ci)(\bar \xi^2 -1)H_{\io_1,\io'_1}(\xi)}{(\bar\xi-\ci)(\xi^2 -1)H_{\io_1,\io_1'}(\bar \xi)}-1\right)}\times\left(\al\beta\left(\frac{\bar \xi (\xi^2-1)}{\xi(\bar \xi^2-1)}\right)^{\frac{\epsilon(j,k) +1}{2}}\frac{(\xi-\ci)(\bar \xi^2 -1)H_{\io_2,\io'_2}(\xi)}{(\bar\xi-\ci)(\xi^2 -1)H_{\io_2,\io'_2}(\bar \xi)}-1\right)\\
         &=\left|\left(\frac{\bar \xi (\xi^2-1)}{\xi(\bar \xi^2-1)}\right)^{\frac{\epsilon(j,k) +1}{2}}\frac{\xi-\ci}{\bar\xi-\ci}\right|^2-\bar\al\bar\beta\frac{(\xi-1)(\xi+1)(\bar\xi+i)}{(\bar\xi-1)(\bar\xi+1)(\xi+\ci)}\left(\frac{\xi (\bar\xi-1)(\bar\xi+1)}{ \bar\xi (\xi-1)(\xi+1)}\right)^{\frac{\epsilon(j,k) +1}{2}}\frac{H_{\io_1,\io'_1}(\bar\xi)}{H_{\io_1,\io'_1}(\xi)}\\
         &-\alpha\beta\frac{(\bar\xi-1)(\bar\xi+1)(\xi+i)}{(\xi-1)(\xi+1)(\bar\xi+\ci)}\left(\frac{\bar \xi (\xi-1)(\xi+1)}{ \xi (\bar\xi-1)(\bar \xi+1)}\right)^{\frac{\epsilon(j,k) +1}{2}}\frac{H_{\io_1,\io'_1}(\xi)}{H_{\io_1,\io'_1}(\bar \xi)}+1.
    \end{aligned}
$}
\end{equation}

Note that (\ref{eq: first_factor}) is a real number since the two middle terms are conjugate pairs. Combining \eqref{eq: last_factor} and \eqref{eq: first_factor}, we have that the imaginary part of \eqref{eq: angle_leading_term_explicit} is determined by

\begin{equation}\label{eq: angle_leading_term_simplify}
\resizebox{0.9\textwidth}{!}{$
        \begin{aligned}
           & \operatorname{Im}\left(\frac{\overline{\left(A_1 \alpha+B_1 \beta-\widetilde{A}_1 \bar{\alpha}-\widetilde{B}_1 \bar{\beta}\right)}\left(A_2 \alpha+B_2 \beta-\widetilde{A}_2 \bar{\alpha}-\widetilde{B}_2 \bar{\beta}\right)}{\left|A_1 \alpha+B_1 \beta-\widetilde{A}_1 \bar{\alpha}-\widetilde{B}_1 \bar{\beta}\right|\left|A_2 \alpha+B_2 \beta-\widetilde{A}_2 \bar{\alpha}-\widetilde{B}_2 \bar{\beta}\right|}\right)\\
           &=\frac{\operatorname{Im}\left\{G_{\io_1,\io'_1}(\xi)G_{\io_2,\io'_2}(\bar \xi)\overline{\left(\al\bar\beta\frac{\xi-\ci}{\bar \xi-\ci}\frac{G_{\io_1,\io'_1}(\xi)}{G_{\io_1,\io'_1}(\bar \xi)}\frac{\bar \xi}{\xi}\left(\frac{ \xi (\bar\xi^2-1)}{\bar\xi(\xi^2-1)}\right)^{\frac{\epsilon(j,k) +1}{2}}+1\right)}\left(\al\bar\beta\frac{\xi-\ci}{\bar \xi-\ci}\frac{G_{\io_2,\io'_2}(\xi)}{G_{\io_2,\io'_2}(\bar \xi)}\frac{\bar \xi}{\xi}\left(\frac{ \xi (\bar\xi^2-1)}{\bar\xi(\xi^2-1)}\right)^{\frac{\epsilon(j,k) +1}{2}}+1\right)\right\}}{\overline{\left|\left(\al\bar\beta\frac{\xi-\ci}{\bar \xi-\ci}\frac{G_{\io_1,\io'_1}(\xi)}{G_{\io_1,\io'_1}(\bar \xi)}\frac{\bar \xi}{\xi}\left(\frac{ \xi (\bar\xi^2-1)}{\bar\xi(\xi^2-1)}\right)^{\frac{\epsilon(j,k) +1}{2}}+1\right)\right |}\left|\left(\al\bar\beta\frac{\xi-\ci}{\bar \xi-\ci}\frac{G_{\io_2,\io'_2}(\xi)}{G_{\io_2,\io'_2}(\bar \xi)}\frac{\bar \xi}{\xi}\left(\frac{ \xi (\bar\xi^2-1)}{\bar\xi(\xi^2-1)}\right)^{\frac{\epsilon(j,k) +1}{2}}+1\right)\right |}\\
           &=\operatorname{Im}\left\{\frac{\left(\bar\al\beta\frac{\bar\xi+\ci}{ \xi+\ci}G_{\io_1,\io'_1}(\bar \xi)\frac{\xi}{\bar \xi}\left(\frac{\bar \xi (\xi^2-1)}{\xi(\bar \xi^2-1)}\right)^{\frac{\epsilon(j,k) +1}{2}}+G_{\io_1,\io'_1}(\xi)\right)}{\left|\left(\bar\al\beta\frac{\bar\xi+\ci}{ \xi+\ci}G_{\io_1,\io'_1}(\bar \xi)\frac{\xi}{\bar \xi}\left(\frac{\bar \xi (\xi^2-1)}{\xi(\bar \xi^2-1)}\right)^{\frac{\epsilon(j,k) +1}{2}}+G_{\io_1,\io'_1}(\xi)\right)\right|}\times \frac{\left(\al\bar\beta\frac{\xi-\ci}{\bar \xi-\ci}G_{\io_2,\io'_2}(\xi)\frac{\bar \xi}{\xi}\left(\frac{ \xi (\bar\xi^2-1)}{\bar\xi(\xi^2-1)}\right)^{\frac{\epsilon(j,k) +1}{2}}+G_{\io_2,\io'_2}(\bar \xi)\right)}{\left|\left(\al\bar\beta\frac{\xi-\ci}{\bar \xi-\ci}G_{\io_2,\io'_2}(\xi)\frac{\bar \xi}{\xi}\left(\frac{ \xi (\bar\xi^2-1)}{\bar\xi(\xi^2-1)}\right)^{\frac{\epsilon(j,k) +1}{2}}+G_{\io_2,\io'_2}(\bar \xi)\right)\right |}\right\}.
        \end{aligned}
    $}
    \end{equation}
    
We again consider the product in the numerator of the last equality and notice that expanding this product gives two crossing terms that are each other conjugates. Then, the imaginary part of \eqref{eq: angle_leading_term_simplify} is 
\begin{equation}
    \label{eq:Imaginary_part}
    \resizebox{0.9\textwidth}{!}{$
    \begin{aligned}   &\frac{\operatorname{Im}\left(|\frac{\xi-\ci}{\bar \xi-\ci}|^2G_{\io_1,\io'_1}(\bar \xi) G_{\io_2, \io'_2}(\xi)+G_{\io_1,\io'_1}(\xi)G_{\io_2,\io'_2}(\bar \xi)\right)}{\left|\left(\bar\al\beta\frac{\bar\xi+\ci}{ \xi+\ci}G_{\io_1,\io'_1}(\bar \xi)\frac{\xi}{\bar \xi}\left(\frac{\bar \xi (\xi^2-1)}{\xi(\bar \xi^2-1)}\right)^{\frac{\epsilon(j,k) +1}{2}}+G_{\io_1,\io'_1}(\xi)\right)\right|\times \left|\left(\al\bar\beta\frac{\xi-\ci}{\bar \xi-\ci}G_{\io_2,\io'_2}(\xi)\frac{\bar \xi}{\xi}\left(\frac{ \xi (\bar\xi^2-1)}{\bar\xi(\xi^2-1)}\right)^{\frac{\epsilon(j,k) +1}{2}}+G_{\io_2,\io'_2}(\bar \xi)\right)\right |}\\    
    &=\frac{\left(1-|\frac{\xi-\ci}{\bar\xi-\ci}|^2\right)}{\left|1+\bar \al\beta\frac{\bar\xi+\ci}{ \xi+\ci}\frac{G_{\io_1,\io'_1}(\bar \xi)}{G_{\io_1,\io'_1}( \xi)}\frac{\xi}{\bar \xi}\left(\frac{\bar \xi (\xi^2-1)}{\xi(\bar \xi^2-1)}\right)^{\frac{\epsilon(j,k) +1}{2}}\right|\left |1+\al\bar\beta\frac{\xi-\ci}{\bar \xi-\ci}\frac{G_{\io_2,\io'_2}(\xi)}{G_{\io_2,\io'_2}(\bar \xi)}\frac{\bar \xi}{\xi}\left(\frac{ \xi (\bar\xi^2-1)}{\bar\xi(\xi^2-1)}\right)^{\frac{\epsilon(j,k) +1}{2}}\right |}\times \frac{\operatorname{Im}(G_{\io_1,\io'_1}(\xi)G_{\io_2, \io'_2}(\bar \xi))}{|G_{\io_1,\io'_1}(\xi)||G_{\io_2,\io'_2}(\bar \xi)|}.
    \end{aligned}
    $}
\end{equation}
By the same argument as in the proof of Lem. \ref{lemma: rigid_edge_bounds}, since 
$$
\left|\frac{\xi-\mathrm{i}}{\bar{\xi}-\mathrm{i}}\right|<c<1
$$
the first factor in \eqref{eq:Imaginary_part} is bounded away from zero and infinity. On the other hand, 
$$\begin{aligned}
    \frac{\operatorname{Im}(G_{\io_1,\io'_1}(\xi)G_{\io_2, \io'_2}(\bar \xi))}{|G_{\io_1,\io'_1}(\xi)||G_{\io_2,\io'_2}(\bar \xi)|}=\frac{2\operatorname{Im}(\xi)}{|\xi-1||\xi+1|}>C>0
\end{aligned}$$
for some constant $C$ which finishes the proof for the case $\displaystyle\frac{H_{\iota_1, \iota_1^{\prime}}(\xi)}{H_{\iota_1, \iota_1^{\prime}}(\bar{\xi})}=\frac{H_{\iota_2, \iota_2^{\prime}}(\xi)}{H_{\iota_2, \iota_2^{\prime}}(\bar{\xi})}$. The case $\displaystyle\frac{G_{\iota_1, \iota_1^{\prime}}(\xi)}{G_{\iota_1, \iota_1^{\prime}}(\bar{\xi})}=\frac{G_{\iota_2, \iota_2^{\prime}}(\xi)}{G_{\iota_2, \iota_2^{\prime}}(\bar{\xi})}$ proceeds similarly.

Now we consider the case when $\io+\io'=0$. This corresponds to a vertex of degree six. 
In this case, we will show that for any vertex, any pair of angles belongs to two faces of the same color is bounded away from zero. Then, by the same reason as in the case $|\io|+|\io'|=1$, the condition that the angles corresponding to one color of face around the vertex sum up to $\pi$ implies that the last angles are bounded away from $\pi$. 

\[
H_{0,1}(z)=\frac{1}{z} \quad H_{-1,1}(z)=\frac{1}{z-1} \quad G_{0,1}(w)=\frac{-1}{w+1} \quad G_{-1,1}=\frac{-2}{w+1}.
\]
We want to analyze the imaginary part of (\ref{eq: angle_leading_term}) which when written explicitly still takes the same form as (\ref{eq: angle_leading_term_explicit}). We see that
\begin{equation}
\label{eq: last_factor_hex_case}
\resizebox{0.9\textwidth}{!}{$
\begin{aligned}
    &\overline{\frac{(\bar \xi-\ci)^2H_{\io_1,\io'_1}(\bar \xi)G_{\io_1,\io'_1}(\bar \xi)\bar \al}{(\bar \xi^2-1)\bar \xi}}\times\frac{(\bar \xi-\ci)^2H_{\io_2,\io'_2}(\bar \xi)G_{\io_2,\io'_2}(\bar \xi)\bar \al}{(\bar \xi^2-1)\bar \xi}=\frac{|\bar\xi-\ci|^4}{|\bar \xi -1|^2|\bar \xi +1|^2|\xi|^2}{H_{\io_1,\io'_1}( \xi)}H_{\io_2,\io'_2}(\bar \xi){G_{\io_1,\io'_1}}(\xi)G_{\io_2,\io'_2}(\bar \xi).
\end{aligned}
$}
\end{equation}
The product of two first terms in this case is again a real number. And so we have the imaginary part of \eqref{eq: angle_leading_term_explicit} is determined by
{\footnotesize
\begin{equation}
    \label{eq: angle_leading_term_simplify_hex_case}
    \begin{aligned}
       \frac{\operatorname{Im}\left\{{H_{\io_1,\io'_1}( \xi)}H_{\io_2,\io'_2}(\bar \xi)\left(|\frac{\xi-\ci}{\bar \xi-\ci}|^2G_{\io_1,\io'_1}(\bar \xi) G_{\io_2, \io'_2}(\xi)+G_{\io_1,\io'_1}(\xi)G_{\io_2,\io'_2}(\bar \xi)\right)\right\}}{\left|\left(\bar\al\beta\frac{\bar\xi+\ci}{ \xi+\ci}G_{\io_1,\io'_1}(\bar \xi)\frac{\xi}{\bar \xi}\left(\frac{\bar \xi (\xi^2-1)}{\xi(\bar \xi^2-1)}\right)^{\frac{\epsilon(j,k) +1}{2}}+G_{\io_1,\io'_1}(\xi)\right)\right|\times \left|\left(\al\bar\beta\frac{\xi-\ci}{\bar \xi-\ci}G_{\io_2,\io'_2}(\xi)\frac{\bar \xi}{\xi}\left(\frac{ \xi (\bar\xi^2-1)}{\bar\xi(\xi^2-1)}\right)^{\frac{\epsilon(j,k) +1}{2}}+G_{\io_2,\io'_2}(\bar \xi)\right)\right |}
    \end{aligned}
\end{equation}
}
Note that for \eqref{eq: angle_leading_term_simplify_hex_case}, in all cases, either one of the factors $G_{\io_1,\io'_1}(\xi)G_{\io_2,\io'_2}(\bar \xi)$ or ${H_{\io_1,\io'_1}(\xi)}H_{\io_2,\io'_2}(\bar \xi)$ will be a real number and the other factor will have non-trivial imaginary part bounded away from zero. This finished the proof for the case when $\io+\io'=0$
\end{proof}

\subsection{Simulations}
We end with some examples of large rank. Figure \ref{fig:largeAD} shows a large tiling of the Aztec diamond with the theoretical arctic curve overlaid, as well as the graph $(\mathcal{T}_n,\mathcal{O}_n')$ approximating the limiting maximal surface. Figures \ref{fig:largeTower}, \ref{fig:largeQuadratic}, and \ref{fig:largeSin} show the same for the tower graph, and the generalized tower graphs approximating $f(x) = \frac{1}{3}x^2 + \frac{2}{3}$ and $f(x) = \frac{1}{3\pi}\sin(2\pi x) +1$, respectively. Note that for the tiling we have rotated the image 45 degrees counter-clockwise so that the coordinate axis are aligned horizontally and vertically. The colors in the tiling are used to distinguish the different orientations of tiles. In particular, the corresponding dimer model (without rotating) for any of these graphs contains six possible types of edges:
\[
\begin{tabular}{cccccc}
     \begin{tikzpicture}[baseline = (current bounding box).center]
         \draw (0,0)--(1,1);
         \draw[fill=white] (0,0) circle (3pt);
         \draw[fill=black] (1,1) circle (3pt);
     \end{tikzpicture},
     &
     \begin{tikzpicture}[baseline = (current bounding box).center]
         \draw (0,0)--(1,1);
         \draw[fill=black] (0,0) circle (3pt);
         \draw[fill=white] (1,1) circle (3pt);
     \end{tikzpicture},
     &
     \begin{tikzpicture}[baseline = (current bounding box).center]
         \draw (0,0)--(1,-1);
         \draw[fill=white] (0,0) circle (3pt);
         \draw[fill=black] (1,-1) circle (3pt);
     \end{tikzpicture},
     & 
     \begin{tikzpicture}[baseline = (current bounding box).center]
         \draw (0,0)--(1,-1);
         \draw[fill=black] (0,0) circle (3pt);
         \draw[fill=white] (1,-1) circle (3pt);
     \end{tikzpicture},
     &
     \begin{tikzpicture}[baseline = (current bounding box).center]
         \draw (0,0)--(1,0);
         \draw[fill=white] (0,0) circle (3pt);
         \draw[fill=black] (1,0) circle (3pt);
     \end{tikzpicture},
     &
     and
     \begin{tikzpicture}[baseline = (current bounding box).center]
         \draw (0,0)--(1,0);
         \draw[fill=black] (0,0) circle (3pt);
         \draw[fill=white] (1,0) circle (3pt);
     \end{tikzpicture}
\end{tabular}.
\]
We color the corresponding tiles yellow, blue, red, green, orange, and teal, respectively. 

\begin{figure}
    \centering
    \[
    \includegraphics[width=0.3\linewidth]{AD20.pdf}
    \quad
    \includegraphics[width=0.3\linewidth]{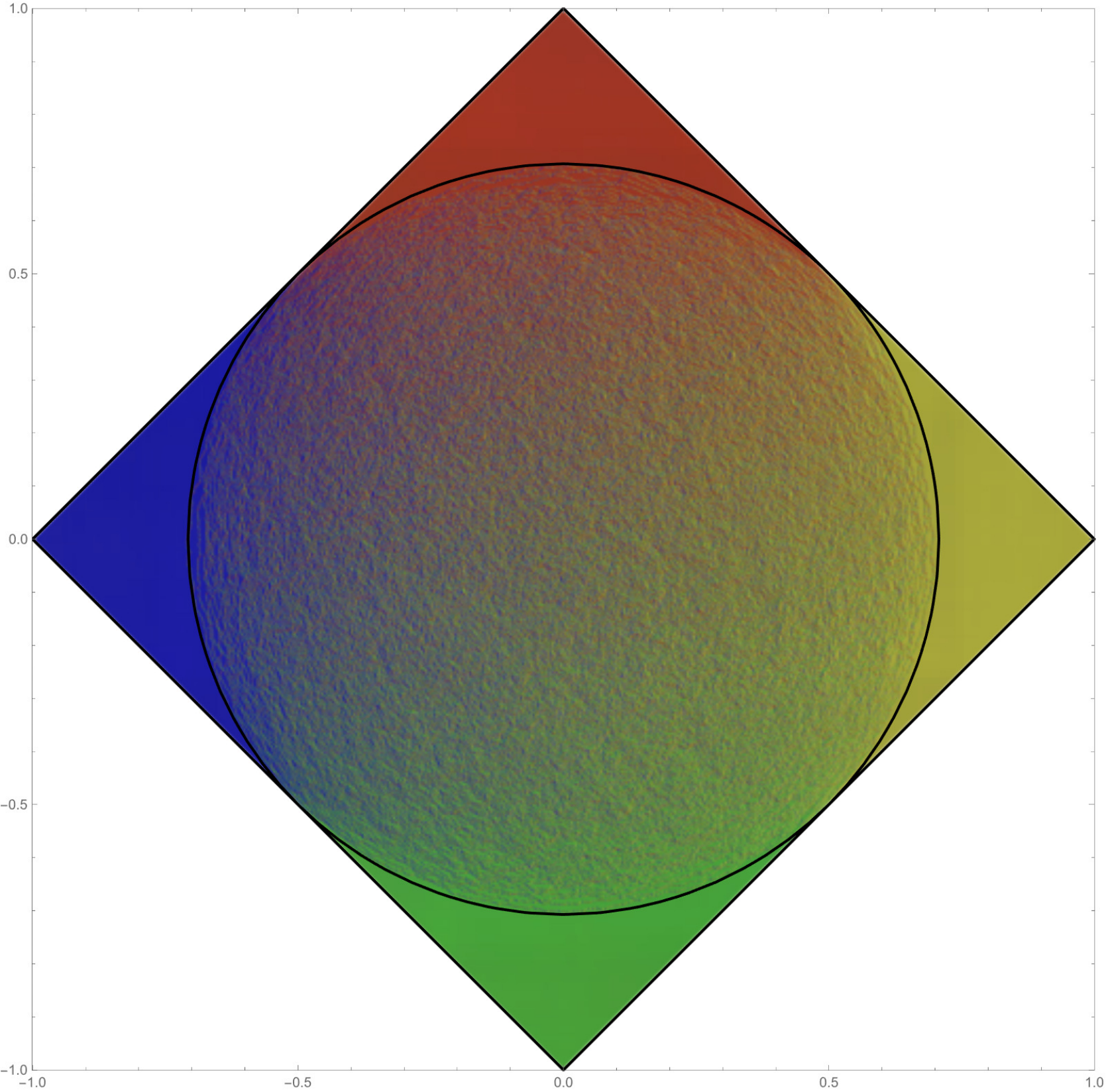}
    \quad
    \includegraphics[width=0.3\linewidth]{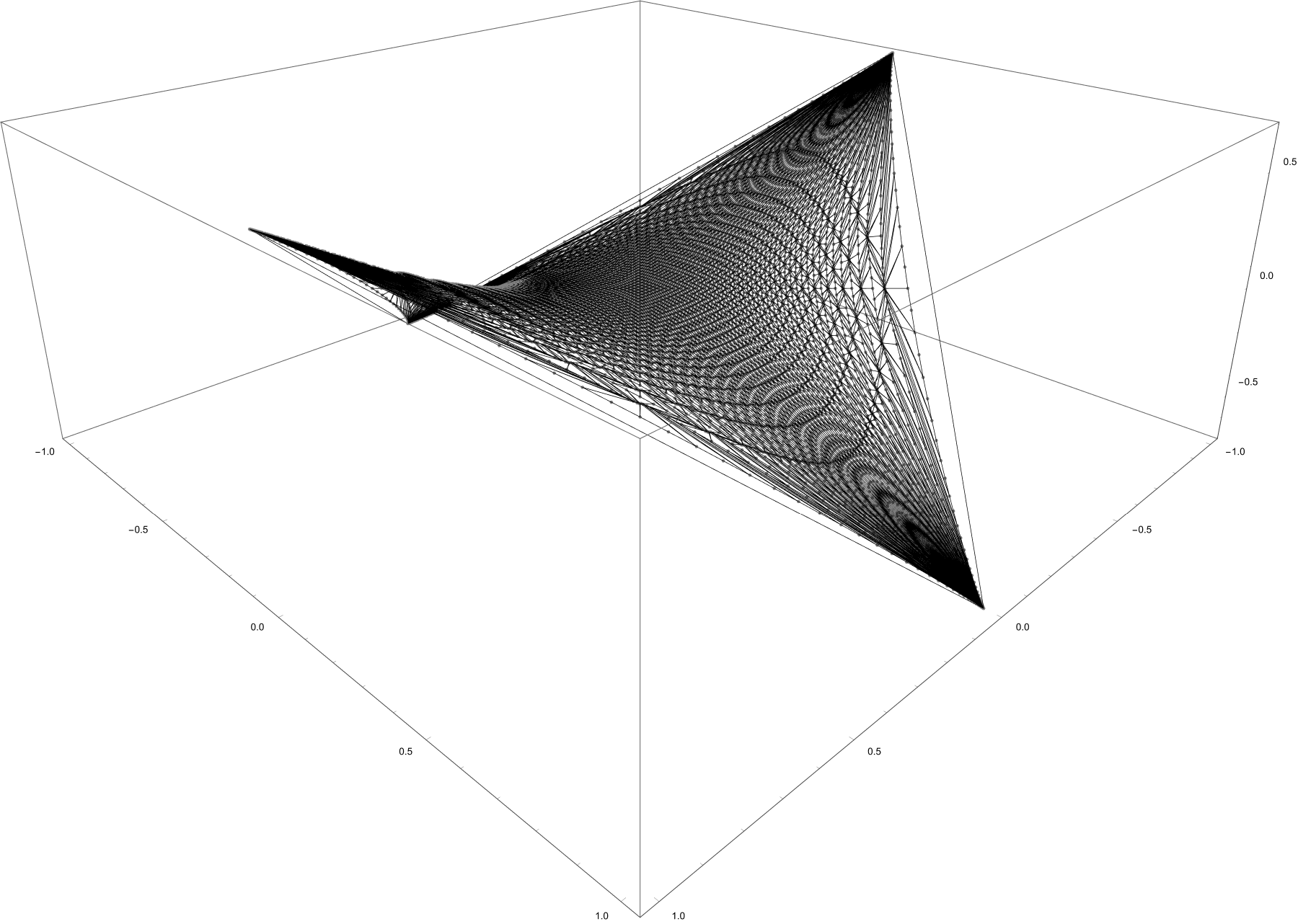}
    \]
    \caption{Left: Domino tiling of rank $n=20$. Center: Domino tiling of the Aztec diamond of rank $n=1000$. The inner black curve is the theoretical arctic curve $x^2+y^2=\frac{1}{2}$. Right: Plot of the graph $(\mathcal{T}_n,\mathcal{O}_n')$ for $n=100$.}
    \label{fig:largeAD}
\end{figure}

\begin{figure}
    \centering
    \[
    \includegraphics[width=0.3\linewidth]{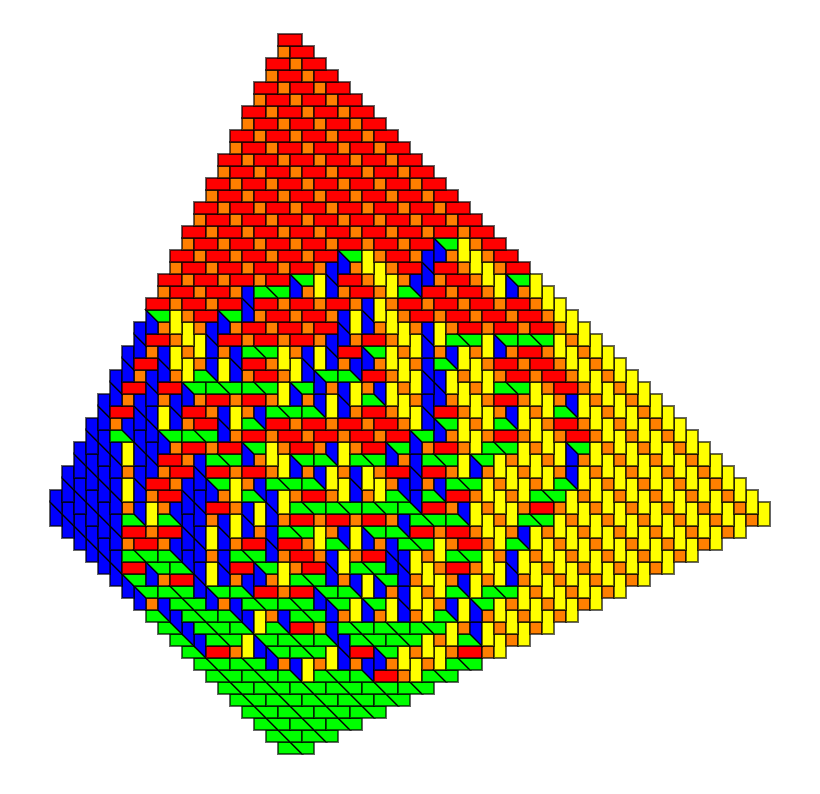}
    \quad
    \includegraphics[width=0.3\linewidth]{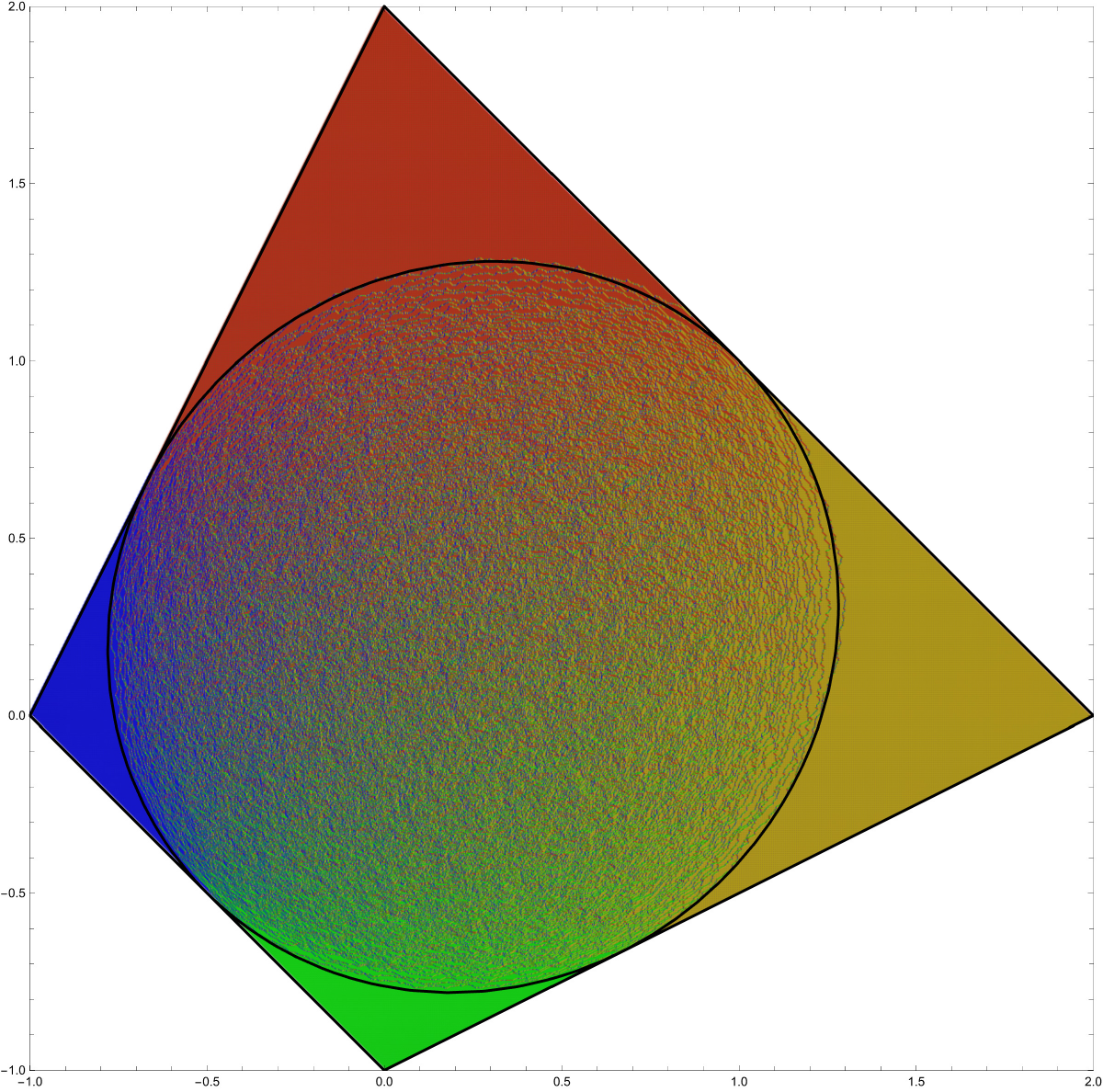}
    \quad
    \includegraphics[width=0.3\linewidth]{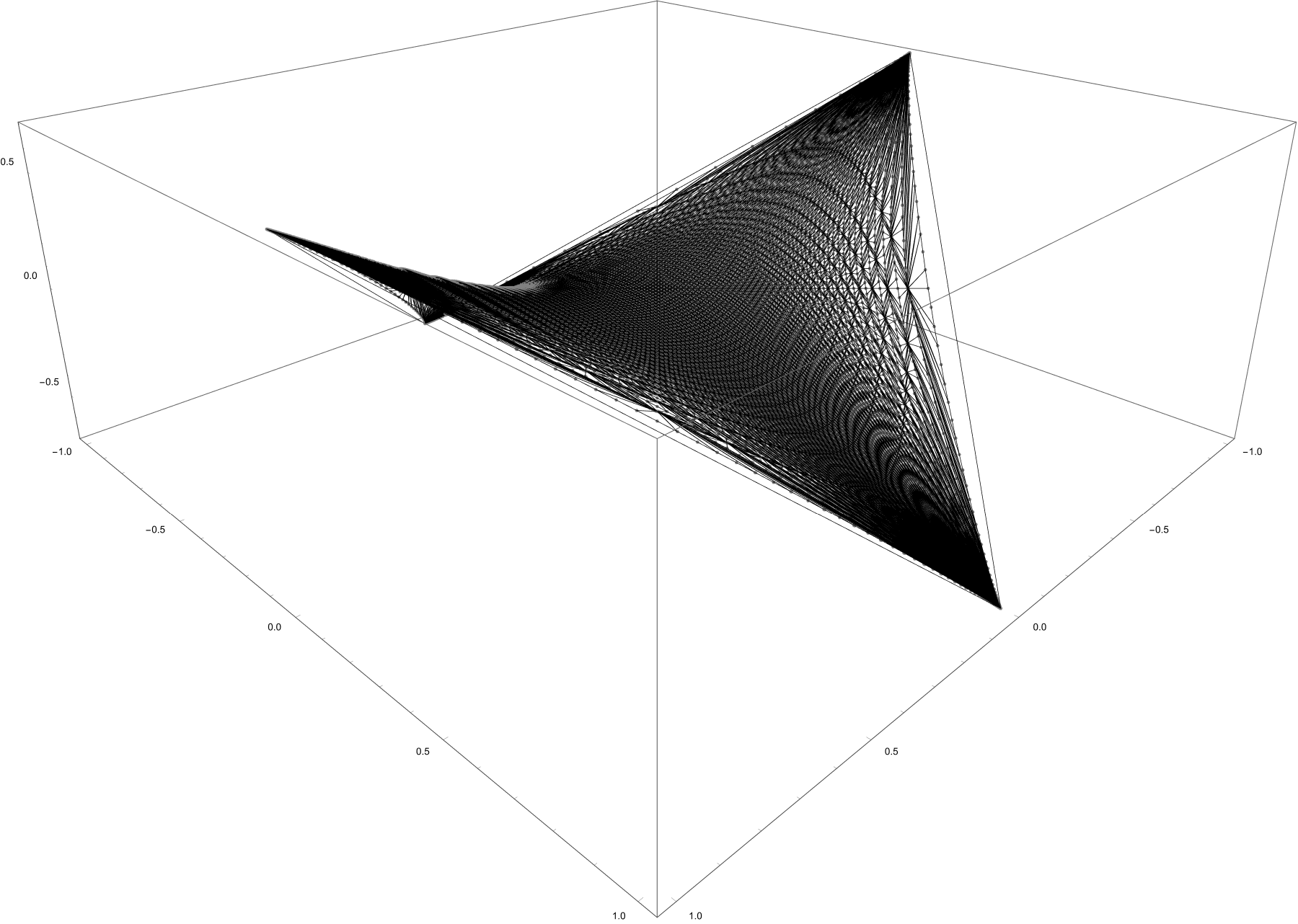}
    \]
    \caption{Left: A tiling of the tower graph of rank $n=20$.  Center: A tiling of the tower graph of rank $n=600$. The inner black curve is the theoretical arctic curve $x^2+y^2=\frac{1}{2}\left(\frac{1}{3}(x+y)+\frac{4}{3}\right)^2$. Right: Plot of the graph $(\mathcal{\widetilde T}_n,\mathcal{\widetilde O}_n')$ for $n=100$.}
    \label{fig:largeTower}
\end{figure}

\newpage

\begin{figure}
    \centering
    \[
    \includegraphics[width=0.3\linewidth]{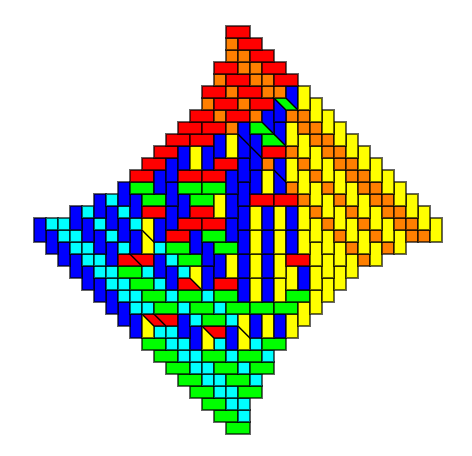}
    \quad
    \includegraphics[width=0.3\linewidth]{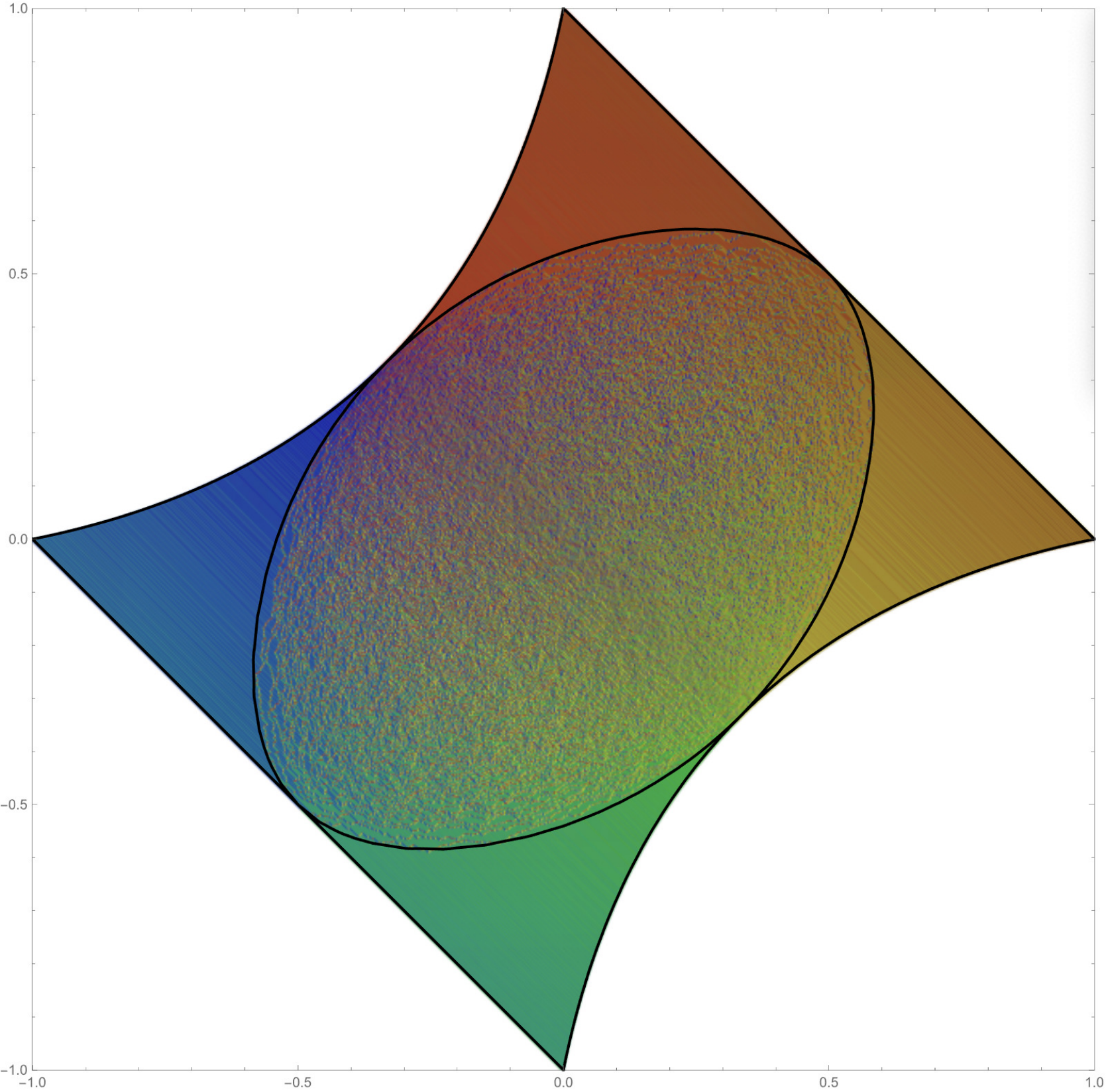}
    \quad
    \includegraphics[width=0.3\linewidth]{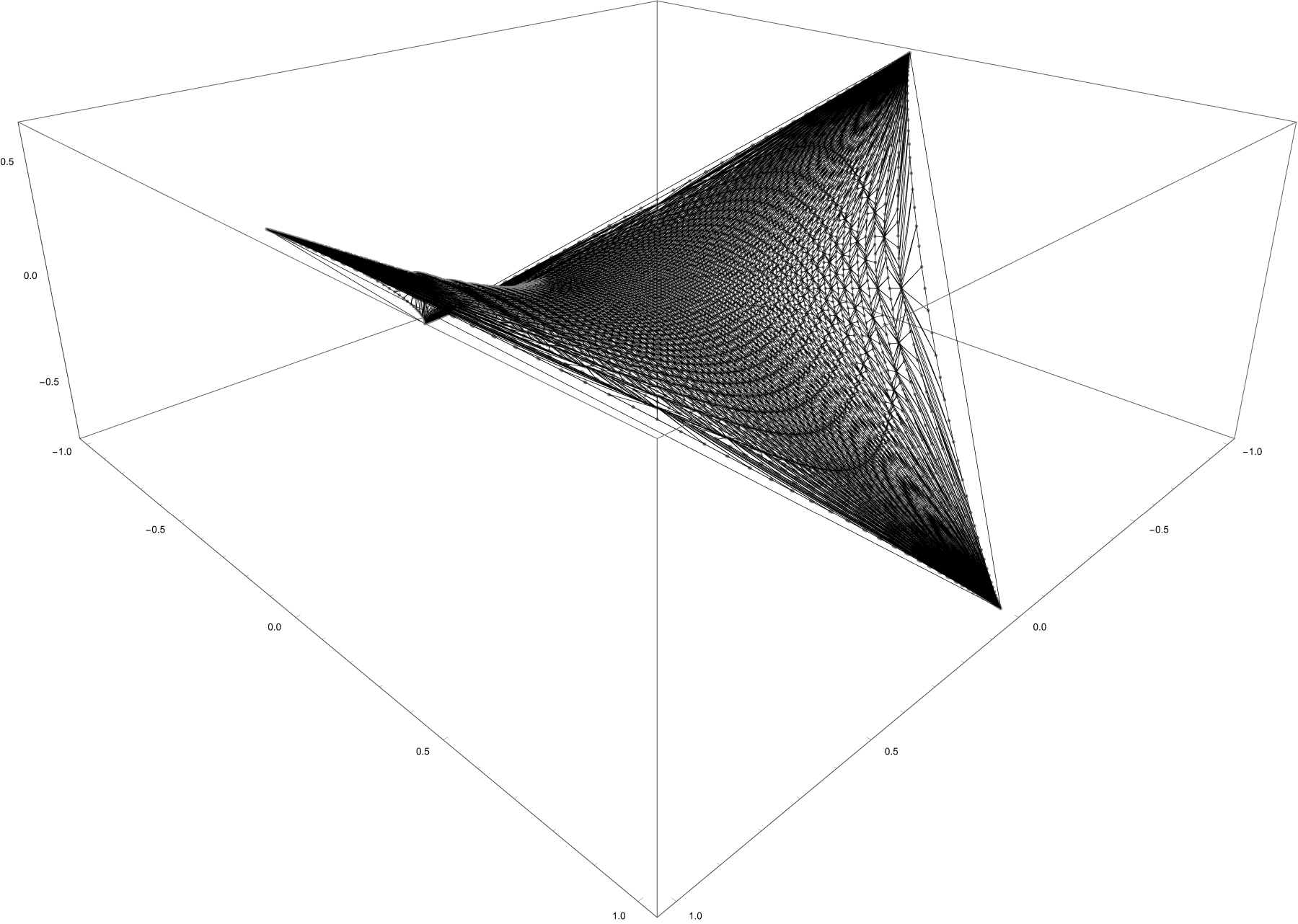}
    \]
    \caption{Left: A tiling of the generalized tower graph of rank $n=20$ approximating $f(x) = \frac{1}{3}x^2 + \frac{2}{3}$. Center: A tiling of the same generalized tower graph of rank $n=1000$. The inner black curve is the theoretical arctic curve $x^2+y^2=\frac{1}{2}f(x+y)^2$. Right:  Plot of the graph $(\mathcal{\widetilde T}_n,\mathcal{\widetilde O}_n')$ for $n=150$.}
    \label{fig:largeQuadratic}
\end{figure}

\begin{figure}
    \centering
    \[
    \includegraphics[width=0.3\linewidth]{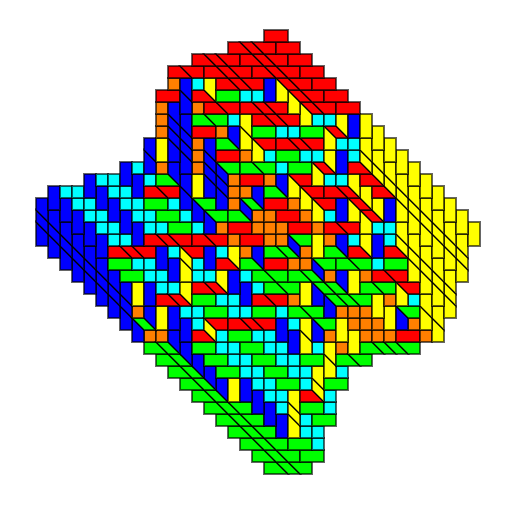}
    \quad
    \includegraphics[width=0.3\linewidth]{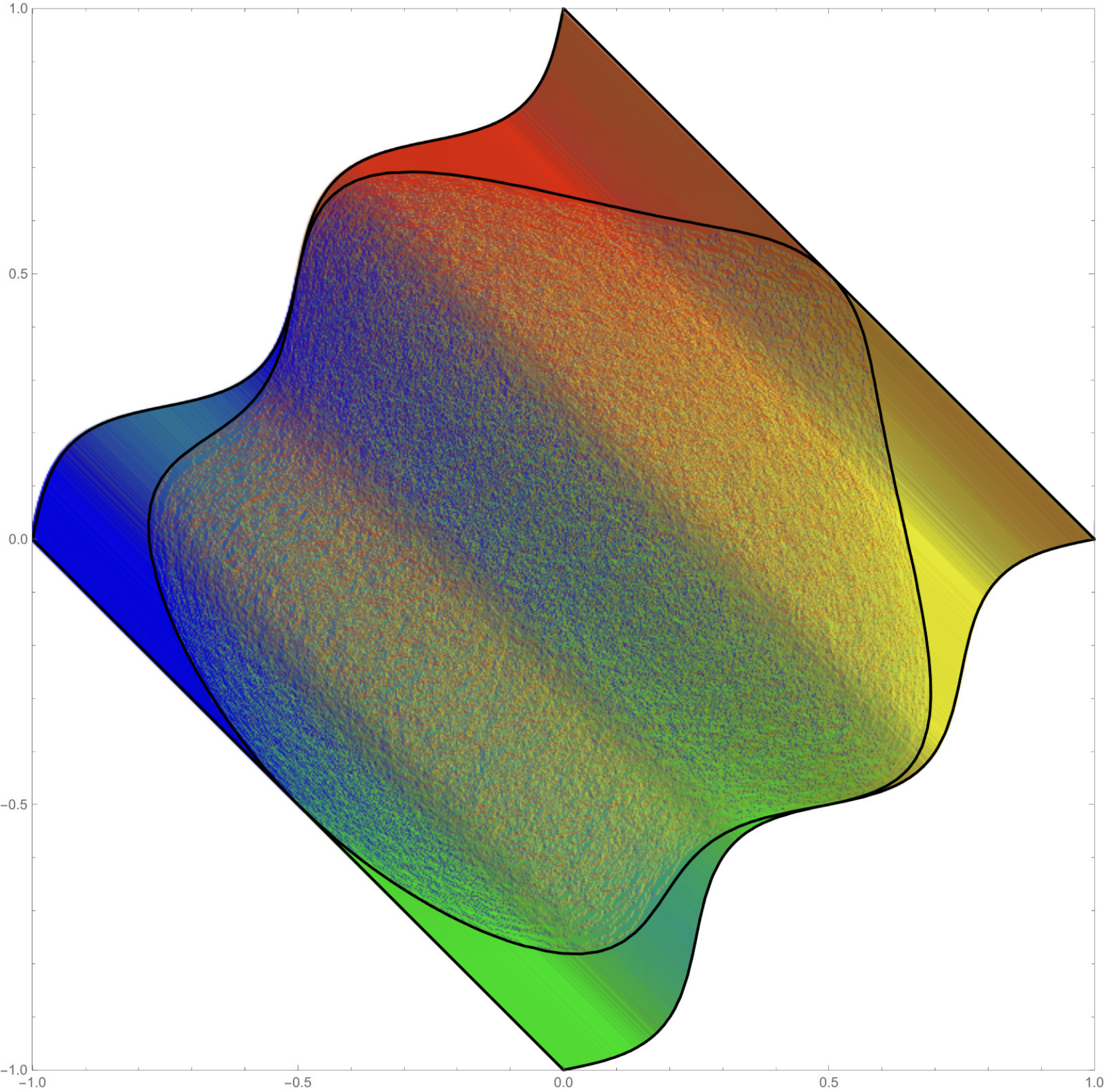}
    \quad
    \includegraphics[width=0.3\linewidth]{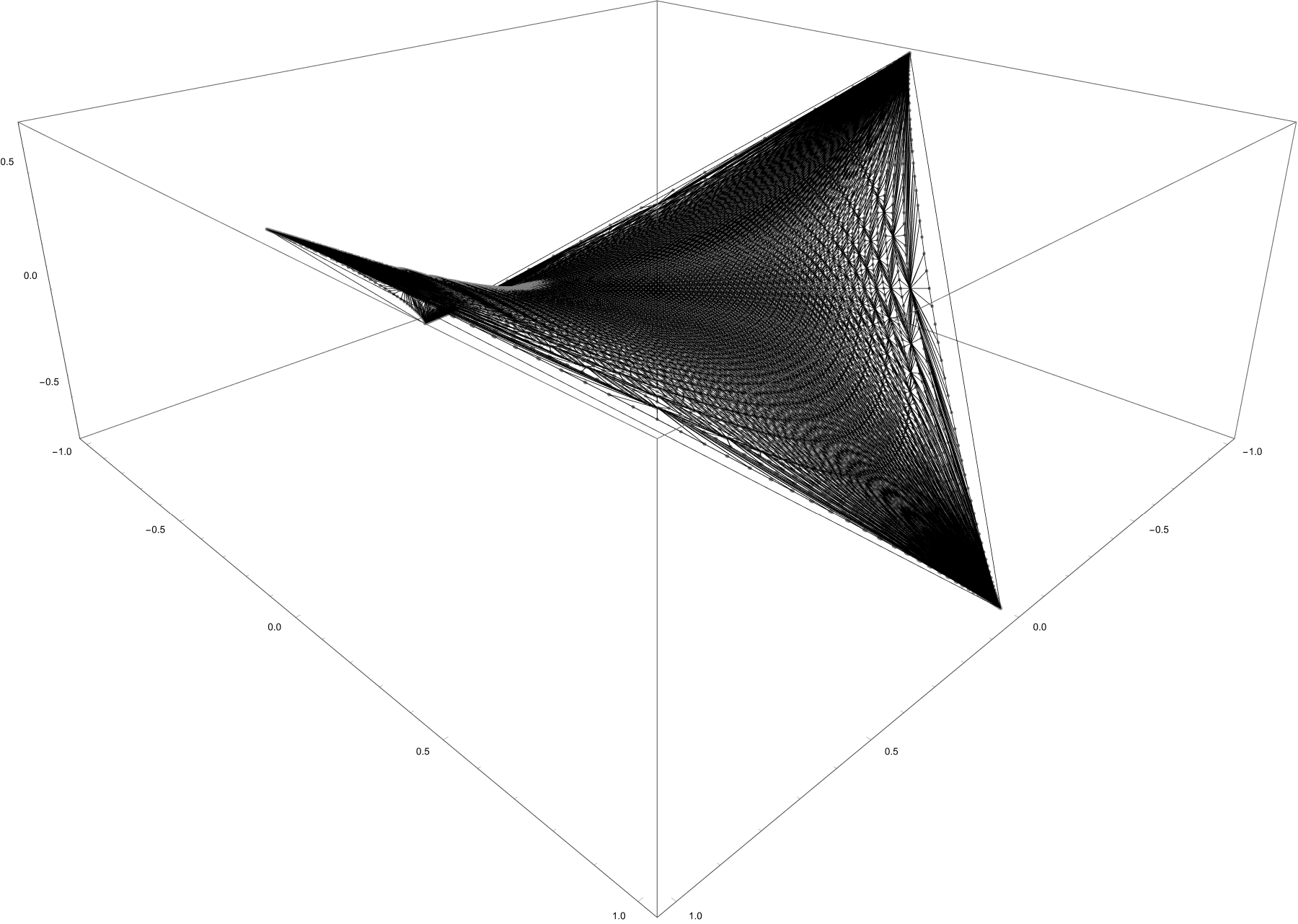}
    \]
    \caption{Left: A tiling of the generalized tower graph of rank $n=20$ approximating $f(x) = \frac{1}{3\pi}\sin(2\pi x) +1$. Center: A tiling of the same generalized tower graph of rank $n=1000$. The inner black curve is the theoretical arctic curve $x^2+y^2=\frac{1}{2}f(x+y)^2$. Right:  Plot of the graph $(\mathcal{\widetilde T}_n,\mathcal{\widetilde O}_n')$ for $n=150$.}
    \label{fig:largeSin}
\end{figure}

\newpage

\newpage
\bibliography{references.bib}
\bibliographystyle{plain}

\end{document}